\documentclass{article}

\usepackage[linesnumbered,ruled,vlined]{algorithm2e}
\RequirePackage{amssymb}

\usepackage{enumitem}
\usepackage{graphicx}
\usepackage[T1]{fontenc}
\usepackage{notation}

\usepackage{tabularx}
\usepackage{makecell}
\usepackage{longtable}

\newcommand{\cmarkcell}{\textcolor{green!60!black}{\cmark}}
\newcommand{\xmarkcell}{\textcolor{red!70!black}{\xmark}}

\usetikzlibrary{arrows.meta,positioning,calc}

\SetKwInput{KwReturn}{Return}

\usepackage{arxiv}

\title{Equilibrium with Internal Transfers}
\author{Mingyang Liu $^1$, Gabriele Farina$^1$, Asuman Ozdaglar $^1$\\
    $^1$ LIDS, EECS, Massachusetts Institute of Technology\\
    $^1$ \texttt{\{liumy19,gfarina,asuman\}@mit.edu}\\
}

\begin{document}

\maketitle

\begin{abstract}
    Nash equilibrium (NE) arises from selfish utility maximization, yet its social welfare can be arbitrarily far from optimal. Moreover, computing an NE is intractable in general. We study augmented game models in which players use budget-balanced internal transfers to improve incentives before play. We first introduce \emph{Self-Enforcing Transfer Equilibrium} (SETE), where players commit to nonnegative peer-to-peer transfers that are paid only if the recipient does not deviate from a prescribed strategy. For polymatrix games, we show that every stationary point of the social welfare function, in particular any socially optimal strategy profile, can be sustained as a SETE. This induces a Nash equilibrium in the agent normal form of the corresponding augmented game. We further propose a polynomial-time algorithm and a decentralized learning dynamic to compute such product-form equilibria. We then introduce \emph{Mediated Self-Enforcing Transfer Equilibrium} (M-SETE), where a mediator makes both the payment schedule and the prescribed strategies binding offers. This additional enforcement resolves the agent-normal-form limitation: an M-SETE is a Nash equilibrium of the augmented game itself, not merely of its agent normal form, and any socially optimal strategy profile can be supported as an M-SETE in any finite game while preserving budget balance. Thus, internal transfers improve welfare and computation while preserving independent play on the equilibrium path. When full sequential-game stability is required, binding mediation provides the corresponding implementation.\footnote{Accepted for presentation at the ACM Conference on Economics and Computation (EC 26).}
\end{abstract}

\section{Introduction}

Nash equilibrium (NE) \citep{nash1950equilibrium-nash-def} is the canonical solution concept for noncooperative games: it is fully decentralized (no mediator is required), and each player chooses an action independently.
When players cannot coordinate to correlate their behavior, NE is the natural concept.
However, NE can be highly inefficient from a societal perspective.
Even when every player optimizes their own utility, equilibrium outcomes may exhibit low social welfare (the sum of players' utilities), a phenomenon captured by the Price of Anarchy (PoA) \citep{roughgarden2002bad-POA}.

A long line of work seeks to improve welfare by enlarging the set of equilibrium outcomes.
One approach is to allow \emph{correlation} across players' actions by recommending joint actions drawn from a correlated distribution, yielding correlated equilibrium (CE) \citep{aumann1974subjectivity-CE-def} and coarse correlated equilibrium (CCE) \citep{moulin1978strategically-CCE-def}.
Implementing such outcomes typically requires a mediator to sample from the target distribution and privately recommend actions to players.
Moreover, the mediator must be trusted to follow the prescribed distribution, and correlated recommendations inherently give up the independence of players' strategies.

A second approach is to perturb players' utilities so that higher-welfare outcomes become equilibria.
For instance, \citet{sandholm2002evolutionary-pigouvian,sandholm2005negative-pigouvian,monderer2003k-implementation} study external payments such as Pigouvian taxes or subsidies, imposed by a government or mediator, to reshape incentives and steer play toward socially preferred outcomes.
Such external-payment schemes, however, rely on third-party intervention and inject transfers from outside the system, which means they are not budget-balanced.

More recently, \citet{jackson2005endogenous,kolumbus2024paying-do-better} study \emph{internal transfers}, where payments occur strictly between existing players and remain budget-balanced.
Their mechanisms allow each player to specify payment rules contingent on the realized \emph{joint} action, which can effectively redistribute utilities subject only to preserving social welfare.
This generality comes at a cost: it assumes that players know one another's utility functions in order to compute the required payments and introduces decision variables that scale exponentially with the number of players.
In contrast, the payment plan in our proposed \emph{Self-Enforcing Transfer Equilibrium (SETE)} depends only on the \emph{sender} and \emph{receiver} of each transfer, and it avoids requiring players to know others' utility functions.

The welfare losses of NE are only part of the difficulty. Even if we accept NE as the appropriate behavioral benchmark, it may be hard to realize in practice because finding an equilibrium can be computationally intractable. In particular, computing an NE is {\tt PPAD}-complete in general \citep{daskalakis2009complexity-PPAD,chen2006settling-ppad}, suggesting that even if NE is conceptually appealing, it may be impractical to compute or learn in complex environments. These two obstacles, \emph{welfare loss} and \emph{computational intractability}, motivate the search for equilibrium notions that retain NE's decentralization while enabling better outcomes and efficient computation.

In this work, we formalize four desiderata for a practical equilibrium notion:
\begin{enumerate}
    \item No Mediator: the mechanism should be implementable without a trusted third party.
    \item Independent Strategies: players need not rely on correlated recommendations.
    \item Budget-Balance: any transfers must be internal to the system.
    \item Computability: the equilibrium can be found efficiently.
\end{enumerate}
\Cref{table:comparison} summarizes how existing approaches trade off the four desiderata above. SETE is designed to satisfy all four desiderata. In \Cref{section:mediated-SETE}, we study a mediated variant that deliberately relaxes the no mediator desideratum. By making both the payment schedule and the prescribed strategies binding offers, the binding mechanism strengthening the guarantee from an NE in the agent normal form of the augmented game to an NE of the augmented game itself.

\begin{example}
\label{example:PD}
    Consider the Prisoner's Dilemma, as shown in \Cref{table:PD}. In the standard setup, Defect (D) is the dominant strategy. However, if the Row Player chooses to Defect (D), they might be willing to pay the Column Player to play Cooperate (C) rather than Defect, as illustrated in \Cref{fig:PD}.
    \begin{enumerate}
        \item Willingness to Pay: The Row Player prevents a loss of $0.8$ by ensuring the Column Player does not deviate, making them willing to pay up to this amount.
        \item Incentive to Accept: The Row Player actually only needs to pay $0.2$ to make it profitable for the Column Player to stick to Cooperate (C).
    \end{enumerate}
\end{example}

\Cref{example:PD} highlights that side payments can be individually rational even among fully selfish players: if another player's deviation would hurt me, it can be optimal to compensate them for not deviating.
This motivates a natural question:
\begin{quote}
    When would selfish players be willing to pay others?
\end{quote}
The key is that transfers serve as a commitment device rather than a charitable act.
A payer uses payments to internalize the externality of a potential deviation and to make the target behavior incentive compatible. In particular, the payer is willing to pay up to the loss they would otherwise suffer, while the receiver accepts so long as the payment covers their gain from deviating.

We propose the Self-Enforcing Transfer Equilibrium (SETE). Unlike a standard Nash equilibrium, a SETE is defined by a pair $(\pi, \bp)$, consisting of a strategy profile $\pi$ and a payment plan $\bp$. The formal definition is provided in \Cref{def:SETE-def-label}.
At a high level, SETE is governed by two principles:
\begin{enumerate}
    \item Conditional Payments: A player $i$ offers a payment $p_{i\rightarrow j}$ to player $j$ conditional on $j$ \emph{not} deviating from the assigned strategy $\pi_j$. Intuitively, $p_{i\rightarrow j}$ compensates $j$ for staying put and is tied to the externality that $j$'s deviation could impose on $i$.
    \item Incentive Compatibility: The payment $p_{i\rightarrow j}$ is bounded by player $i$'s willingness to pay---namely, the maximum loss $i$ might suffer if $j$ deviates.
    This bound is deliberately pessimistic: since $i$ may not know $j$'s utility function, $i$ cannot predict which deviation $j$ would find profitable, and thus hedges against the worst case.
\end{enumerate}

\begin{table}[t]
  \centering
  \small
  \renewcommand{\arraystretch}{1.15}
  \setlength{\tabcolsep}{6pt}

  \begin{tabularx}{\linewidth}{@{}l *{4}{>{\centering\arraybackslash}X}@{}}
    \toprule
      & Nash
      & \makecell{(Coarse)\\Correlated}
      & \makecell{External\\Payment}
      & \makecell{SETE\\(ours)} \\
    \midrule
    Can it work without a third party (mediator)?
      & \cmarkcell & \xmarkcell & \xmarkcell & \cmarkcell \\
    Does it prescribe independent behavior?
      & \cmarkcell & \xmarkcell & \cmarkcell & \cmarkcell \\
    Is it budget-balanced?
      & \cmarkcell & \cmarkcell & \xmarkcell & \cmarkcell \\
    Can it be computed efficiently?
      & \xmarkcell & \cmarkcell & \xmarkcell~$\slash$ \cmarkcell & \cmarkcell \\
    \bottomrule
  \end{tabularx}

  \caption{Comparison among Nash equilibrium (NE) \citep{nash1950equilibrium-nash-def}, correlated equilibrium (CE) \citep{aumann1974subjectivity-CE-def} and coarse correlated equilibrium (CCE) \citep{moulin1978strategically-CCE-def}, equilibrium with external payments \citep{monderer2003k-implementation,kolumbus2024paying-do-better}, and our proposed mediator-free Self-Enforcing Transfer Equilibrium (SETE). The table focuses on SETE; the mediated extension M-SETE in \Cref{section:mediated-SETE} intentionally adds binding mediation to obtain a Nash equilibrium of the augmented game. The computational complexity of external payment methods is definition-dependent, as it varies with the particular notion of equilibrium.}
  \label{table:comparison}
\end{table}

\subsection{Contributions}

\paragraph{Equilibrium Definitions} We extend the original game to a two-stage sequential game: in Stage 1, players commit to a payment plan, and in Stage 2, they commit to strategies in the original game.

We then define \emph{Self-Enforcing Transfer Equilibrium} (SETE) in \Cref{section:concept} as a pair $\rbr{\pi,\bp}$, where $\pi$ is a strategy profile and $\bp=\rbr{p_{i\to j}}_{i\neq j}$ is a payment plan. Here, $p_{i\to j}$ denotes the payment from player $i$ to player $j$, conditional on player $j$ not deviating from $\pi$. SETE is built on two requirements: (i) no player wishes to withdraw any promised payments, and (ii) no player wishes to deviate from $\pi$.

For this mechanism, we relate SETE to equilibria of the augmented game. We show that NE of the augmented sequential game can be too strong to accommodate socially optimal strategies (\Cref{lemma:sw-optimal-not-NE}). Accordingly, \Cref{theorem:agent-normal-form} proves that every SETE induces an NE in the agent normal form of the augmented game.

\paragraph{Mediated Strengthening} In \Cref{section:mediated-SETE}, we introduce \emph{Mediated Self-Enforcing Transfer Equilibrium} (M-SETE). The mediator makes the payment schedule and prescribed strategies binding offers, so each player either accepts the entire offer or declines the mechanism. A player can no longer use payments to induce the others to follow $\pi_{-i}$ while simultaneously deviating from $\pi_i$. We prove that any M-SETE is a Nash equilibrium of the augmented game induced by this binding-offer mechanism. Moreover, in any finite game, any strategy profile whose social welfare is no smaller than that of the worst NE, and in particular any socially optimal strategy profile, can be supported as an M-SETE while maintaining budget balance (\Cref{corollary:NE-M-SETE}). Thus M-SETE provides a mediated workaround to the limitation identified for SETE.

\paragraph{Correspondence between SETE and Stationary Points of the Social Welfare Function} In polymatrix games, we show that any \emph{stationary point} of the social welfare function induces a SETE via an appropriate payment plan (\Cref{lemma:social-welfare}). This bridges social welfare optimization and equilibrium. Furthermore, this implies that any socially optimal strategy profile can be sustained as a SETE (\Cref{corollary:sw-optimal-eq}), and therefore as an NE in the agent normal form of the augmented game. With M-SETE, the corresponding socially optimal profile can instead be implemented as an NE of the augmented game.

This connection also yields existence: since the social welfare function is continuous over a compact set, it attains a maximizer, and such a maximizer is a stationary point. Hence, a SETE exists. Moreover, in the same spirit as the minimax-based existence proof for correlated equilibrium and its computational implications \citep{hart1989existence-phi-zero-sum}, our characterization suggests algorithmic routes to computing SETE that bypass the {\tt PPAD}-hardness barrier faced by NE.

\paragraph{Efficient Computation} We provide a polynomial-time algorithm (see \Cref{eq:update-rule-potential-game}) for computing SETE in polymatrix games. This tractability is noteworthy because SETE retains \emph{independent (product-form) mixed strategies}, as in NE. Unlike correlated equilibrium, SETE does not relax to correlated distributions; instead, it layers a payment plan on top of product-form play.

Since computing Nash equilibrium is {\tt PPAD}-hard even in two-player general-sum games \citep{chen2006settling-ppad,daskalakis2009complexity-PPAD}, our result highlights a computational benefit of transfers: suitably designed internal payments reshape incentives enough to make an independent-strategy equilibrium efficiently computable.

\paragraph{Decentralized Learning} Unlike prior work \citep{monderer2003k-implementation,kolumbus2024paying-do-better} that relies on a fully informed mediator, we present a decentralized learning dynamic (\Cref{alg:learning}) in which players learn an equilibrium using only observed utilities and realized payments. The update rule is natural: at each timestep, each player computes payments to others and performs one step of projected gradient ascent on (utility $+$ payments received).

\begin{figure}[t]
\centering
\begin{minipage}[t]{0.48\textwidth}
    \centering
    \vspace{-10em}
    \begin{tabular}{|c|c|c|}
    \hline
     & \textbf{Cooperate (C)} & \textbf{Defect (D)} \\
    \hline
    \textbf{Cooperate (C)} & $(0.6,0.6)$ & $(0,1)$ \\
    \hline
    \textbf{Defect (D)} & $(1,0)$ & $(0.2,0.2)$ \\
    \hline
    \end{tabular}
    \captionof{table}{Utility matrix for the Prisoner's Dilemma. Each entry $(a,b)$ gives the payoffs to the row player ($a$) and the column player ($b$).}
    \label{table:PD}
\end{minipage}\hfill
\begin{minipage}[t]{0.48\textwidth}
    \centering
    \includegraphics[width=\linewidth]{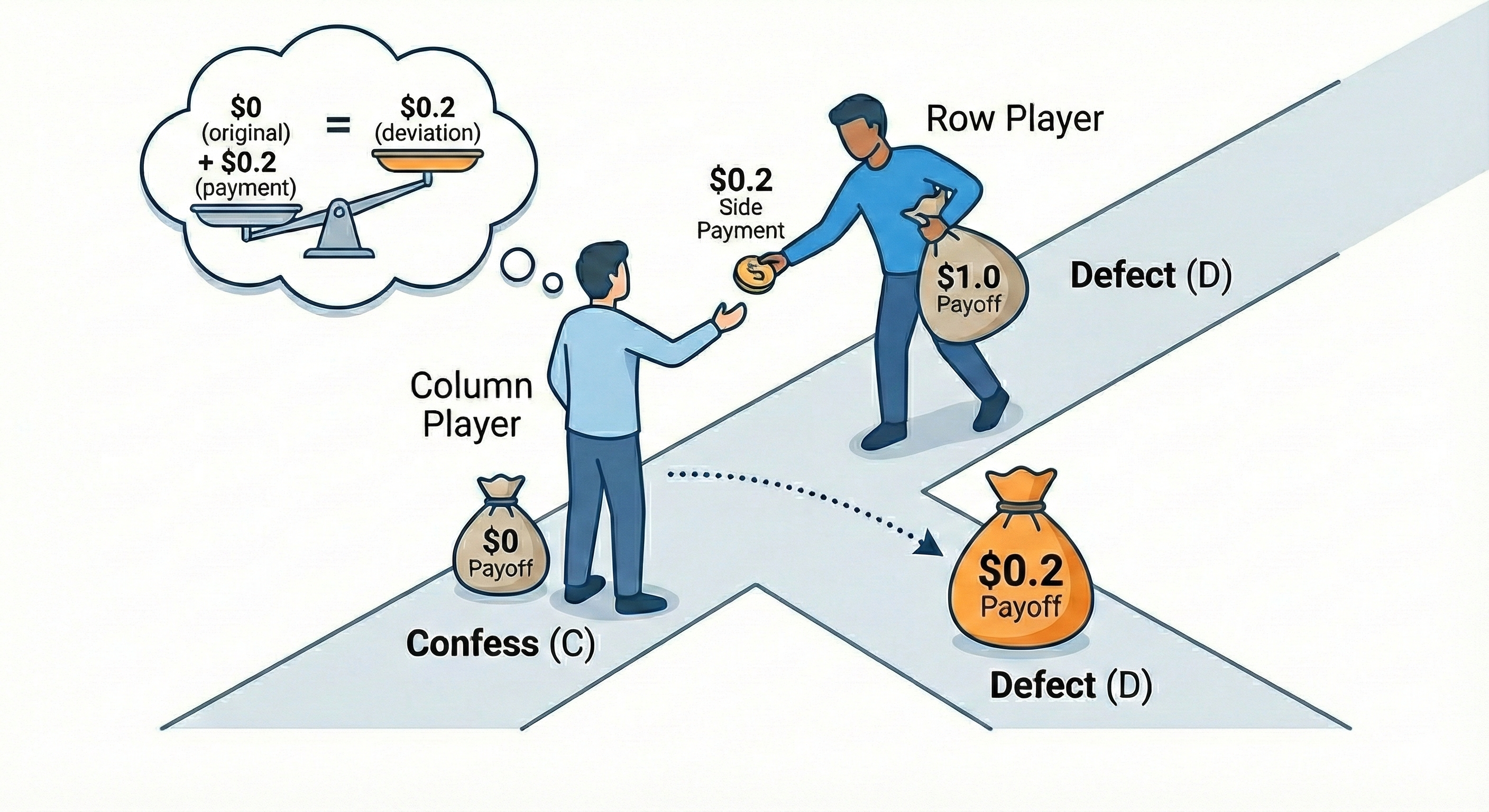}
    \caption{An illustration of how side payments work in Prisoner's Dilemma.}
    \label{fig:PD}
\end{minipage}

\end{figure}

\section{Related Work}
\label{section:related-work}

\paragraph{Endogenous Game.} The works most closely related to this paper are \citet{jackson2005endogenous} and \citet{kolumbus2024paying-do-better}, which explore stronger payment mechanisms where players can offer payments contingent on the joint action executed by all players. Under their framework, players can arbitrarily modify the utility function as long as the sum of the utility functions is maintained. However, this approach causes the decision variables to scale exponentially with the number of players $N$. In contrast, our mechanism's decision variables scale polynomially with $N$ because payments depend only on the sender and receiver. Furthermore, while previous models assume players possess full knowledge of others' utility functions, we provide an interaction principle to learn payments without such restrictive assumptions. Finally, whereas \citet{jackson2005endogenous} and \citet{kolumbus2024paying-do-better} focus on equilibrium properties, we extend the study to the decentralized learning of these equilibria under internal transfers.

\paragraph{Pigouvian Tax.} As demonstrated by \citet{roughgarden2002bad-POA}, NEs in certain congestion games often result in social welfare strictly lower than the maximal social welfare. To address this, \citet{sandholm2002evolutionary-pigouvian, sandholm2005negative-pigouvian} proposed the Pigouvian tax, additional payments imposed on specific routes, which transforms the socially optimal strategy into an NE. While this mechanism improves welfare, it relies on external intervention. Moreover, Pigouvian taxes are typically studied in potential games \citep{monderer1996potential-game}, where players' utilities derive from a common potential function (up to player-dependent constants). In contrast, we consider general-sum games.

\paragraph{Exogenous Payment.} Another research thread explores non-negative external payments to incentivize strategies with higher social welfare. \citet{monderer2003k-implementation} analyzed the minimum external payment required to make a strategy dominant. Similarly, \citet{zhang2023steering} introduced an online learning setting where a mediator provides iterative payments to steer no-regret learners toward a desired NE. Unlike these works, we focus exclusively on internal transfers, ensuring the system remains budget-balanced with zero net payments to or from external entities.

\section{Preliminaries}

\paragraph{Basics.} For any vector $\bx\in\RR^n$, let $x_i$ denote its $i^{th}$ coordinate and let $\nbr{\bx}_p$ denote its $\ell_p$-norm. By default, we write $\nbr{\bx}\coloneqq \nbr{\bx}_2$. For any $\bx,\by\in\RR^n$, define the inner product $\inner{\bx}{\by}\coloneqq \sum_{i=1}^n x_i\cdot y_i$. For simplicity, let $\zero$ and $\one$ denote the all-zero and all-one vectors, respectively. For any positive integer $N$, let $[N]\coloneqq \cbr{1,2,\dots,N}$ and $\Delta^N\coloneqq \cbr{\bx\in [0,1]^N \given \sum_{i=1}^N x_i=1}$ denote the $(N-1)$-dimensional probability simplex. More generally, for any finite discrete set $S$, let $|S|$ denote its cardinality and let $\Delta^S$ denote the $(|S|-1)$-dimensional probability simplex over $S$, indexed by elements of $S$.

Let $\Omega \subseteq \RR^n$ be a nonempty closed convex set and let $\bx\in \RR^n$. Define the (Euclidean) projection of $\bx$ onto $\Omega$ by
\begin{align}
    \Proj{\Omega}{\bx}
\coloneqq
\argmin_{\bx'\in\Omega}\ \frac{1}{2}\nbr{\bx' - \bx}^2. \numberthis[Projection]{eq:projection}
\end{align}
For any differentiable function $f:\Omega\to \RR$, we call a vector $\bx\in\Omega$ an \emph{$\epsilon$-stationary point} if, for every $\lambda>0$,
\begin{align}
    \frac{1}{\lambda}\nbr{\Proj{\Omega}{\bx + \lambda \nabla f(\bx)} - \bx} \leq \epsilon. \numberthis[$\epsilon$-Stationary Point]{eq:stationary-point}
\end{align}

\paragraph{Games.} A (finite) game is a tuple $\rbr{N, \cbr{\cA_i}_{i=1}^N, \cbr{\cU_i}_{i=1}^N}$, where:
\begin{itemize}
    \item Integer $N$ is the number of players.
    \item Finite discrete set $\cA_i$ is the action set of player $i$. Let $\cA \coloneqq \bigtimes_{i=1}^N \cA_i$ denote the joint action set, and let $A \coloneqq \max_{i\in [N]} |\cA_i|$ denote the size of the largest action set.
    \item Function $\cU_i\colon \cA \to [0,1]$ is the utility function of player $i$, where $\cU_i(\ba)$ is the utility received by player $i$ under joint action $\ba=(a_1,a_2,\dots,a_N)\in\cA$.
\end{itemize}
For any $\ba\in\cA$, let $\ba_{-i}\coloneqq (a_1,\dots,a_{i-1},a_{i+1},\dots,a_N)$ denote the action profile of all players except $i$, and define $\cA_{-i}\coloneqq \bigtimes_{j\in [N]\setminus\cbr{i}} \cA_j$.

For each player $i\in [N]$, a (mixed) strategy $\pi_i\in \Delta^{\cA_i}$ is a probability distribution over $\cA_i$. A strategy profile is $\pi=(\pi_1,\pi_2,\dots,\pi_N)\in \bigtimes_{i=1}^N \Delta^{\cA_i}$. Under $\pi$, the expected utility of player $i$ is
\begin{align}
    \EE_{\ba\sim \pi}\sbr{\cU_i(\ba)}=\sum_{\ba\in\cA} \cU_i(\ba)\prod_{j=1}^N \pi_j(a_j).
\end{align}

\paragraph{Polymatrix Games.} In a polymatrix game, each player's utility decomposes as a sum of pairwise interactions:
\begin{align}
    \forall i\in [N], \ba\in\cA,\quad \cU_i(\ba)=\sum_{j\in\cN(i)} \cU_{i, j}(a_i, a_j),
\end{align}
where $\cN(i)$ denotes the set of neighbors of player $i$, and $\cU_{i,j}\colon \cA_i\times \cA_j \to [0,1]$ is the utility function for the two-player game between $i$ and neighbor $j\in \cN(i)$. Additionally, without loss of generality, we assume $\cU_i(\ba)\in [0, 1]$ for every player $i\in[N]$ and joint action $\ba\in\cA$. Thus, a polymatrix game models settings in which each player engages in separate pairwise interactions, and their total utility is the sum of the corresponding pairwise payoffs.

\paragraph{Agent Normal Form.} A sequential game is specified by a game tree with a set of decision points $s\in\cS$. Let $p(s)$ denote the original player who moves at $s$ and $\cA_s$ be the set of actions available at $s$.

The agent normal form of a sequential game \citep{selten1975reexamination-agent-form} is a normal-form game, in which each decision point $s\in\cS$ is controlled by a distinct agent (player). A joint action $\ba\in \prod_{s\in\cS}\cA_s$ specifies one action at every decision point and thereby induces an outcome in the original game, together with utilities $\cU_i(\ba)$ for each original player $i$. The agent normal form is then the game $\rbr{\cS,\cbr{\cA_s}_{s\in\cS},\cbr{u_s}_{s\in\cS}}$ with payoffs
\begin{align*}
    u_s(\ba) \coloneqq \cU_{p(s)}(\ba)\qquad \text{for all } s\in\cS,
\end{align*}
so that all agents corresponding to the same original player share the same utility function.

\section{Self-Enforcing Transfer Equilibrium (SETE)}
\label{section:concept}

\begin{figure}[t]
\centering
\scalebox{0.9}{
\begin{tikzpicture}[
  level distance=1.5cm,
  level 1/.style={sibling distance=8cm},
  level 2/.style={sibling distance=4cm},
  level 3/.style={sibling distance=2cm},
  edge from parent/.style={draw,thick}
]
\node (root) {\color{blue}$\rbr{p_{1\to 2}, p_{2\to 1}}$}
  child { node (A) {\color{blue}$\pi=\rbr{\pi_1,\pi_2}$}
    child { node (B) {\small$
            \begin{aligned}
            &\EE_{\ba\sim\pi}\sbr{\cU_1(\ba)}\\
            &+ p_{2\to 1} - p_{1\to 2}
            \end{aligned}$}
            edge from parent node[midway, yshift=4pt, left=4pt] {Follow $\pi_1$}
        }
    child { node (C) {\small$
            \begin{aligned}
                &\EE_{\ba\sim\pi}\sbr{\cU_1(\hat a_1, a_2)} \\ 
                &- p_{1\to 2}
            \end{aligned}$}
            edge from parent node[midway, yshift=4pt, right=4pt] {Deviate to $\hat a_1\in\cA_1$}
        }
    edge from parent node[midway, yshift=4pt, left=4pt] {Pay $p_{1\to 2}$}
  }
  child { node (D) {$\pi=\rbr{\pi_1,\pi_2}$}
    child { node (E) {\small$
            \begin{aligned}
            &\min_{\hat a_2\in\cA_2}\EE_{\ba\sim\pi}\sbr{\cU_1(a_1, \hat a_2)}\\
            &+ p_{2\to 1}
            \end{aligned}$}
            edge from parent node[midway, yshift=4pt, left=4pt] {Follow $\pi_1$}
        }
    child { node (F) {\small$
            \begin{aligned}
            &\cU_1(\hat a_1^{\,\prime}, \hat a_2)\\
            &\hat a_2\in \argmin_{\hat a_2^{\,\prime}\in\cA_2}\EE_{\ba\sim\pi}\sbr{\cU_1(a_1, \hat a_2^{\,\prime})}
            \end{aligned}$} 
            edge from parent node[midway, yshift=4pt, right=4pt] {Deviate to $\hat a_1^{\,\prime}\in\cA_1$}
        }
    edge from parent node[midway, yshift=4pt, right=4pt] {Set $p_{1\to 2}=0$}
  };

\draw[green,thick] (root) -- (A);
\draw[green,thick] (A) -- (B);
\draw[red,thick] (root) -- (D);
\draw[green,thick] (D) -- (E);
\draw[red,thick] (D) -- (F);
\end{tikzpicture}
}
\caption{Illustration of the augmented two-stage game, taking player 1 as the player of interest. In Stage 1, players choose a payment plan; in Stage 2, they choose strategies in the original game. The green paths form a Nash equilibrium of the induced agent normal form of the augmented game, but this equilibrium need not be a Nash equilibrium of the original sequential (two-stage) game. Accordingly, the red paths depict a potentially profitable deviation in the original sequential game, even though the green paths are stable in the agent normal form. This limitation motivates the mediated binding-offer mechanism in \Cref{section:mediated-SETE}. Blue nodes are on-path decision points. Without loss of generality, deviations can be taken to be pure strategies, since expected utilities are linear in mixed strategies.}
\label{fig:augmented-game}
\end{figure}

First, we study equilibria in an augmented game (see \Cref{fig:augmented-game}) in which, beyond choosing actions, players may commit to \emph{internal} (peer-to-peer) transfers. The augmented game is defined as follows.
\begin{definition}[Augmented Game with Payments]
    \begin{enumerate}[left=0mm,nosep,leftmargin=*, align=left, labelsep=0pt, labelwidth=0pt,label={},ref={Augmented Game}] \item
    The augmented sequential game consists of two stages. In Stage 1, each player $i\in [N]$ commits to a payment plan $\rbr{p_{i\to j}}_{j\in [N]\setminus\cbr{i}}$. In Stage 2, each player $i\in [N]$ chooses a mixed strategy $\pi_i\in\Delta^{\cA_i}$. \label{definition:augmented-game}
    \end{enumerate}
\end{definition}
Note that \Cref{definition:augmented-game} follows the two-stage framework of \citet{jackson2005endogenous}.

Importantly, these transfers are \emph{budget-balanced}: they only redistribute utility among players and do not inject or destroy value in the system. Concretely, for a strategy profile $\pi=\rbr{\pi_i}_{i=1}^N$, each player $i\in[N]$ can commit to a nonnegative payment $p_{i\to j}\geq 0$ to player $j$ conditional on $j$ not deviating from the prescribed strategy $\pi_j$.\footnote{Equivalently, one can view $p_{i\to j}$ as being placed in escrow and released to $j$ only upon compliance. Since compliance is checked after play, a player cannot ``take the money and then deviate'' in the same interaction.} When no player deviates and all prescribed payments are executed, then player $i$'s net payoff equals
\begin{align*}
    \EE_{\ba\sim\pi}\sbr{\cU_i\rbr{\ba}}
    \;-\;\sum_{\substack{j=1,\\j\neq i}}^N p_{i\to j}
    \;+\;\sum_{\substack{j=1,\\j\neq i}}^N p_{j\to i},
\end{align*}
and summing the payments $\bp$ over $i$ shows that the transfers are purely redistributive:
\begin{align*}
    \sum_{i=1}^N \rbr{
    -\sum_{j\neq i} p_{i\to j} + \sum_{j\neq i} p_{j\to i}}
    \;=\;0,
\end{align*}
so the social welfare $\sum_{i=1}^N \EE_{\ba\sim\pi}\sbr{\cU_i\rbr{\ba}}$ is unchanged by transfers.

The strategy profile and the internal transfers should satisfy two criteria: stability and incentive compatibility.

\subsection{Stability} 

To ensure the stability of $\rbr{\pi_i}_{i=1}^N$ (no player wants to deviate), we require that, for any player $i\in[N]$,
\begin{align}
    \max_{\hat a_i\in\cA_i} \EE_{\ba\sim\pi} \sbr{\cU_i\rbr{\hat a_i, \ba_{-i}}}
    \;\leq\;
    \EE_{\ba\sim\pi} \sbr{\cU_i\rbr{\ba}} + \sum_{\substack{j=1,\\j\neq i}}^N p_{j\to i},
    \numberthis[Stability]{eq:stable}
\end{align}
where $\ba\sim\pi$ means that $a_i\sim \pi_i$ is sampled independently for each $i\in[N]$. \Cref{eq:stable} captures that a unilateral deviation by $i$ forfeits the incoming transfers $\sum_{j\neq i}p_{j\to i}$, while $i$'s outgoing commitments $\sum_{j\neq i}p_{i\to j}$ do not affect the comparison (they are independent of $i$'s own deviation and thus cancel from both sides). When there is no internal transfer, \emph{i.e.}, $p_{i\to j}\equiv 0$ for all $i,j\in[N]$, \Cref{eq:stable} reduces to the standard definition of Nash equilibrium \citep{nash1950equilibrium-nash-def}.

\subsection{Incentive Compatibility}

Beyond stability, we also require that each player has the incentive to \emph{honor} the proposed transfers, rather than unilaterally withdrawing from the transfer system (equivalently, setting some outgoing payments to zero). Intuitively, player $i$ is willing to pay others only insofar as doing so prevents \emph{costly deviations} that would hurt $i$. Hence, $i$'s total payment to any group of recipients should be bounded by the (worst-case) loss that $i$ would suffer if those recipients were to deviate from the prescribed strategy profile $\pi$. Formally, for any player $i\in[N]$ and any subset $S_i\subseteq [N]\setminus\cbr{i}$,
\begin{align}
    \sum_{j\in S_i} p_{i\to j}
    \;\leq\;
    \EE_{\ba\sim\pi}\sbr{\cU_i\rbr{\ba}}
    \;-\;
    \min_{\hat\ba_{S_i}\in\cA_{S_i}}
    \EE_{\ba\sim\pi}\sbr{\cU_i\rbr{\hat\ba_{S_i}, \ba_{-S_i}}}.
    \numberthis[Compatibility]{eq:compatible}
\end{align}
\Cref{eq:compatible} ensures that player $i$'s payment to any subset $S_i$ does not exceed $i$'s \emph{willingness to pay}: the maximum utility loss that $i$ would incur if the players in $S_i$ were no longer incentivized to follow $\pi$. In particular, if \Cref{eq:compatible} fails for some $S_i$, then player $i$ would strictly benefit from withdrawing those transfers, \emph{i.e.}, setting $p_{i\to j}=0$ for all $j\in S_i$. Finally, \Cref{eq:compatible} is well-defined since the right-hand side is always nonnegative: the minimization over $\hat\ba_{S_i}$ is at most the value achieved when $S_i$ follows $\pi_{S_i}$, and therefore cannot exceed $\EE_{\ba\sim\pi}\sbr{\cU_i\rbr{\ba}}$.

Hence, we define the Self-Enforcing Transfer Equilibrium (SETE) as follows.

\begin{definition}[Self-Enforcing Transfer Equilibrium (SETE)]
\label{def:SETE-def-label}
    \begin{enumerate}[left=0mm,nosep,leftmargin=*, align=left, labelsep=0pt, labelwidth=0pt,label={},ref={SETE}] \item
        In any game, a Self-Enforcing Transfer Equilibrium (SETE) is a strategy--payment pair $\rbr{\pi,\bp}$ that satisfies \Cref{eq:stable} and \Cref{eq:compatible} simultaneously, where $\pi=\rbr{\pi_i}_{i=1}^N$ is a strategy profile and $\bp=\rbr{p_{i\to j}}_{i\neq j\in[N]}$ is a payment plan. Here, $p_{i\to j}$ is the (nonnegative) payment promised by player $i$ to player $j$, executed conditional on player $j$ not deviating from $\pi_j$. \label{def:SETE}
    \end{enumerate}
\end{definition}

Note that \Cref{def:SETE} always exists, as stated in the following lemma.
\begin{lemma}
\label{lemma:existence}
    For any finite game, \Cref{def:SETE} always exists. Moreover, for any Nash equilibrium $\pi^{\rm NE}$, the pair $(\pi^{\rm NE},\zero)$ is a \Cref{def:SETE}.
\end{lemma}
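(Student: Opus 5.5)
The plan is to prove the second claim directly and then derive the first from Nash's existence theorem. Fix a finite game and let $\pi^{\rm NE}$ be any (mixed) Nash equilibrium. Such an equilibrium exists by \citet{nash1950equilibrium-nash-def}, since action sets are finite and utilities are multilinear in the mixed strategies. Set $\bp=\zero$. Once we show that $(\pi^{\rm NE},\zero)$ satisfies both \Cref{eq:stable} and \Cref{eq:compatible}, existence of a \Cref{def:SETE} follows immediately.

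For \Cref{eq:stable} with $\bp=\zero$, the incoming-transfer term $\sum_{j\neq i}p_{j\to i}$ vanishes. The condition then reads
\[
\max_{\hat a_i\in\cA_i}\EE_{\ba\sim\pi^{\rm NE}}\sbr{\cU_i(\hat a_i,\ba_{-i})}\le \EE_{\ba\sim\pi^{\rm NE}}\sbr{\cU_i(\ba)}.
\]
This is exactly the Nash condition restricted to pure deviations. It suffices because player $i$'s expected utility is linear in $\pi_i$, so no mixed deviation can beat the best pure one. This is also the observation the paper makes right after \Cref{eq:stable}: with zero transfers it reduces to the definition of Nash equilibrium.

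For \Cref{eq:compatible}, the left-hand side $\sum_{j\in S_i}p_{i\to j}$ equals $0$ for every $i$ and every $S_i\subseteq[N]\setminus\cbr{i}$. The right-hand side is nonnegative, because the minimum over $\hat\ba_{S_i}$ is at most the average over $\hat\ba_{S_i}\sim\pi_{S_i}$, and that average equals $\EE_{\ba\sim\pi}\sbr{\cU_i(\ba)}$. The paper already notes this right after \Cref{eq:compatible}. So the inequality $0\le$ (nonnegative quantity) holds trivially.

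There is no real obstacle here. The only step needing care is the one-line justification that a minimum is bounded by an expectation under $\pi_{S_i}$, which uses independence of the product distribution.
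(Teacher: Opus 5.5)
Your proposal is correct and matches the paper's proof: show that $(\pi^{\rm NE},\zero)$ satisfies \Cref{eq:stable}, which reduces to the Nash condition, and \Cref{eq:compatible}, which becomes $0\le$ a nonnegative quantity, then invoke Nash's existence theorem. One small remark on the stability step: the direction you need, that an NE satisfies the pure-deviation inequality, holds immediately because pure strategies are special mixed strategies; the linearity argument you give is only needed for the converse.
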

\begin{proof}
For any NE $\pi^{\rm NE}$, the pair $\rbr{\pi^{\rm NE}, \zero}$ satisfies both \Cref{eq:stable} and \Cref{eq:compatible}. Moreover, since every finite game admits a NE \citep{nash1950equilibrium-nash-def}, \Cref{def:SETE} always exists as well.
\end{proof}
Finally, we define the approximate SETE in the following.

\begin{definition}
[$\epsilon$-approximate SETE]
\begin{enumerate}[left=0mm,nosep,leftmargin=*, align=left, labelsep=0pt, labelwidth=0pt,label={},ref={$\epsilon$-approximate SETE}]
\item
For any $\epsilon\geq 0$, a strategy--payment pair $(\pi,\bp)$ is an $\epsilon$-approximate \Cref{def:SETE} if and only if it satisfies \Cref{eq:compatible} and the following $\epsilon$-approximate \Cref{eq:stable} constraint: for any $i\in[N]$,
\label{def:approximate-SETE}
\begin{align}
    \max_{\hat a_i\in\cA_i} \EE_{\ba\sim\pi} \sbr{\cU_i\rbr{\hat a_i, \ba_{-i}}} \leq \EE_{\ba\sim\pi} \sbr{\cU_i\rbr{\ba}} + \sum_{\substack{j=1,\\j\neq i}}^N p_{j\to i} + {\color{red} \epsilon}. \label{eq:approximate-stability}
\end{align}
\end{enumerate}
\end{definition}
Note that $(\pi,\zero)$ being an \Cref{def:approximate-SETE} is equivalent to $\pi$ being an $\epsilon$-approximate Nash equilibrium \citep{algorithmic-game-theory}.

\subsection{Connection to Equilibria in the Augmented Game}

Next, we relate \Cref{def:SETE} to NE in the agent normal form \citep{selten1975reexamination-agent-form} of \Cref{definition:augmented-game}.

As illustrated in \Cref{fig:augmented-game}, fix a player $i$ and a subset of recipients $S_i$. If player $i$ unilaterally reduces or withdraws the payments to the recipients in $S_i$ in Stage 1, the continuation behavior in the corresponding Stage 2 subgame is specified as follows: the players in $S_i$ coordinate on a joint action that is worst for player $i$ in expectation under $\pi$,
\begin{align*}
    \hat \ba_{S_i}\in \argmin_{\hat\ba_{S_i}^{\,\prime}\in\bigtimes_{j\in S_i} \cA_j} \EE_{\ba\sim\pi}\sbr{\cU_i\rbr{\hat\ba_{S_i}^{\,\prime}, \ba_{-S_i}}},
\end{align*}
while players outside $S_i$ continue according to their prescribed strategies.

Now consider the on-path history in which no player withdraws any payment (the left child of the root in \Cref{fig:augmented-game}). By \Cref{eq:stable}, given that Stage 1 ends with no withdrawals, no player can profitably deviate in Stage 2. Moreover, \Cref{eq:compatible} ensures that, for every player $i$, withdrawing payments to trigger a withdrawal subgame is not strictly profitable when the continuation in that subgame is given by the SETE specification above (hence no player prefers to move to the right child of the root in \Cref{fig:augmented-game}). Therefore, no player has a profitable deviation at any decision node reached with positive probability under SETE (the blue nodes in \Cref{fig:augmented-game}).

However, \Cref{eq:stable} and \Cref{eq:compatible} do not by themselves preclude profitable deviations at off-path decision nodes (the right child of the root in \Cref{fig:augmented-game}). In particular, a player might profit by jointly deviating: withdrawing payments in Stage 1 and then deviating again in Stage 2. Thus, these two constraints do not, in general, guarantee that \Cref{def:SETE} is an NE of the original augmented sequential game.

Indeed, \Cref{lemma:sw-optimal-not-NE} shows that there exists a two-player game in which the socially optimal strategy profile cannot be supported as a Nash equilibrium of the augmented game by any payment plan.
\begin{lemma}
\label{lemma:sw-optimal-not-NE}
    There exists a two-player game whose unique socially optimal strategy profile cannot be supported by any payment plan as a Nash equilibrium of \Cref{definition:augmented-game}.
\end{lemma}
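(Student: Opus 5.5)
The plan is to build an explicit two-player counterexample. The key tool is a "withdraw-and-deviate" argument, which gives a necessary condition that every Nash equilibrium of \Cref{definition:augmented-game} supporting a given profile $\pi^\star$ must satisfy. In the sequential augmented game, a strategy of player $i$ consists of a Stage-1 payment plan together with a Stage-2 strategy for \emph{every} Stage-1 history. So a unilateral deviation of $i$ may change the payments and the Stage-2 play at once. This is exactly the red path in \Cref{fig:augmented-game}.

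\textbf{Step 1: lower bound from the withdraw-and-deviate move.} Consider the deviation in which player $i$ sets all outgoing payments $p_{i\to j}$ to $0$ and then best-responds in Stage 2. It forfeits the incoming transfers, since $i$ no longer follows $\pi_i$, and it pays nothing. Even if the opponents react at that off-path history in the way that is worst for $i$, player $i$ still secures its pure-action maxmin value
\begin{align*}
    v_i \coloneqq \min_{\sigma_{-i}}\max_{\hat a_i\in\cA_i}\EE_{\ba_{-i}\sim\sigma_{-i}}\sbr{\cU_i\rbr{\hat a_i,\ba_{-i}}}.
\end{align*}
Hence, in any supporting Nash equilibrium, the on-path net payoff of $i$ is at least $v_i$. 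These on-path net payoffs are $\EE_{\ba\sim\pi^\star}\sbr{\cU_i(\ba)}-\sum_{j\neq i}p_{i\to j}+\sum_{j\neq i}p_{j\to i}$. Summing over $i$ and using budget balance, every supporting payment plan requires $W^\star\coloneqq\sum_i\EE_{\ba\sim\pi^\star}\sbr{\cU_i(\ba)}\geq \sum_i v_i$. Stage-2-only deviations give the additional constraints of \Cref{eq:stable}, but those are not needed for the impossibility.

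\textbf{Step 2: choose a game with a unique optimum and $\sum_i v_i > W^\star$.} This is possible because the $v_i$ are attained at different profiles. Each player has a "safe" action with a high guaranteed payoff that cannot be punished by the opponent. Each player also has a "cooperative" action that benefits the opponent. The safe actions must interfere when played jointly, so that the all-safe profile has low welfare. With only two actions per player and utilities in $[0,1]$, the naive separable designs fail: separability makes the all-safe profile at least as good as cooperation. So I will use three actions per player, or non-separable payoffs.

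The concrete shape I will try first is as follows. Each player $i$ has actions $\cbr{C,S_1,S_2}$. Playing $C$ against $C$ gives $1$ to both. Playing either $S_k$ gives the player $0.8$ against $C$. The joint-safe payoffs are arranged so that every Stage-2 response of the opponent leaves the deviator a safe action worth about $0.7$. For example, $S_1$ is bad only against $S_1$ and $S_2$ is bad only against $S_2$, which gives $v_i$ near $0.7$ under mixing. I will also give the cooperator $0$ when the opponent plays safe, so that $(C,C)$ is the unique welfare maximizer. In the final version, cooperation payoffs are scaled down to, say, $0.6$ each, so that $W^\star=1.2$. The safe payoffs are tuned so that each $v_i$ stays strictly above $0.6$.

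\textbf{Step 3: verification.} I will check uniqueness of the optimum by enumerating all joint actions; since welfare is multilinear, a pure maximizer that is strictly better than the others is the unique mixed maximizer. I will then compute each $v_i$ as the value of a small zero-sum game, and conclude from Step~1 that no nonnegative payment plan, paired with any Stage-2 continuation, yields a Nash equilibrium with on-path play $\pi^\star$.

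The main obstacle is Step~2: designing payoffs in $[0,1]$ where the maxmin values sum above the unique optimal welfare. If the three-action design resists tuning, my fallback is an asymmetric game. There, one player's safe action guarantees nearly $1$. The other player's cooperation is worth slightly more to the first player than that player's own safe action, which keeps the optimum unique. The second player also has its own unpunishable safe option worth more than the residual $W^\star - v_1$.
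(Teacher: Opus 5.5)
\textbf{The approach has a fatal gap.} Your Step~1 is a valid necessary condition. The problem is that Step~2 asks for something impossible in every finite game, so the plan cannot be completed.

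Let $\pi^{\rm NE}$ be any Nash equilibrium of the underlying game. The profile $\pi^{\rm NE}_{-i}$ is one admissible choice of $\sigma_{-i}$ in your definition of $v_i$, so each player gets $\EE_{\ba\sim\pi^{\rm NE}}[\cU_i(\ba)]=\max_{\hat a_i}\EE_{\ba\sim\pi^{\rm NE}}[\cU_i(\hat a_i,\ba_{-i})]\ge v_i$. Hence $\sum_i v_i\le SW(\pi^{\rm NE})\le \max SW=W^\star$, and the inequality you need can never hold.

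Your heuristic that the $v_i$ ``are attained at different profiles'' is what misleads you: a Nash equilibrium gives every player at least its minmax value simultaneously. In your three-action design, and in the asymmetric fallback, any tuning that achieves $v_1+v_2>1.2$ automatically yields a Nash equilibrium with welfare above $1.2$. Then $(C,C)$ is no longer optimal.

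This is exactly the content of Theorem~\ref{theorem:M-SETE} and Corollary~\ref{corollary:NE-M-SETE}. Your withdraw-and-deviate constraint, ``on-path net payoff $\ge v_i$'', is precisely the M-SETE participation constraint. Aggregated through budget balance, such constraints are always satisfiable at a social optimum; that is why binding offers make the optimum implementable.

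\textbf{The missing idea.} The impossibility has to come from one-sided constraints on the gross transfers in each direction. You explicitly discarded these, and they do not survive summation. The paper uses the game with payoffs $(C,C)=(0.6,0.6)$, $(C,D)=(0.4,0.55)$, $(D,C)=(1,0)$, $(D,D)=(0.5,0.5)$, where $(C,C)$ is the unique optimum, and combines three deviations:
\begin{itemize}
\item The row player's Stage-2 deviation to $D$, which is the stability condition, forces the \emph{incoming} transfer $p_{c\to r}\ge 1-0.6=0.4$.
\item A withdraw-but-comply deviation by the row player caps the \emph{outgoing} transfer at $p_{r\to c}\le 0.2$. The row player sets $p_{r\to c}=0$, keeps playing $C$, still collects $p_{c\to r}$, and guarantees $\min_{a_c}u_r(C,a_c)=0.4$.
\item The column player's withdraw-and-defect deviation guarantees $0.5$ and gives $p_{c\to r}-p_{r\to c}\le 0.1$. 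Only this one is of your net-payoff form.
\end{itemize}
Together these give $0.2\ge p_{r\to c}\ge p_{c\to r}-0.1\ge 0.3$, a contradiction.

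The row player needs a large incoming payment to be deterred from defecting but has only a limited willingness to pay out. This asymmetry between the two gross flows disappears once the constraints are summed using budget balance. To repair your argument, keep the Stage-2 stability constraint and add the payment-only withdrawal deviation, rather than relying on withdraw-and-deviate alone.
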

The proof is postponed to \Cref{section:concept-appendix}.
Consequently, rather than requiring an NE in the augmented sequential game, we turn to the weaker solution concept: the NE in the agent normal form of the augmented game. As \Cref{theorem:agent-normal-form} shows, SETE induces this weaker equilibrium notion. In \Cref{section:mediated-SETE}, we then show that, when a mediator can make both the payment schedule and the prescribed strategies binding offers to each player, the resulting notion yields an NE of the augmented sequential game itself.
\begin{theorem}
\label{theorem:agent-normal-form}
    Every \Cref{def:SETE} is a Nash equilibrium in the agent normal form of the \Cref{definition:augmented-game}.
\end{theorem}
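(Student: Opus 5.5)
We prove the claim by exhibiting, for a given SETE $(\pi,\bp)$, a full behavioral profile of the augmented sequential game and checking that no single agent in its agent normal form can profitably deviate. The profile is the one described before \Cref{lemma:sw-optimal-not-NE}. The Stage-1 agent of each player $i$ commits to $(p_{i\to j})_{j\neq i}$. At the Stage-2 decision point reached when no one withdraws, player $i$'s agent plays $\pi_i$. At an off-path Stage-2 node where some player $k$ has withdrawn payments to a set $S_k$, the agents of the players $j\in S_k$ play the components of a fixed minimizer
\begin{align*}
\hat\ba_{S_k}\in\argmin_{\hat\ba^{\,\prime}_{S_k}}\EE_{\ba\sim\pi}\sbr{\cU_k(\hat\ba^{\,\prime}_{S_k},\ba_{-S_k})},
\end{align*}
and all other agents play their prescribed $\pi_\ell$. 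Reductions rather than full withdrawals, and simultaneous withdrawals by several players, can be handled by any fixed continuation, because they either lie off path or involve multiple deviators, which a unilateral-deviation check never reaches. Since every agent of player $i$ shares the utility of $i$, it is enough to check each decision point separately while all other agents, including $i$'s own agents at other nodes, stay fixed.

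\textbf{Stage-2 on-path agents.} Fix the agent of player $i$ at the no-withdrawal node. If it switches to $\hat a_i$, player $i$ loses the incoming transfers $\sum_{j\ne i}p_{j\to i}$. Player $i$ still pays $\sum_j p_{i\to j}$ regardless. Its payoff therefore changes from $\EE_\pi[\cU_i]+\sum_j p_{j\to i}-\sum_j p_{i\to j}$ to $\EE_\pi[\cU_i(\hat a_i,\ba_{-i})]-\sum_j p_{i\to j}$, and \Cref{eq:stable} rules out a gain. By linearity in mixed strategies, pure deviations suffice.

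\textbf{Stage-2 off-path agents.} These nodes are reached with probability zero when all other agents follow the profile. A unilateral change by such an agent therefore does not change any player's expected payoff, so there is no strictly profitable deviation.

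\textbf{Stage-1 agents.} Fix player $i$'s Stage-1 agent. Its own Stage-2 agents stay put at the prescribed strategies, which is exactly what the agent normal form requires. This is the main point to argue carefully, and it is where SETE differs from an NE of the sequential game, as \Cref{fig:augmented-game} illustrates.

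Suppose the agent withdraws payments to a set $S_i$; a partial reduction is treated as withdrawing from the set of affected recipients, with the saving bounded by the full amounts. Player $i$ then saves at most $\sum_{j\in S_i}p_{i\to j}$. The Stage-2 continuation has $S_i$ playing $\hat\ba_{S_i}$ while everyone else, including $i$, plays $\pi$. One must make sure that the players in $S_i$, who no longer receive $i$'s payment, do not also lose their other incoming payments under this specification. The clean convention is that in the withdrawal subgame payments are evaluated relative to the new continuation strategies, so only the terms involving $i$ matter. Even without this convention, player $i$'s incoming payments $p_{j\to i}$ are still paid because $i$ complies. Player $i$'s utility drops by $\EE_\pi[\cU_i]-\min_{\hat\ba_{S_i}}\EE[\cU_i(\hat\ba_{S_i},\ba_{-S_i})]$. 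By \Cref{eq:compatible} this drop is at least the saving, so the deviation is not profitable.

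Combining the three cases shows that no agent has a strictly profitable unilateral deviation, so the profile is an NE of the agent normal form.
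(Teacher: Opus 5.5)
Your three-way split (on-path Stage-2 agents via \Cref{eq:stable}, off-path Stage-2 agents via the zero-probability argument, Stage-1 agents via \Cref{eq:compatible}) is the paper's proof. Your Stage-1 accounting is correct and more explicit than the paper's: the incoming $p_{j\to i}$ are still paid because $i$ complies, and the full $p_{i\to j}$ is saved for $j\in S_i$ because those recipients deviate. The flaw is in how you specify continuations at off-path Stage-2 nodes. You claim that after reductions (as opposed to full withdrawals) \emph{any} fixed continuation will do because those nodes are off path. That is false. A partial reduction by player $i$ alone is a unilateral deviation of $i$'s Stage-1 agent, and the continuation at the node it reaches decides whether that deviation pays. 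For instance, if everyone continued with $\pi$ there, cutting some $p_{i\to j}>0$ by $\delta>0$ would be a strict gain of $\delta$. Your Stage-1 paragraph in fact relies on the punishing continuation at such nodes (``treated as withdrawing from the set of affected recipients''), so the earlier sentence must go. Only nodes with two or more Stage-1 deviators may receive an arbitrary continuation. Being ``off path'' is not enough, since the Stage-1 check is exactly about off-path continuations reached by one agent's deviation.

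Second, you never handle a Stage-1 agent that \emph{raises} some payments, possibly while cutting others. The paper covers this in one line, but that case also depends on the continuation. Suppose that after $i$ raises $p_{i\to j}$, the agent of $j$ were prescribed some strategy other than $\pi_j$ that is weakly better for $i$. Then $i$ loses nothing in utility and, since $j$ no longer complies, saves its original payment $p_{i\to j}$, so a small increase is strictly profitable whenever $p_{i\to j}>0$. The fix is to prescribe the following at any node where only $i$ changed its plan: the players whose payments from $i$ were reduced play the minimizing $\hat\ba_{S_i}$, and everyone else, including recipients of increased payments, plays $\pi$. Then increases only add executed outgoing transfers and are weakly worse, and your \Cref{eq:compatible} argument handles the reduced set.
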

\begin{proof}
    In the agent normal form, each decision node is controlled by a \emph{distinct} agent who shares the utility of the corresponding original player. The completed profile prescribes the SETE payment plan in Stage 1 and the SETE strategy profile in the on-path Stage 2 game. By \Cref{eq:stable}, no on-path Stage 2 agent has a profitable deviation. By \Cref{eq:compatible}, no Stage 1 payment agent can profit by reducing or withdrawing any subset of outgoing payments, because the resulting continuation loss is at least as large as the payments saved. Increasing payments is weakly worse because it raises outgoing transfers without improving the prescribed continuation. Agents controlling unreached off-path decision nodes cannot change the outcome through a unilateral deviation in the agent normal form, since the path to their nodes is controlled by other agents and has probability zero under the completed profile. Therefore no agent has a profitable unilateral deviation, and the completed profile is a Nash equilibrium of the agent normal form. \qedhere
\end{proof}
Finally, \Cref{theorem:agent-normal-form} does not contradict the negative results of \citet{jackson2005endogenous}, who show that, under a different definition of payment plans (see \Cref{section:related-work}), socially optimal strategy profiles need not be Nash equilibria of the augmented game in certain two-player instances. Nor does it contradict the M-SETE result in \Cref{section:mediated-SETE}, which changes the mechanism by adding binding offers.

\section{Efficient Computation of \Cref{def:SETE} in Polymatrix Games}
\label{section:polymatrix}

In this section, we establish three useful properties of \Cref{def:SETE} in polymatrix games:
\begin{itemize}
    \item A decomposable \Cref{eq:compatible} constraint (\Cref{lemma:decomposition-compatible}).
    \item The connection of \Cref{def:SETE} to the social welfare function (\Cref{lemma:social-welfare}, \Cref{example:PD-SETE}).
    \item Efficient computation of equilibria under transfers, circumventing the \texttt{PPAD}-hardness of Nash equilibrium computation (\Cref{theorem:convergence-potential-game}).
\end{itemize}

\subsection{Decomposable \Cref{eq:compatible} Constraint}

We begin by showing that, in polymatrix games, the subset-based compatibility requirement in \Cref{eq:compatible} reduces to a collection of independent, pairwise constraints. Concretely, for any player $i\in[N]$, rather than verifying \Cref{eq:compatible} for all subsets $S_i\subseteq [N]\setminus\cbr{i}$, it suffices to check a separate inequality for each potential recipient $j\neq i$.

\begin{lemma}
\label{lemma:decomposition-compatible}
Consider a polymatrix game and a nonnegative payment plan $\bp$. Then \Cref{eq:compatible} is satisfied if and only if for any player $i\in [N]$ and any $j\in [N]\setminus\cbr{i}$,
\begin{align}
    p_{i\to j}
    \leq
    \EE_{\ba\sim\pi}\sbr{\cU_i\rbr{\ba}}
    - \min_{\hat a_j\in\cA_j}\EE_{\ba\sim\pi}\sbr{\cU_i\rbr{\hat a_j, \ba_{-j}}}.
    \label{eq:independent-compatible}
\end{align}
\end{lemma}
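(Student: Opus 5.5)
The plan is to show that, in a polymatrix game, the right-hand side of \Cref{eq:compatible} is additive over the recipients in $S_i$. Once that holds, the subset constraints are exactly sums of the singleton constraints.

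\textbf{Step 1: write the willingness to pay as a sum of per-neighbor terms.} Fix $\pi$, a player $i$, and a set $S_i\subseteq[N]\setminus\cbr{i}$. Since $\cU_i(\ba)=\sum_{k\in\cN(i)}\cU_{i,k}(a_i,a_k)$, we have
\begin{align*}
\EE_{\ba\sim\pi}\sbr{\cU_i\rbr{\hat\ba_{S_i},\ba_{-S_i}}}
=\sum_{k\in\cN(i)\cap S_i}\EE_{a_i\sim\pi_i}\sbr{\cU_{i,k}(a_i,\hat a_k)}
+\sum_{k\in\cN(i)\setminus S_i}\EE_{\ba\sim\pi}\sbr{\cU_{i,k}(a_i,a_k)}.
\end{align*}
The key observation is that $i\notin S_i$, so $a_i$ is still drawn from $\pi_i$. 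Each term in the first sum therefore depends on only one coordinate $\hat a_k$, and the second sum does not depend on $\hat\ba_{S_i}$ at all.

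\textbf{Step 2: split the minimization.} Because the objective is separable across the coordinates $k\in S_i$, the minimum over $\hat\ba_{S_i}\in\cA_{S_i}$ splits into independent minima $\min_{\hat a_k\in\cA_k}$, one per coordinate. Coordinates $k\in S_i\setminus\cN(i)$ do not enter the objective. Define
\begin{align*}
L_{i,k}\coloneqq \EE_{\ba\sim\pi}\sbr{\cU_i(\ba)}-\min_{\hat a_k\in\cA_k}\EE_{\ba\sim\pi}\sbr{\cU_i(\hat a_k,\ba_{-k})}.
\end{align*}
This quantity equals $\EE\sbr{\cU_{i,k}(a_i,a_k)}-\min_{\hat a_k}\EE_{a_i\sim\pi_i}\sbr{\cU_{i,k}(a_i,\hat a_k)}$ when $k\in\cN(i)$, and equals $0$ otherwise. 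Subtracting the split minimum from $\EE_{\ba\sim\pi}\sbr{\cU_i(\ba)}$ then gives
\begin{align*}
\text{RHS of \Cref{eq:compatible}}=\sum_{k\in S_i}L_{i,k}.
\end{align*}

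\textbf{Step 3: conclude both directions.} The "only if" direction is immediate: take $S_i=\cbr{j}$, and \Cref{eq:compatible} becomes exactly \Cref{eq:independent-compatible}. For the "if" direction, sum \Cref{eq:independent-compatible} over $j\in S_i$ and apply Step 2, which gives $\sum_{j\in S_i}p_{i\to j}\le\sum_{j\in S_i}L_{i,j}$, i.e.\ \Cref{eq:compatible}. Nonnegativity of $\bp$ is not needed for this equivalence; it is only part of the setting.

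The main point needing care is Step 1. The separation relies on the expectation over the product distribution and on $i$ never belonging to $S_i$. Without that, the joint minimization would not decouple: in a general game, players in $S_i$ could coordinate jointly against $i$, and the right-hand side would be superadditive rather than additive.
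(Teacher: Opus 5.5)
Your proof is correct and follows essentially the same route as the paper: use $i\notin S_i$ and the polymatrix decomposition to split the joint minimum in \Cref{eq:compatible} into per-neighbor minima, note that non-neighbors contribute zero, and then sum the singleton constraints (your Step~2 also shows that the paper's final ``$\leq$'' is in fact an equality).

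One correction to your closing remark, which is not part of the proof. In a general game the right-hand side need not be superadditive, and superadditivity would not break the ``if'' direction anyway. The real obstruction is subadditivity. For example, suppose $i$ gets utility $1$ only if both $j$ and $k$ play their prescribed pure actions, and $0$ otherwise. Then each singleton loss already equals the joint loss, so $p_{i\to j}=p_{i\to k}=1$ satisfies both singleton constraints but violates the one for $S_i=\cbr{j,k}$.
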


By \Cref{lemma:decomposition-compatible}, verifying \Cref{eq:compatible} no longer requires enumerating exponentially many subsets $S_i$. Instead, it reduces to checking the $N-1$ pairwise constraints in \Cref{eq:independent-compatible} for each fixed $i$. This decomposition is a key ingredient behind the efficient computation results for \Cref{def:SETE} in polymatrix games.

\begin{proof}
    \textbf{\Cref{eq:compatible} implies \Cref{eq:independent-compatible}.}\quad When \Cref{eq:compatible} is satisfied, by taking $S_i=\cbr{j}$ for each $j\in [N]\setminus\cbr{i}$, \Cref{eq:independent-compatible} is satisfied. Hence, \Cref{eq:compatible} implies \Cref{eq:independent-compatible}.

    \textbf{\Cref{eq:independent-compatible} implies \Cref{eq:compatible}.}\quad When \Cref{eq:independent-compatible} is satisfied, then for any subset $S_i\subseteq[N]\setminus\cbr{i}$, we have
    \begin{align*}
        \sum_{j\in S_i} p_{i\to j}\leq& \sum_{j\in S_i} \rbr{\EE_{\ba\sim\pi}\sbr{\cU_i\rbr{\ba}} - \min_{\hat a_j\in\cA_j} \EE_{\ba\sim\pi}\sbr{\cU_i\rbr{\hat a_j, \ba_{-j}}}}\\
        \overset{(i)}{=}& \sum_{j\in S_i\cap\cN(i)} \rbr{\EE_{\ba\sim\pi}\sbr{\cU_i\rbr{\ba}} - \min_{\hat a_j\in\cA_j} \EE_{\ba\sim\pi}\sbr{\cU_i\rbr{\hat a_j, \ba_{-j}}}}.
    \end{align*}
    $(i)$ is because $\cU_i(\ba)=\sum_{k\in\cN(i)} \cU_{i, k}(a_i, a_k)$ by definition of the polymatrix game. Then,
    \begin{align*}
        \min_{\hat a_j\in\cA_j} \EE_{\ba\sim\pi}\sbr{\cU_i\rbr{\hat a_j, \ba_{-j}}} =& \EE_{\ba\sim\pi}\sbr{\cU_i\rbr{\ba}}
    \end{align*}
    if $j\not\in\cN(i)$, since $\cU_i$ is irrelevant to the action of player $j\notin \cbr{i}\cup \cN(i)$. Furthermore, for any $j\in S_i\cap \cN(i)$, we have
    \begin{align*}
        \min_{\hat a_j\in\cA_j} \EE_{\ba\sim\pi}\sbr{\cU_i\rbr{\hat a_j, \ba_{-j}}}=& \min_{\hat a_j\in\cA_j} \rbr{\EE_{\ba\sim\pi}\sbr{\cU_{i, j}\rbr{a_i, \hat a_j}} + \sum_{k\in \cN(i)\setminus\cbr{j}} \EE_{\ba\sim\pi}\sbr{\cU_{i, k}\rbr{a_i, a_k}} }\\
        =& \min_{\hat a_j\in\cA_j} \EE_{\ba\sim\pi}\sbr{\cU_{i, j}\rbr{a_i, \hat a_j}} + \sum_{k\in \cN(i)\setminus\cbr{j}} \EE_{\ba\sim\pi}\sbr{\cU_{i, k}\rbr{a_i, a_k}}.
    \end{align*}
    Also, note that
    \begin{align*}
        &\min_{\hat\ba_{S_i}\in\cA_{S_i}} \EE_{\ba\sim\pi}\sbr{\cU_i\rbr{\hat\ba_{S_i}, \ba_{-S_i}}}\\
        =&\min_{\hat\ba_{S_i}\in\cA_{S_i}} \rbr{\sum_{k\in S_i\cap\cN(i)}\EE_{\ba\sim\pi}\sbr{\cU_{i, k}\rbr{a_i, \hat a_k}} + \sum_{k\in \cN(i)\setminus S_i} \EE_{\ba\sim\pi}\sbr{\cU_{i, k}\rbr{a_i, a_k}} }\\
        =&\sum_{k\in S_i\cap\cN(i)} \min_{\hat a_k\in\cA_k} \EE_{\ba\sim\pi}\sbr{\cU_{i, k}\rbr{a_i, \hat a_k}} + \sum_{k\in \cN(i)\setminus S_i} \EE_{\ba\sim\pi}\sbr{\cU_{i, k}\rbr{a_i, a_k}}.
    \end{align*}
    Therefore,
    \begin{align*}
        \sum_{j\in S_i} p_{i\to j}\leq& \sum_{j\in S_i\cap\cN(i)} \sbr{\EE_{\ba\sim\pi}\sbr{\cU_{i, j}\rbr{a_i, a_j}} - \min_{\hat a_j\in\cA_j} \EE_{\ba\sim\pi}\sbr{\cU_{i, j}\rbr{a_i, \hat a_j}}}\\
        \leq& \EE_{\ba\sim\pi}\sbr{\cU_i\rbr{\ba}} - \min_{\hat\ba_{S_i}\in\cA_{S_i}} \EE_{\ba\sim\pi}\sbr{\cU_i\rbr{\hat\ba_{S_i}, \ba_{-S_i}}}. \qedhere
    \end{align*}
\end{proof}

\subsection{Connection to the Social Welfare}

We further show that, once internal transfers are allowed, any strategy profile that maximizes social welfare, \emph{i.e.}, the sum of all players' expected utilities, can be sustained as a \Cref{def:SETE}.

\begin{lemma}
\label{lemma:social-welfare}
Consider any polymatrix game and fix $\epsilon \geq 0$. Let $\pi$ be an \Cref{eq:stationary-point} of the social welfare function $SW$, where
\begin{align*}
    SW \colon \bigtimes_{i=1}^N \Delta^{\cA_i} &\to [0,N], \qquad SW(\pi') \coloneqq \sum_{i=1}^N \EE_{\ba\sim\pi'}\sbr{\cU_i(\ba)}.
\end{align*}
Then there exists a payment plan $\bp$ such that $\rbr{\pi,\bp}$ is a $3\epsilon$-approximate SETE.
\end{lemma}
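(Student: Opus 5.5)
We take the largest payments that compatibility allows. For every ordered pair $j\neq i$, set
\[
p_{j\to i}\coloneqq \EE_{\ba\sim\pi}\sbr{\cU_j(\ba)}-\min_{\hat a_i\in\cA_i}\EE_{\ba\sim\pi}\sbr{\cU_j(\hat a_i,\ba_{-i})}\;\ge 0 .
\]
Each of these satisfies \Cref{eq:independent-compatible} with equality. Because the game is polymatrix, \Cref{lemma:decomposition-compatible} then gives \Cref{eq:compatible} for every subset $S_j$. This is the only place where the polymatrix structure is used. What remains is to show that the stationarity of $\pi$ yields the approximate stability condition \Cref{eq:approximate-stability}.

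\textbf{Gradient of $SW$.} Expected utilities are multilinear in $(\pi_1,\dots,\pi_N)$. Hence the partial gradient of $SW$ with respect to $\pi_i$ has coordinates
\[
g_i(a_i)=\sum_{j=1}^N \EE_{\ba\sim\pi}\sbr{\cU_j(a_i,\ba_{-i})},
\qquad\text{with}\qquad \inner{\pi_i}{g_i}=SW(\pi).
\]
The key intermediate bound we aim for is
\[
\max_{a_i\in\cA_i} g_i(a_i)-\inner{\pi_i}{g_i}\le c\,\epsilon
\]
for some constant $c\le 3$, expected to be about $\sqrt2$.

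\textbf{Main obstacle: from \Cref{eq:stationary-point} to this first-order bound.} This is the step I expect to require care. My first approach is to let $\lambda\to 0^+$. The quotient $\frac1\lambda\rbr{\Proj{\Omega}{\pi+\lambda\nabla SW(\pi)}-\pi}$ converges to the projection $\Pi_{T}(\nabla SW)$ onto the tangent cone $T$ of $\Omega=\bigtimes_i\Delta^{\cA_i}$ at $\pi$. So $\nbr{\Pi_T(\nabla SW)}\le\epsilon$. By the Moreau decomposition, $\nabla SW-\Pi_T(\nabla SW)$ lies in the polar cone. Therefore, for any feasible direction $d\in T$,
\[
\inner{\nabla SW}{d}\le\inner{\Pi_T\nabla SW}{d}\le \epsilon\nbr{d}.
\]
Choose $d$ equal to $e_{a_i}-\pi_i$ in block $i$ and zero elsewhere; then $\nbr{d}\le\sqrt2$, which gives the bound with $c=\sqrt2$.

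If the limit argument is delicate, the fallback is to fix $\lambda=1$. Write $y=\Proj{\Omega}{\pi+\nabla SW}$, so $\nbr{y-\pi}\le\epsilon$. The projection optimality condition gives $\inner{\pi+\nabla SW-y}{z-y}\le 0$ for all $z\in\Omega$. Rearranging and using $\nbr{z-y}\le\sqrt2$ and $\nbr{\nabla SW}$ bounded, this still gives a linear-in-$\epsilon$ bound. The constant $3$ in the statement presumably absorbs this slack.

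\textbf{Rearranging into stability.} Split
\[
g_i(a_i)-\inner{\pi_i}{g_i}
=\rbr{\EE_{\ba\sim\pi}\sbr{\cU_i(a_i,\ba_{-i})}-\EE_{\ba\sim\pi}\sbr{\cU_i(\ba)}}
+\sum_{j\ne i}\rbr{\EE_{\ba\sim\pi}\sbr{\cU_j(a_i,\ba_{-i})}-\EE_{\ba\sim\pi}\sbr{\cU_j(\ba)}}.
\]
Each summand in the second group is at least $\min_{\hat a_i}\EE_{\ba\sim\pi}\sbr{\cU_j(\hat a_i,\ba_{-i})}-\EE_{\ba\sim\pi}\sbr{\cU_j(\ba)}=-p_{j\to i}$. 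Combining this with the first-order bound gives, for every $a_i\in\cA_i$,
\[
\EE_{\ba\sim\pi}\sbr{\cU_i(a_i,\ba_{-i})}\le\EE_{\ba\sim\pi}\sbr{\cU_i(\ba)}+\sum_{j\ne i}p_{j\to i}+c\,\epsilon .
\]
With $c\le 3$ this is \Cref{eq:approximate-stability} with $3\epsilon$. Together with compatibility, $(\pi,\bp)$ is a $3\epsilon$-approximate SETE.
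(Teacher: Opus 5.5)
Your main route is correct. It agrees with the paper's proof except at one step: you use the same maximal payments $p_{j\to i}$, get compatibility from \Cref{lemma:decomposition-compatible}, and do the same rearrangement. (The paper writes your pointwise bound on the $j\neq i$ terms as $\max_x(f+g)\ge \max_x f+\min_x g$.)

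The difference is how you extract the first-order bound $\max_{a_i} g_i(a_i)-\inner{\pi_i}{g_i}\le c\,\epsilon$ from $\epsilon$-stationarity. The paper fixes one step size $\lambda\le 1/(NA)$ and invokes \Cref{lemma:variation-to-exploitability}. That lemma combines the three-point inequality with the bound $\lambda\nbr{\nabla_i SW(\pi)}\le 1$ and yields $(1+\sqrt2)\epsilon\le 3\epsilon$. You instead use the ``for every $\lambda>0$'' quantifier in the definition, let $\lambda\to 0^+$, and apply the tangent-cone projection with the Moreau decomposition, which gives the sharper $c=\sqrt2$.

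This step is sound. Because $\Omega$ is polyhedral, $\Proj{\Omega}{\pi+\lambda g}=\pi+\lambda\Pi_T(g)$ holds exactly for all sufficiently small $\lambda$, so the limit is exact. Your route buys a better constant and a cleaner geometric picture. The paper's finite-step lemma buys reusability: in \Cref{theorem:convergence-potential-game} the dynamics only certify a small displacement for the single fixed step $\eta$, and \Cref{lemma:bandit-distance-NE} uses $\lambda=1/(4\beta)$. In both places no limit $\lambda\to0^+$ is available.

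Your fallback, as written, does not give $3\epsilon$. Carrying it out with a general step, projection optimality against $z=(e_{a_i},y_{-i})$ yields
\[
\inner{g_i}{e_{a_i}-\pi_i}\le(\sqrt2+\lambda\nbr{g_i})\,\epsilon .
\]
At $\lambda=1$, $\nbr{g_i}$ is only bounded by $N\sqrt A$; even after centering, its spread can be of order $N$ in a star-shaped polymatrix game. No absolute constant such as $3$ absorbs that. The fix is to run the same argument with small $\lambda$. Taking $\lambda\le 1/(N\sqrt A)$ recovers the paper's bound, and letting $\lambda\to0^+$ gives your $\sqrt2\,\epsilon$ directly, without tangent cones.
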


\Cref{lemma:social-welfare} establishes a direct link between social welfare maximization and equilibrium with internal transfers: any approximate stationary point of $SW$ can be supported (via suitable payments) as an approximate equilibrium.  The proof of \Cref{lemma:social-welfare} is as follows.

\begin{proof}
    We first introduce \Cref{lemma:variation-to-exploitability}, which connects an approximate stationary point of the social welfare to an approximate Nash equilibrium of the modified game
    $\rbr{N, \cbr{\cA_i}_{i=1}^N, \cbr{\sum_{j=1}^N \cU_j}_{i=1}^N}$, \emph{i.e.}, the game identical to the original one except that every player's utility equals the social welfare.
    \begin{lemma}
\label{lemma:variation-to-exploitability}
For any strategy profile $\pi\in\bigtimes_{i=1}^N \Delta^{\cA_i}$ and any constant $0<\lambda\leq 1/(NA)$, let
\begin{align*}
    \pi'=\Proj{\bigtimes_{i=1}^N \Delta^{\cA_i}}{\pi + \lambda \nabla SW(\pi)},
\end{align*}
where $A=\max_{i\in[N]}|\cA_i|$. Then
\begin{align*}
    \max_{\substack{i\in [N],\\\hat a_i\in\cA_i}}~~ \sum_{j=1}^N \EE_{\ba\sim \pi} \sbr{\cU_j\rbr{\hat a_i, \ba_{-i}}} - SW(\pi)\leq \frac{3}{\lambda}\sqrt{\sum_{j=1}^N \nbr{\pi_j' - \pi_j}^2}.
\end{align*}
\end{lemma}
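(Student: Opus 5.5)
The plan is a standard projected-gradient-ascent argument, carried out one player at a time. It uses the multilinearity of $SW$ and the first-order optimality condition of the Euclidean projection.

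\textbf{Setup.} Write $\Omega=\bigtimes_{i=1}^N\Delta^{\cA_i}$ and let $g_i\coloneqq\nabla_{\pi_i}SW(\pi)\in\RR^{\cA_i}$ be the block of the gradient belonging to player $i$. Because $SW$ is multilinear in $(\pi_1,\dots,\pi_N)$, we have
\begin{align*}
g_i(\hat a_i)=\sum_{j=1}^N\EE_{\ba\sim\pi}\sbr{\cU_j(\hat a_i,\ba_{-i})}\qquad\text{and}\qquad SW(\pi)=\inner{g_i}{\pi_i}.
\end{align*}
Hence the quantity to bound equals $\max_i\max_{\hat a_i}\inner{g_i}{\be_{\hat a_i}-\pi_i}$, where $\be_{\hat a_i}$ denotes the vertex of $\Delta^{\cA_i}$ at $\hat a_i$. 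Each $\cU_j$ takes values in $[0,1]$, so every entry of $g_i$ lies in $[0,N]$. This gives $\nbr{g_i}\le N\sqrt{A}\le NA\le 1/\lambda$, which is the only place the hypothesis $\lambda\le 1/(NA)$ is used.

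\textbf{Projection step.} Since $\Omega$ is a product set, the projection decomposes blockwise: $\pi_i'=\Proj{\Delta^{\cA_i}}{\pi_i+\lambda g_i}$. The variational inequality for projection onto a convex set states that $\inner{\pi_i+\lambda g_i-\pi_i'}{x-\pi_i'}\le 0$ for every $x\in\Delta^{\cA_i}$. Taking $x=\be_{\hat a_i}$ and rearranging yields
\begin{align*}
\inner{g_i}{\be_{\hat a_i}-\pi_i'}\le \frac1\lambda\inner{\pi_i'-\pi_i}{\be_{\hat a_i}-\pi_i'}\le \frac{\sqrt2}{\lambda}\nbr{\pi_i'-\pi_i},
\end{align*}
using Cauchy--Schwarz and the fact that any two points of a simplex are at Euclidean distance at most $\sqrt2$.

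\textbf{Returning from $\pi'$ to $\pi$.} Decompose
\begin{align*}
\inner{g_i}{\be_{\hat a_i}-\pi_i}=\inner{g_i}{\be_{\hat a_i}-\pi_i'}+\inner{g_i}{\pi_i'-\pi_i}.
\end{align*}
The second term is at most $\nbr{g_i}\nbr{\pi_i'-\pi_i}\le\frac1\lambda\nbr{\pi_i'-\pi_i}$. Summing the two bounds gives
\begin{align*}
\inner{g_i}{\be_{\hat a_i}-\pi_i}\le\frac{1+\sqrt2}{\lambda}\nbr{\pi_i'-\pi_i}\le\frac{3}{\lambda}\sqrt{\sum_{j=1}^N\nbr{\pi_j'-\pi_j}^2}.
\end{align*}
Taking the maximum over $i$ and $\hat a_i$ finishes the proof.

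\textbf{Main obstacle.} The only delicate point is the gradient-norm bound. It is needed to convert a bound at the projected point $\pi'$ back to the original point $\pi$. This is exactly why the step size is restricted to $\lambda\le1/(NA)$: the crude bound $\nbr{g_i}\le N\sqrt A$ must be dominated by $1/\lambda$.
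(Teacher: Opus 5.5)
Your proof is correct and follows essentially the same route as the paper. You split the gap into $\langle g_i, e_{\hat a_i} - \pi_i'\rangle$, bounded via the projection's first-order optimality condition and the $\sqrt{2}$ simplex diameter, plus $\langle g_i, \pi_i' - \pi_i\rangle$, bounded using $\|g_i\| \le N\sqrt{A} \le 1/\lambda$; the only cosmetic difference is that the paper goes through its three-point inequality (Lemma~\ref{lemma:three-point}), which reduces to exactly the inner product you write down.
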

The proof is postponed to \Cref{section:concept-appendix}.

    By \Cref{lemma:decomposition-compatible}, it suffices to bound the violation of \Cref{eq:stable}, since \Cref{eq:compatible} holds automatically once we set, for every pair $i\neq j$,
    \begin{align*}
        p_{i\to j} = \EE_{\ba\sim\pi}\sbr{\cU_i\rbr{\ba}} - \min_{\hat a_j\in\cA_j} \EE_{\ba\sim\pi}\sbr{\cU_i\rbr{\hat a_j, \ba_{-j}}},
    \end{align*}
    Fix any player $i\in[N]$. When $\pi$ is an \Cref{eq:stationary-point}, by \Cref{lemma:variation-to-exploitability}, we have
    \begin{align}
        & \max_{\hat a_i\in\cA_i} \EE_{\ba\sim\pi} \sbr{\sum_{j=1}^N \cU_j\rbr{\hat a_i, \ba_{-i}}} \leq SW(\pi) + 3\epsilon = 3\epsilon + \sum_{j=1}^N \EE_{\ba\sim\pi} \sbr{\cU_j\rbr{\ba}}.\label{eq:sw-1}
    \end{align}
    Then,
    \begin{align*}
        \EE_{\ba\sim\pi} \sbr{\cU_i\rbr{\ba}} + \sum_{j\neq i} p_{j\to i}=& \EE_{\ba\sim\pi} \sbr{\cU_i\rbr{\ba}} + \sum_{j\neq i} \rbr{\EE_{\ba\sim\pi}\sbr{\cU_j\rbr{\ba}} - \min_{\hat a_i\in\cA_i} \EE_{\ba\sim\pi}\sbr{\cU_j\rbr{\hat a_i, \ba_{-i}}}}\\
        =& \sum_{j=1}^N \EE_{\ba\sim\pi} \sbr{\cU_j\rbr{\ba}} - \sum_{j\neq i} \min_{\hat a_i\in\cA_i} \EE_{\ba\sim\pi}\sbr{\cU_j\rbr{\hat a_i, \ba_{-i}}}\\
        \overset{(i)}{\geq}& \max_{\hat a_i\in\cA_i} \sum_{j=1}^N \EE_{\ba\sim\pi} \sbr{\cU_j\rbr{\hat a_i, \ba_{-i}}} - \sum_{j\neq i} \min_{\hat a_i^{\,\prime}\in\cA_i} \EE_{\ba\sim\pi}\sbr{\cU_j\rbr{\hat a_i^{\,\prime}, \ba_{-i}}} - 3\epsilon\\
        \overset{(ii)}{\geq}& \max_{\hat a_i\in\cA_i} \EE_{\ba\sim\pi} \sbr{\cU_i\rbr{\hat a_i, \ba_{-i}}} - 3\epsilon.
    \end{align*}
    $(i)$ follows directly from \Cref{eq:sw-1}. $(ii)$ is because for any function $f,g$, we have
    \begin{align*}
        \max_x\rbr{f(x)+g(x)} \geq f(x^{f, *}) + g(x^{f, *}) \geq \max_x f(x) + \min_{x'} g(x'),
    \end{align*}
    where $x^{f, *}$ is the maximizer of $f(x)$.
    
    Since $i\in[N]$ is arbitrary, \Cref{eq:stable} holds with approximation error $3\epsilon$, and the proof is complete. \qedhere
\end{proof}

As an immediate consequence, we obtain the following corollary.
\begin{corollary}
\label{corollary:sw-optimal-eq}
Consider any polymatrix game. For any strategy profile $\pi$ that maximizes social welfare, \emph{i.e.,}
$\pi \in \argmax_{\pi'\in \bigtimes_{i=1}^N \Delta^{\cA_i}} SW(\pi')$, there exists a payment plan $\bp$ such that $\rbr{\pi,\bp}$ is a SETE.
\end{corollary}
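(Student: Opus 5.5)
The plan is to derive \Cref{corollary:sw-optimal-eq} from \Cref{lemma:social-welfare} with $\epsilon=0$. For that, it suffices to show that any global maximizer $\pi$ of $SW$ over the compact convex set $\Omega\coloneqq\bigtimes_{i=1}^N \Delta^{\cA_i}$ is a $0$-stationary point in the sense of \Cref{eq:stationary-point}. In other words, we need $\Proj{\Omega}{\pi+\lambda\nabla SW(\pi)}=\pi$ for every $\lambda>0$.

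\textbf{Step 1: first-order optimality.} $SW$ is a polynomial, hence differentiable. Since $\pi$ maximizes $SW$ over the convex set $\Omega$, for any $\pi'\in\Omega$ the function $t\mapsto SW(\pi+t(\pi'-\pi))$ on $[0,1]$ is maximized at $t=0$. Its right derivative at $0$ is therefore nonpositive:
\begin{align*}
    \inner{\nabla SW(\pi)}{\pi'-\pi}\leq 0 \qquad \text{for all } \pi'\in\Omega.
\end{align*}

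\textbf{Step 2: fixed point of the projection.} The variational characterization of the Euclidean projection onto a closed convex set says that $\Proj{\Omega}{\bx}=\pi$ if and only if $\inner{\bx-\pi}{\pi'-\pi}\leq 0$ for all $\pi'\in\Omega$. Take $\bx=\pi+\lambda\nabla SW(\pi)$. The condition becomes $\lambda\inner{\nabla SW(\pi)}{\pi'-\pi}\leq 0$, which holds by Step 1. Hence the projection equals $\pi$ for every $\lambda>0$, so the left side of \Cref{eq:stationary-point} is $0$, and $\pi$ is a $0$-stationary point.

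\textbf{Step 3: conclude.} Apply \Cref{lemma:social-welfare} with $\epsilon=0$. It gives a payment plan $\bp$, namely $p_{i\to j}=\EE_{\ba\sim\pi}\sbr{\cU_i(\ba)}-\min_{\hat a_j}\EE_{\ba\sim\pi}\sbr{\cU_i(\hat a_j,\ba_{-j})}$, such that $(\pi,\bp)$ is a $0$-approximate SETE. Since $3\cdot 0=0$, \Cref{eq:approximate-stability} coincides with \Cref{eq:stable}, and \Cref{eq:compatible} already holds, so $(\pi,\bp)$ is an exact SETE.

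There is no serious obstacle. The only point needing care is the projection characterization in Step 2, which is standard for closed convex sets. One could also sidestep stationarity entirely and rerun the proof of \Cref{lemma:social-welfare} directly: at a global maximizer, unilateral deviations cannot increase $SW$, which gives \Cref{eq:sw-1} with $\epsilon=0$ immediately.
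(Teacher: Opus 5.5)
Your proposal is correct and follows the paper's argument: the paper also notes that any maximizer of $SW$ is a $0$-stationary point and then applies \Cref{lemma:social-welfare} with $\epsilon=0$. Your Steps 1--2 supply the standard first-order-optimality and projection fixed-point justification that the paper states without proof.
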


The claim follows because any maximizer of $SW$ is a $0$-stationary point of $SW$. Applying \Cref{lemma:social-welfare} with $\epsilon=0$ yields the result. Moreover, \Cref{corollary:sw-optimal-eq} provides an alternative route to establishing the existence of \Cref{def:SETE} in polymatrix games, independent of Nash equilibrium: since $\bigtimes_{i=1}^N \Delta^{\cA_i}$ is compact and $SW$ is continuous, $SW$ attains a maximizer by the Weierstrass theorem. This parallels the classic minimax-based existence proof of correlated equilibrium \citep{hart1989existence-phi-zero-sum}, which also suggests an efficient computation method (via solving the corresponding minimax problem). In the same spirit, our procedure in \Cref{eq:update-rule-potential-game} operationalizes this existence proof by explicitly searching for an (approximate) stationary point of the social welfare function. 

Finally, we present an example illustrating \Cref{corollary:sw-optimal-eq}: a socially optimal strategy profile admits a supporting payment plan $\bp$, and hence can be realized as a SETE.

\begin{example}
\label{example:PD-SETE}
In the Prisoner's Dilemma in \Cref{table:PD}, the unique NE is $(D,D)$ since $D$ strictly dominates $C$, yielding utility $0.2$ for both players. Now consider the deterministic profile $\pi_1(C)=\pi_2(C)=1$ together with the symmetric payment plan $p_{1\to 2}=p_{2\to 1}=0.4$. Under $(C,C)$, each player's net utility equals $0.6-0.4+0.4=0.6$, so both players strictly improve relative to $(D,D)$. Moreover, since $0.4 \ge 1.0-0.6$, no player has an incentive to deviate by \Cref{eq:stable}. Finally, $0.6-0.0 \ge 0.4$, so the promised transfers are individually rational by \Cref{eq:compatible}.
\end{example}

\subsection{Efficient Computation in Contrast to {\sf PPAD}-hardness of NE}

In \citet{chen2006settling-ppad,daskalakis2009complexity-PPAD}, it is shown that computing an $\epsilon$-approximate NE is {\tt PPAD}-hard, even for two-player normal-form games and constant $\epsilon$. In contrast, we show that in polymatrix games, which strictly generalize two-player normal-form games as the special case of a single edge, one can compute an \Cref{def:approximate-SETE} in polynomial time for any fixed $\epsilon$.

By \Cref{lemma:social-welfare}, it suffices to compute an approximate stationary point $\pi$ of the social welfare function $SW\colon\cA\to [0,N]$, where $SW(\ba)\coloneqq \sum_{j=1}^N \cU_j(\ba)$, and then construct a payment plan $\bp$ such that $(\pi,\bp)$ is an approximate \Cref{def:SETE} in the original game. We compute such an approximate stationary point by applying projected-gradient ascent to each player's strategy, using the social welfare as a common potential. We seek a stationary point rather than a maximizer because maximizing social welfare in polymatrix games is {\sf NP}-hard (see \Cref{lemma:polymatrix-optimal-sw-hard}).

Specifically, at each timestep $t\geq 1$, each player $i\in[N]$ updates $\overbar \pi_i^{(t)}$ using a payoff vector $\overbar\bu_i^{(t)}\in\RR^{\cA_i}$, whose coordinates are given by $\overbar u_i^{(t)}(a_i)$ for $a_i\in\cA_i$.
\begin{equation}
\begin{split}
    \overbar \pi_i^{(t+1)}=&\Proj{\Delta^{\cA_i}}{\overbar \pi_i^{(t)} + \eta\overbar\bu_i^{(t)}}\\
    \overbar u_i^{(t)}(a_i)=& \sum_{j=1}^N \EE_{\ba'\sim\overbar \pi^{(t)}}\sbr{\cU_j(a_i, \ba_{-i}')}\\
    \overbar p_{i\to j}^{(t)} =& \EE_{\ba\sim\overbar \pi^{(t)}}\sbr{\cU_i\rbr{\ba}} - \min_{\hat a_j\in\cA_j} \EE_{\ba\sim\overbar \pi^{(t)}}\sbr{\cU_i\rbr{\hat a_j, \ba_{-j}}},
\end{split}
\label{eq:update-rule-potential-game}
\end{equation}
Here, $\Proj{\Delta^{\cA_i}}{\cdot}$ denotes the Euclidean projection onto $\Delta^{\cA_i}$, and $\eta>0$ is the learning rate. We initialize $\overbar \pi_i^{(1)}$ to be the uniform distribution over $\cA_i$ for every $i\in[N]$.

Next, we establish the convergence rate of \Cref{eq:update-rule-potential-game} toward a \Cref{def:SETE}.
\begin{theorem}
\label{theorem:convergence-potential-game}
Consider \Cref{eq:update-rule-potential-game}. For any $T\geq 1$ and $\eta\leq \frac{1}{2NA}$, there exists $t^*\in[T]$ such that $(\overbar \pi^{(t^*)}, \overbar\bp^{(t^*)})$ satisfies \Cref{eq:compatible} and, for every $i\in[N]$,
\begin{align}
    \max_{\hat a_i\in\cA_i}\,
    \EE_{\ba\sim\overbar \pi^{(t^*)}}\sbr{\cU_i\rbr{\hat a_i,\ba_{-i}}}
    - \rbr{
        \EE_{\ba\sim\overbar \pi^{(t^*)}}\sbr{\cU_i\rbr{\ba}}
        + \sum_{\substack{j=1,\\ j\neq i}}^N \overbar p_{j\to i}^{(t^*)}
    }
    \leq 3\sqrt{\frac{2N}{\eta T}}.
\end{align}
\end{theorem}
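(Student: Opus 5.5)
The approach is to treat \Cref{eq:update-rule-potential-game} as projected gradient ascent on the smooth function $SW$ over the product of simplices. First, $\overbar\bu_i^{(t)}$ is exactly the partial gradient $\nabla_{\pi_i} SW(\overbar\pi^{(t)})$, because $SW$ is multilinear in $(\pi_1,\dots,\pi_N)$. Also, projection onto a product of simplices decouples across players. Hence the joint update is $\overbar\pi^{(t+1)}=\Proj{\bigtimes_i\Delta^{\cA_i}}{\overbar\pi^{(t)}+\eta\nabla SW(\overbar\pi^{(t)})}$. Once we have an iterate $t^*$ with small $\sum_j\nbr{\overbar\pi_j^{(t^*+1)}-\overbar\pi_j^{(t^*)}}^2$, \Cref{lemma:variation-to-exploitability} applies with $\lambda=\eta\le 1/(2NA)\le 1/(NA)$. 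It bounds the social-welfare exploitability by $\frac{3}{\eta}\sqrt{\sum_j\nbr{\overbar\pi^{(t^*+1)}_j-\overbar\pi^{(t^*)}_j}^2}$. The chain of inequalities in the proof of \Cref{lemma:social-welfare}, steps (i) and (ii), then converts this into the stated stability bound, using the payments $\overbar p^{(t^*)}_{i\to j}$, which coincide with the payments chosen there. Compatibility holds by \Cref{lemma:decomposition-compatible}, since each $\overbar p^{(t^*)}_{i\to j}$ meets \Cref{eq:independent-compatible} with equality and is nonnegative.

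The main step is a sufficient-ascent inequality. I would show that $SW$ has $L$-Lipschitz gradient on the product of simplices with $L\le NA$ or similar. Each entry of $\nabla_{\pi_i}SW$ lies in $[0,N]$ and is multilinear in the other players' strategies. Changing $\pi_k$ moves it by at most $N\nbr{\pi_k-\pi_k'}_1\le N\sqrt{A}\nbr{\pi_k-\pi_k'}$. Summing over the coordinates $a_i$ and over players, and bounding via Cauchy--Schwarz, should give $L=O(NA)$. I expect this to be the main obstacle: the constant has to be tuned so that $\eta\le 1/(2NA)$ gives $\eta L\le 1/2$.

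With smoothness in hand, the standard projected-gradient argument goes as follows. The projection optimality condition gives $\inner{\nabla SW(\pi^{(t)})}{\pi^{(t+1)}-\pi^{(t)}}\ge \frac1\eta\nbr{\pi^{(t+1)}-\pi^{(t)}}^2$. The descent lemma then yields
\[
SW(\pi^{(t+1)})\ge SW(\pi^{(t)})+\Big(\tfrac1\eta-\tfrac L2\Big)\nbr{\pi^{(t+1)}-\pi^{(t)}}^2\ge SW(\pi^{(t)})+\tfrac{1}{2\eta}\nbr{\pi^{(t+1)}-\pi^{(t)}}^2 .
\]
Telescoping over $t=1,\dots,T$ and using $SW\in[0,N]$ gives $\sum_{t=1}^T\nbr{\pi^{(t+1)}-\pi^{(t)}}^2\le 2\eta N$. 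Choose $t^*$ to be the minimizer of this squared distance, so that $\nbr{\pi^{(t^*+1)}-\pi^{(t^*)}}^2\le 2\eta N/T$.

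Plugging this into \Cref{lemma:variation-to-exploitability} gives a bound of $\frac{3}{\eta}\sqrt{2\eta N/T}=3\sqrt{2N/(\eta T)}$, which is exactly the claimed rate. Note that $t^*\in[T]$ only needs the iterate $\pi^{(t^*+1)}$ to exist, which is fine.
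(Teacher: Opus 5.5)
Your architecture matches the paper's. You identify the update direction with $\nabla_{\pi_i} SW$, get sufficient ascent from smoothness plus the projection optimality condition (the paper's three-point step), telescope using $SW\in[0,N]$, and take the best iterate. You then finish with \Cref{lemma:variation-to-exploitability} at $\lambda=\eta$ and the payment conversion; your steps (i)--(ii) are what the paper packages as \Cref{lemma:sw-to-SETE-gap}. The one step that fails as sketched is the one you flagged: the smoothness constant.

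Your sensitivity estimate says a coordinate of $\nabla_{\pi_i}SW$ moves by at most $N\nbr{\pi_k-\pi_k'}_1$ when $\pi_k$ changes. It uses only multilinearity and $SW\in[0,N]$. Pushed through Cauchy--Schwarz, it gives $\nbr{\nabla SW(\pi)-\nabla SW(\pi')}^2\le N^4A^2\nbr{\pi-\pi'}^2$, i.e.\ $L$ of order $N^2A$, not $NA$. A sharper count cannot fix this, because those two facts hold in every game, and general games really can be that non-smooth. For example, if every player's utility is the indicator that all players choose a designated action $1$, then $SW=N\prod_k\pi_k(1)$, and its gradient-Lipschitz constant near the pure profile is about $N(N-1)/\sqrt{2}$. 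With $L\sim N^2A$, your ascent inequality needs $\eta\lesssim 1/(N^2A)$, so the stated range $\eta\le 1/(2NA)$ is not covered.

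The missing ingredient is the polymatrix structure:
\[
\frac{\partial SW}{\partial\pi_i(a_i)}=\sum_{j\in\cN(i)}\sum_{a_j}\cU_{i,j}(a_i,a_j)\pi_j(a_j)+\sum_{k:\,i\in\cN(k)}\sum_{a_k}\cU_{k,i}(a_k,a_i)\pi_k(a_k).
\]
So this coordinate depends on $\pi_k$ only through the two edge tables $\cU_{i,k},\cU_{k,i}$, whose entries lie in $[0,1]$, and it moves by at most $2\nbr{\pi_k-\pi_k'}_1$. Then $\nbr{\nabla SW(\pi')-\nabla SW(\pi)}^2\le 4\sum_i|\cA_i|\bigl(\sum_{k\neq i}\nbr{\pi_k'-\pi_k}_1\bigr)^2\le 4N^2A^2\nbr{\pi'-\pi}^2$, i.e.\ $L=2NA$. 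This is \Cref{lemma:smoothness} with $\beta=NA$.

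Your displayed inequality $1/\eta-L/2\ge 1/(2\eta)$ needs only $\eta L\le 1$, not the $\eta L\le 1/2$ you aimed for. With $L=2NA$ and $\eta\le 1/(2NA)$ it holds exactly; this is the paper's $1/\eta-\beta\ge 1/(2\eta)$. With this fix, the rest of your argument gives $3\sqrt{2N/(\eta T)}$ as claimed.
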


\Cref{theorem:convergence-potential-game} implies that, for any fixed $\epsilon>0$, choosing
\begin{align*}
    T = \frac{18N}{\eta\epsilon^2}
\end{align*}
guarantees the existence of an iterate $t^*\in[T]$ such that $(\overbar \pi^{(t^*)}, \overbar\bp^{(t^*)})$ satisfies the $\epsilon$-approximate stability requirement \Cref{eq:approximate-stability} in \Cref{def:approximate-SETE}. Moreover, \Cref{eq:compatible} holds automatically by the definition of $\overbar\bp^{(t)}$. Consequently, with internal transfers enabled, one can compute an $\epsilon$-approximate product-form equilibrium in time polynomial in the game size for any fixed $\epsilon$, rather than facing \texttt{PPAD}-hardness.\footnote{This does not yield a polynomial-time algorithm in the bit complexity of the input, since the iteration complexity scales as $\cO\rbr{\frac{1}{\epsilon^2}}$ instead of $\text{Poly}\rbr{\log\frac{1}{\epsilon}}$.}

\begin{proof}
We begin by recalling the standard notion of smoothness. A differentiable function $f(\bx)\colon \cX\to\RR$ is called \emph{$2\beta$-smooth} if, for any $\bx,\bx'\in\cX$,
\begin{align*}
\abr{f\rbr{\bx'} - f\rbr{\bx} - \inner{\nabla f\rbr{\bx}}{\bx' - \bx}}
\leq
\beta \nbr{\bx' - \bx}^2.
\numberthis[$2\beta$-smoothness]{eq:smoothness}
\end{align*}

In \Cref{lemma:smoothness}, we show that $SW$ is $2\beta$-smooth with $\beta = NA$, where $A \coloneqq \max_{i\in[N]} \abr{\cA_i}$.

\begin{lemma}
\label{lemma:smoothness}
    The social welfare function $SW(\pi)\coloneqq \sum_{i=1}^N \EE_{\ba\sim\pi}\sbr{\cU_i(\ba)}$ is $2\beta$-smooth with respect to $\pi$. Formally, for any strategy profile $\pi, \pi'\in\bigtimes_{i=1}^N \Delta^{\cA_i}$,
    \begin{align*}
        &\nbr{\nabla SW(\pi') - \nabla SW(\pi)} \leq 2\beta\nbr{\pi' - \pi}\\
        &\abr{SW(\pi') - SW(\pi) - \inner{\nabla SW(\pi)}{\pi' - \pi}}\leq \beta \sum_{i=1}^N \nbr{\pi_i' - \pi_i}^2,
    \end{align*}
    where $\beta = NA$ and $A\coloneqq \max_{i\in[N]}|\cA_i|$ is the size of the largest action set.
\end{lemma}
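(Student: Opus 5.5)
The plan is to exploit multilinearity of $SW$ in the blocks $(\pi_1,\dots,\pi_N)$. First, I will prove the gradient-Lipschitz bound $\nbr{\nabla SW(\pi')-\nabla SW(\pi)}\le 2\beta\nbr{\pi'-\pi}$. Then I will derive the second inequality from it by the standard integral form of Taylor's theorem. Writing $\Delta\coloneqq\pi'-\pi$ and $f(s)\coloneqq SW(\pi+s\Delta)$, we have
\begin{align*}
SW(\pi')-SW(\pi)-\inner{\nabla SW(\pi)}{\Delta}=\int_0^1\inner{\nabla SW(\pi+s\Delta)-\nabla SW(\pi)}{\Delta}\,ds ,
\end{align*}
whose absolute value is at most $\int_0^1 2\beta s\nbr{\Delta}^2ds=\beta\sum_{i}\nbr{\pi_i'-\pi_i}^2$. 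The segment stays in $\bigtimes_i\Delta^{\cA_i}$ by convexity, so the Lipschitz bound applies along it. This reduces everything to the first claim.

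For the first claim, the block of the gradient for player $i$ is $g_i(\pi)(a_i)=\EE_{\ba_{-i}\sim\pi_{-i}}\sbr{SW(a_i,\ba_{-i})}$. It does not depend on $\pi_i$, and its entries lie in $[0,N]$ because each $\cU_j\in[0,1]$. I will write $g_i(\pi')-g_i(\pi)$ as a telescoping sum in which the blocks $k\neq i$ are switched from $\pi_k$ to $\pi_k'$ one at a time. Each telescoping term has coordinate $a_i$ equal to $\sum_{a_k}(\pi_k'-\pi_k)(a_k)\,h(a_i,a_k)$ for some $h$ with values in $[0,N]$. Since $\sum_{a_k}(\pi'_k-\pi_k)(a_k)=0$, I can replace $h$ by $h-N/2$ without changing this sum. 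That bounds each coordinate by $\tfrac N2\nbr{\pi_k'-\pi_k}_1\le\tfrac N2\sqrt{A}\,\nbr{\pi_k'-\pi_k}$. Equivalently, I can phrase this as bounding the operator norm of the Hessian of $SW$ restricted to the tangent space $\{\Delta:\sum_{a_k}\Delta_k(a_k)=0\ \forall k\}$. The off-diagonal blocks $H_{ik}(a_i,a_k)=\EE\sbr{SW(a_i,a_k,\ba_{-\{i,k\}})}$ can be centered to entries in $[-N/2,N/2]$, and the diagonal blocks vanish. I then aim to show $\nbr{H\Delta}\le 2NA\nbr{\Delta}$ for tangent $\Delta$. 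The gradient bound then follows from $\nabla SW(\pi')-\nabla SW(\pi)=\int_0^1 H\,\Delta\,ds$, because $H$ depends on the point but the entry bound holds uniformly.

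The main obstacle is the constant. Naively summing the per-block bounds over $k\ne i$ and then over $i$, via Cauchy--Schwarz, gives $\nbr{H\Delta}\lesssim \tfrac{N}{2}\cdot\sqrt{A}\cdot\sqrt{A}\cdot N\,\nbr{\Delta}$, which is of order $N^2A$ rather than $NA$. So the crude entrywise bound $|SW|\le N$ must be combined with something sharper. My first attempt will be to decompose $SW=\sum_j\cU_j$ and bound each $\cU_j$'s Hessian contribution separately using only $\cU_j\in[0,1]$. I will then try to gain a factor $N$ by using the simplex constraint $\nbr{\Delta_k}_1\le 2$, trading one $\ell_2$ factor for a constant, so that the bilinear form $\sum_{i\ne k}\Delta_i^\top H_{ik}\Delta_k$ is bounded by $\tfrac12\cdot$(number of blocks)$\cdot A\sum_k\nbr{\Delta_k}^2$. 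If this bookkeeping cannot be closed at exactly $2NA$, I would record the argument with whatever $\mathrm{poly}(N,A)$ constant it yields. The downstream $\cO(1/\sqrt{T})$ analysis of \Cref{eq:update-rule-potential-game} only needs $\eta\le 1/(2\beta)$ for the resulting $\beta$.
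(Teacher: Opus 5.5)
Your reduction of the second inequality to the gradient bound, via the integral form of Taylor's theorem along the segment $\pi+s(\pi'-\pi)$, is exactly the paper's argument and is fine. The gap is in the first inequality. You treat $SW$ as an arbitrary multilinear function with values in $[0,N]$, and from that information alone the constant $2NA$ cannot be reached, because the lemma is false for general finite games.

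Take $N$ two-action players who all receive $\ind\rbr{\text{at least two players play action }1}$. At the profile $\pi$ where everyone plays action $0$, we have $SW(\pi)=0$ and $\nabla SW(\pi)=0$. Letting every player put probability $d$ on action $1$ gives $SW(\pi')\ge N\binom{N}{2}d^2(1-d)^{N-2}$, whereas $\beta\sum_i\nbr{\pi_i'-\pi_i}^2=4N^2d^2$. So the second inequality fails once $N\ge 10$ and $d$ is small, and the $N^2$ scaling you obtained cannot be improved without further structure.

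Neither of your proposed repairs gets around this. Splitting $SW=\sum_j\cU_j$ gains nothing when every $\cU_j$ may depend on every pair $(a_i,a_k)$. Replacing $\nbr{\Delta_k}_1$ by the constant $2$ produces a bound of degree one in $\Delta$, which cannot give an estimate of the form $C\nbr{\Delta}^2$ as $\Delta\to 0$. Settling for a $\mathrm{poly}(N,A)$ constant would prove a weaker lemma and change the step-size conditions $\eta\le 1/(2NA)$ used downstream.

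The missing idea is the polymatrix structure, which is the standing hypothesis of this section and which the paper's proof uses directly. Differentiating the bilinear form $SW(\pi)=\sum_{j}\sum_{k\in\cN(j)}\sum_{a_j,a_k}\cU_{j,k}(a_j,a_k)\pi_j(a_j)\pi_k(a_k)$ gives
\begin{align*}
\frac{\partial SW(\pi)}{\partial \pi_i(a_i)}=\sum_{j\in\cN(i)}\sum_{a_j}\cU_{i,j}(a_i,a_j)\pi_j(a_j)+\sum_{k:\,i\in\cN(k)}\sum_{a_k}\cU_{k,i}(a_k,a_i)\pi_k(a_k).
\end{align*}
Each $\pi_k$ with $k\ne i$ enters at most twice, with coefficients in $[0,1]$. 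So each coordinate moves by at most $2\sum_{k\ne i}\nbr{\pi_k'-\pi_k}_1$: the coefficient is $2$, not $N$. Two applications of Cauchy--Schwarz then give
\begin{align*}
\nbr{\nabla SW(\pi')-\nabla SW(\pi)}^2\le 4\sum_i|\cA_i|\rbr{\sum_{k\ne i}\nbr{\pi_k'-\pi_k}_1}^2\le 4NA\sum_i\sum_{k\ne i}|\cA_k|\nbr{\pi_k'-\pi_k}^2\le 4N^2A^2\nbr{\pi'-\pi}^2,
\end{align*}
with no centering or tangent-space argument needed.

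If you adopt this route, use the bilinear-form gradient rather than your $g_i(\pi)(a_i)=\EE_{\ba_{-i}\sim\pi_{-i}}\sbr{SW(a_i,\ba_{-i})}$. The latter adds to block $i$ the $a_i$-independent vector $\sum_{j\ne i}\sum_{l\in\cN(j)\setminus\cbr{i}}\EE_{\ba\sim\pi}\sbr{\cU_{j,l}(a_j,a_l)}\,\one$, whose variation is not controlled by the constant $2$. For example, in a star whose $N-1$ leaves each earn $\ind\rbr{a_{\mathrm{center}}=1}$, perturbing only the center's strategy gives a ratio of $(N-1)^{3/2}$, which exceeds $2NA=4N$ for large $N$. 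Such per-block constant shifts are harmless for the second inequality and for every downstream use, since those only pair the gradient with differences of strategies or feed it into simplex projections. They do, however, break the first inequality as literally stated.
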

The proof is postponed to \Cref{section:auxiliary-convergence-sw}. By the definition of smoothness, we have
\begin{equation}
\begin{split}
    SW(\overbar \pi^{(t+1)})\geq& SW(\overbar \pi^{(t)}) + \inner{\nabla SW(\overbar \pi^{(t)})}{\overbar \pi^{(t+1)} - \overbar \pi^{(t)}} - \beta\nbr{\overbar \pi^{(t+1)} - \overbar \pi^{(t)}}^2\\
    \overset{(i)}{=}& SW(\overbar \pi^{(t)}) + \sum_{i=1}^N \inner{\overbar \bu_i^{(t)}}{\overbar \pi_i^{(t+1)} - \overbar \pi_i^{(t)}} - \beta\sum_{i=1}^N \nbr{\overbar \pi_i^{(t+1)} - \overbar \pi_i^{(t)}}^2.
\end{split}
\label{eq:smoothness-sw}
\end{equation}
$(i)$ uses the fact that $\nabla_i SW(\overbar \pi^{(t)})=\overbar \bu_i^{(t)}$ (the formal proof can be found in \Cref{lemma:sw-grad-u}).

Next, we introduce the following lemma to bound $\inner{\overbar \bu_i^{(t)}}{\overbar \pi_i^{(t+1)} - \overbar \pi_i^{(t)}}$.

\begin{lemma}[Three-point Inequality]
\label{lemma:three-point}
    Let $\cX \subseteq \RR^n$ be a nonempty closed convex set. For any $\bx^{(0)},\bx^{(2)}\in\cX$, any $\bu\in \RR^n$, and any $\lambda>0$, define $\bx^{(1)}=\Proj{\cX}{\bx^{(0)} + \lambda \bu}$. Then,
    \begin{align*}
        \lambda\inner{\bu}{\bx^{(2)} - \bx^{(1)}}\leq \frac{1}{2}\rbr{\nbr{\bx^{(2)} - \bx^{(0)}}^2 - \nbr{\bx^{(2)} - \bx^{(1)}}^2 - \nbr{\bx^{(1)} - \bx^{(0)}}^2}.
    \end{align*}
\end{lemma}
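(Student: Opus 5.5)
We will prove the Three-point Inequality directly from the variational characterization of the Euclidean projection in \Cref{eq:projection}, followed by an algebraic polarization identity. There is no need to invoke any earlier game-theoretic result; the statement is purely about convex geometry.

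\textbf{Step 1 (optimality condition).} Since $\bx^{(1)}$ minimizes the strongly convex function $\bx'\mapsto \frac12\nbr{\bx'-(\bx^{(0)}+\lambda\bu)}^2$ over the closed convex set $\cX$, the first-order optimality condition gives, for every $\bx\in\cX$,
\begin{align*}
    \inner{\bx^{(1)} - \bx^{(0)} - \lambda\bu}{\bx - \bx^{(1)}} \geq 0 .
\end{align*}
This is the standard obtuse-angle characterization of projections. If a self-contained argument is desired, one can derive it by perturbing along $\bx^{(1)}+s(\bx-\bx^{(1)})\in\cX$ for $s\in[0,1]$ and letting $s\downarrow 0$. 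We apply it with $\bx=\bx^{(2)}$ and rearrange to obtain
\begin{align*}
    \lambda\inner{\bu}{\bx^{(2)}-\bx^{(1)}} \leq \inner{\bx^{(1)}-\bx^{(0)}}{\bx^{(2)}-\bx^{(1)}}.
\end{align*}

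\textbf{Step 2 (polarization).} Set $\ba=\bx^{(1)}-\bx^{(0)}$ and $\bb=\bx^{(2)}-\bx^{(1)}$, so that $\ba+\bb=\bx^{(2)}-\bx^{(0)}$. The identity $\nbr{\ba+\bb}^2=\nbr{\ba}^2+\nbr{\bb}^2+2\inner{\ba}{\bb}$ yields
\begin{align*}
    \inner{\ba}{\bb}=\frac12\rbr{\nbr{\bx^{(2)}-\bx^{(0)}}^2-\nbr{\bx^{(2)}-\bx^{(1)}}^2-\nbr{\bx^{(1)}-\bx^{(0)}}^2}.
\end{align*}
Substituting this into Step 1 gives exactly the claimed bound.

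The only step requiring any care is Step 1, namely justifying the variational inequality for projections onto a general closed convex set. Even there, the expansion of $\nbr{\bx^{(1)}+s(\bx-\bx^{(1)})-\by}^2\geq\nbr{\bx^{(1)}-\by}^2$, with $\by=\bx^{(0)}+\lambda\bu$, divided by $s$ and sent to zero, settles it immediately. Existence and uniqueness of the projection follow from strong convexity and closedness of $\cX$.
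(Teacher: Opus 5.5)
Your proof is correct and follows the same route as the paper: the first-order optimality (obtuse-angle) condition for the projection, applied at $\bx^{(2)}$, followed by the polarization identity for $\inner{\bx^{(1)}-\bx^{(0)}}{\bx^{(2)}-\bx^{(1)}}$. Your perturbation argument additionally makes the variational inequality self-contained, which the paper simply cites as first-order optimality.
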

The proof is postponed to \Cref{section:auxiliary-convergence-sw}. A more general version beyond Euclidean distance can be found in \citet[Lemma 10]{DBLP:conf/iclr/WeiLZL21-haipeng-normal-form-game} and \citet[Lemma 3.0.3]{liu2025solving-thesis}.

By taking 
\begin{align*}
    \cX=\Delta^{\cA_i},\quad \bx^{(1)}=\overbar \pi_i^{(t+1)},\quad \bx^{(2)} = \bx^{(0)} = \overbar \pi_i^{(t)},\quad \bu = \overbar \bu_i^{(t)},\quad \lambda = \eta
\end{align*}
in \Cref{lemma:three-point}, we have
\begin{align*}
    \eta\inner{\overbar \bu_i^{(t)}}{\overbar \pi_i^{(t+1)} - \overbar \pi_i^{(t)}}\geq& \frac{1}{2}\rbr{\nbr{\overbar \pi_i^{(t)} - \overbar \pi_i^{(t+1)}}^2 + \nbr{\overbar \pi_i^{(t+1)} - \overbar \pi_i^{(t)}}^2} = \nbr{\overbar \pi_i^{(t+1)} - \overbar \pi_i^{(t)}}^2.
\end{align*}
By substituting into \Cref{eq:smoothness-sw}, we have
\begin{align*}
    SW(\overbar \pi^{(t+1)})\geq& SW(\overbar \pi^{(t)}) + \frac{1}{\eta}\sum_{i=1}^N \nbr{\overbar \pi_i^{(t+1)} - \overbar \pi_i^{(t)}}^2 - \beta\sum_{i=1}^N \nbr{\overbar \pi_i^{(t+1)} - \overbar \pi_i^{(t)}}^2\\
    =&SW(\overbar \pi^{(t)}) + \rbr{\frac{1}{\eta} - \beta}\sum_{i=1}^N \nbr{\overbar \pi_i^{(t+1)} - \overbar \pi_i^{(t)}}^2.
\end{align*}
By telescoping and rearranging the terms, we have
\begin{align*}
    \rbr{\frac{1}{\eta} - \beta}\sum_{t=1}^{T} \sum_{i=1}^N \nbr{\overbar \pi_i^{(t+1)} - \overbar \pi_i^{(t)}}^2 \leq SW(\overbar \pi^{(T+1)}) - SW(\overbar \pi^{(1)})\overset{(i)}{\leq} N.
\end{align*}
$(i)$ is because the utility function $\cU_i$ of every player is bounded within $[0, 1]$. Hence, when $\eta\leq \frac{1}{2\beta}$, there must exist one timestep $t^*\in [T]$ such that
\begin{align*}
    \sum_{i=1}^N \nbr{\overbar \pi_i^{(t^*+1)} - \overbar \pi_i^{(t^*)}}^2 \leq \frac{2N}{T} \eta.
\end{align*}
By \Cref{lemma:variation-to-exploitability},
\begin{equation*}
    \max_{\substack{i\in [N],\\\hat a_i\in\cA_i}}~~ \sum_{j=1}^N \EE_{\ba\sim\overbar \pi^{(t^*)}} \sbr{\cU_j\rbr{\hat a_i, \ba_{-i}}} - SW(\overbar \pi^{(t^*)})\leq 3\sqrt{\frac{2N}{\eta T}}.
\end{equation*}
Next, we introduce \Cref{lemma:sw-to-SETE-gap} to bound the accuracy of \Cref{def:approximate-SETE}.
\begin{lemma}
\label{lemma:sw-to-SETE-gap}
    Consider \Cref{eq:update-rule-potential-game}. For any timestep $t\in [T]$, we have
    \begin{align*}
        &\max_{\substack{i\in [N],\\\hat a_i\in\cA_i}}~ \EE_{\ba\sim\overbar \pi^{(t)}} \sbr{\cU_i\rbr{\hat a_i, \ba_{-i}}} -\rbr{ \EE_{\ba\sim\overbar \pi^{(t)}} \sbr{\cU_i\rbr{\ba}} + \sum_{\substack{j=1,\\j\neq i}}^N \overbar p_{j\to i}^{(t)} } \\
        \leq& \max_{\substack{i\in [N],\\\hat a_i\in\cA_i}}~~ \sum_{j=1}^N \EE_{\ba\sim\overbar \pi^{(t)}} \sbr{\cU_j\rbr{\hat a_i, \ba_{-i}}} - SW(\overbar \pi^{(t)}).
    \end{align*}
\end{lemma}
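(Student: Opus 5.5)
The plan is to argue player by player, reusing the chain of inequalities already established in the proof of \Cref{lemma:social-welfare}. The key observation is that the payments $\overbar p_{j\to i}^{(t)}$ in \Cref{eq:update-rule-potential-game} are exactly the payments chosen there, evaluated at $\pi=\overbar\pi^{(t)}$. So the bound holds at every iterate, with no reference to the dynamics.

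Fix $t$ and a player $i$, and write $\pi=\overbar\pi^{(t)}$. Substituting the definition of $\overbar p_{j\to i}^{(t)}$, the received-payment side becomes
\begin{align*}
\EE_{\ba\sim\pi}\sbr{\cU_i(\ba)}+\sum_{j\neq i}\overbar p_{j\to i}^{(t)}
= SW(\pi)-\sum_{j\neq i}\min_{\hat a_i\in\cA_i}\EE_{\ba\sim\pi}\sbr{\cU_j(\hat a_i,\ba_{-i})}.
\end{align*}
We then need, for each $i$,
\begin{align*}
\max_{\hat a_i}\EE_{\ba\sim\pi}\sbr{\cU_i(\hat a_i,\ba_{-i})}+\sum_{j\neq i}\min_{\hat a_i'}\EE_{\ba\sim\pi}\sbr{\cU_j(\hat a_i',\ba_{-i})}
\le \max_{\hat a_i}\sum_{j=1}^N \EE_{\ba\sim\pi}\sbr{\cU_j(\hat a_i,\ba_{-i})}.
\end{align*}
This follows by the elementary inequality $\max_x(f+g)\ge \max_x f+\min_x g$, exactly as in step $(ii)$ of the proof of \Cref{lemma:social-welfare}. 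Take $f(\hat a_i)=\EE_{\ba\sim\pi}[\cU_i(\hat a_i,\ba_{-i})]$ and $g(\hat a_i)=\sum_{j\neq i}\EE_{\ba\sim\pi}[\cU_j(\hat a_i,\ba_{-i})]$, and evaluate $f+g$ at a maximizer of $f$. We also use that the sum of minima is at most the minimum of the sum, which holds trivially.

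Rearranging gives, for each $i$,
\begin{align*}
\max_{\hat a_i}\EE_{\ba\sim\pi}\sbr{\cU_i(\hat a_i,\ba_{-i})}-\Bigl(\EE_{\ba\sim\pi}\sbr{\cU_i(\ba)}+\sum_{j\neq i}\overbar p_{j\to i}^{(t)}\Bigr)
\le \max_{\hat a_i}\sum_{j}\EE_{\ba\sim\pi}\sbr{\cU_j(\hat a_i,\ba_{-i})}-SW(\pi).
\end{align*}
The right-hand side is at most the corresponding maximum over all $i$. Taking the maximum over $i$ on the left then yields the claim.

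There is no real obstacle. The only care needed is the direction of the min/max manipulation, and the fact that the identity for the payment sum needs no polymatrix structure, so the lemma holds for any game.
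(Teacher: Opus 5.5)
Your proposal is correct and follows essentially the paper's argument. You substitute the payment definition, bound each $j$-term's minimum over $\hat a_i'$ by its value at the deviation being evaluated, and then take maxima over $\hat a_i$ and $i$; the paper does this pointwise for an arbitrary $\hat a_i$ instead of at a maximizer of $f$, which amounts to the same step.
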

The proof is postponed to \Cref{section:auxiliary-convergence-sw}. By \Cref{lemma:sw-to-SETE-gap} and the discussion above, we conclude the proof. \qedhere
\end{proof}

\section{Learning in Polymatrix Games}
\label{section:decentralized-learning}

\begin{algorithm}[h]
    \SetAlgoNoLine
    \caption{Learning Dynamics}
    \label{alg:learning}

    \KwIn{Learning rate $\eta$, exploration factor $\gamma\in(0,1]$, total number of timesteps $T$}

    \For{$t=1,2,\dots,T$}{
    
        \For{$i=1,2,\dots,N$}{
        
            Sample $a_i^{(t)}\sim \pi_i^{e,(t)}$ to explore, where
            \begin{align*}
                \pi_i^{e,(t)} = (1-\gamma) \pi_i^{(t)} + \frac{\gamma}{|\cA_i|}\one
            \end{align*}

            \For{$j\in \cN(i)$}{

                Compute player $i$'s payment to player $j$ at timestep $t$ for taking action $a_j^{(t)}$:
                \begin{minipage}{0.874\linewidth}
                \begin{align}
                    \tilde p_{i\to j}^{(t)} \gets \cU_{i, j}\rbr{a_i^{(t)}, a_j^{(t)}}  - \min_{\hat a_j\in\cA_j} \cU_{i, j}\rbr{a_i^{(t)}, \hat a_j} \label{eq:inst-payment}
                \end{align}
                \end{minipage}

                Send $\tilde p_{i\to j}^{(t)}$ to player $j$

            }
        }

        \For{$i=1,2,\dots,N$}{
            Compute the estimated utility vector $\tilde \bu_i^{(t)}$ as
            \begin{minipage}{0.912\linewidth}
            \begin{align}
                \tilde u_i^{(t)}(a_i)\gets\begin{cases}
                    \frac{1}{\pi_i^{e,(t)}(a_i^{(t)})}\sum_{j\in\cN(i)} \rbr{\cU_{i, j}\rbr{a_i^{(t)}, a_j^{(t)}} + \tilde p_{j\to i}^{(t)}} & a_i = a_i^{(t)}\\
                    0 & a_i \neq a_i^{(t)}
                \end{cases}
                \label{eq:importance-sampling}
            \end{align}
            \end{minipage}

            Update the strategy $\pi_i^{(t+1)}$ by
            \begin{minipage}{0.912\linewidth}
            \begin{align}
                \pi_i^{(t+1)} = \Proj{\Delta^{\cA_i}}{\pi_i^{(t)} + \eta\tilde\bu_i^{(t)}} \label{eq:PGA}
            \end{align}
            \end{minipage}
        }
    }
\end{algorithm}

In this section, we propose a decentralized learning procedure for \Cref{def:SETE}, as shown in \Cref{alg:learning}.
At each timestep $t$, each player $i$ draws an action $a_i^{(t)}$ from an exploration policy $\pi_i^{e,(t)}$ obtained by mixing the current strategy with the uniform distribution (Line 3). Given the realized joint action $\ba^{(t)}$, each player $i$ then computes an edge-wise transfer $\tilde p_{i\to j}^{(t)}$ for every neighbor $j\in\cN(i)$ (Line 5). The quantity in \Cref{eq:inst-payment} is exactly player $i$'s worst-case loss, holding $a_i^{(t)}$ fixed, from a deviation by $j$, and thus serves as a pessimistic estimate for $i$'s willingness to pay. 

The role of the transfer $\tilde p_{i\to j}^{(t)}$ is to make actions that benefit player $i$ more attractive to player $j$. Indeed, for fixed $a_i^{(t)}$, this transfer differs from $U_{i,j}(a_i^{(t)},a_j^{(t)})$ only by a constant that does not depend on $a_j^{(t)}$. Hence, from player $j$'s perspective, the incoming transfer from $i$ adds player $i$'s pairwise utility on the edge $(i,j)$ to player $j$'s own utility, up to an action-independent shift. With incoming transfers from all neighbors, player $j$'s utility estimate is therefore aligned with its contribution to social welfare, rather than with its private utility alone.

Finally, each player aggregates the observed pairwise utilities and the received transfers, uses importance sampling to construct an unbiased bandit estimate of the (augmented) payoff vector \Cref{eq:importance-sampling}, and performs a projected gradient step \Cref{eq:PGA}.

Then, we show that \Cref{alg:learning} enjoys best-iterate convergence in \Cref{theorem:convergence-sw}.

\begin{theorem}
\label{theorem:convergence-sw}
Consider \Cref{alg:learning}. For any $T\geq 1$, $\gamma\in (0,1]$, and $\eta\leq \frac{1}{2N A}$, define, for all $i\neq j$,
\begin{align}
    p_{i\to j}^{(t)}
    \;\coloneqq\;
    \EE_{\ba\sim\pi^{(t)}}\sbr{\cU_i\rbr{\ba}}
    \;-\;
    \min_{\hat a_j\in\cA_j}\EE_{\ba\sim\pi^{(t)}}\sbr{\cU_i\rbr{\hat a_j,\ba_{-j}}}.
\end{align}
Then there exists $t^*\in[T]$ such that $\rbr{\pi^{(t^*)},\bp^{(t^*)}}$ satisfies \Cref{eq:compatible} and, for every $i\in[N]$,
\begin{align}
    &\EE\sbr{\max_{\hat a_i\in\cA_i}\;
    \EE_{\ba\sim \pi^{(t^*)}}
    \sbr{\cU_i\rbr{\hat a_i,\ba_{-i}}}
    \;-\;
    \rbr{
        \EE_{\ba\sim \pi^{(t^*)}}\sbr{\cU_i\rbr{\ba}}
        + \sum_{\substack{j=1,\\ j\neq i}}^N p_{j\to i}^{(t^*)}
    }}\notag\\
    \;\le\;&
    18\sqrt{\frac{N}{4\eta T} + \frac{\eta}{2\gamma}N^4 A^3 + 4N^4 A^2\gamma},
\end{align}
where the expectation is taken over the randomness induced by sampling $\rbr{\ba^{(1)},\ba^{(2)},\dots,\ba^{(T)}}$.
\end{theorem}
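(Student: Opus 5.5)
The plan is to treat \Cref{alg:learning} as a stochastic, biased version of the full-information projected gradient ascent \Cref{eq:update-rule-potential-game}. I will then rerun the potential argument from the proof of \Cref{theorem:convergence-potential-game}, with error terms coming from variance and from exploration bias. Compatibility needs no work: $\bp^{(t)}$ is defined exactly as in \Cref{lemma:social-welfare}, so \Cref{lemma:decomposition-compatible} gives \Cref{eq:compatible}. Moreover, \Cref{lemma:sw-to-SETE-gap} holds verbatim for $\pi^{(t)}$ and reduces the stability gap to the social-welfare deviation gap $\max_{i,\hat a_i}\sum_j\EE_{\ba\sim\pi^{(t)}}[\cU_j(\hat a_i,\ba_{-i})]-SW(\pi^{(t)})$. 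By \Cref{lemma:variation-to-exploitability} with $\lambda=\eta$, this gap is at most $\frac{3}{\eta}\|\hat\pi^{(t+1)}-\pi^{(t)}\|$, where $\hat\pi^{(t+1)}\coloneqq\Proj{}{\pi^{(t)}+\eta\nabla SW(\pi^{(t)})}$ is the \emph{exact} gradient step. So it suffices to show that $\frac{1}{T}\sum_t\EE\|\hat\pi^{(t+1)}-\pi^{(t)}\|^2$ is small. The final bound then follows by taking $t^*$ to be the best iterate and applying Jensen.

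The first step is to identify what $\tilde\bu_i^{(t)}$ estimates. Conditional on $\ba_{-i}^{(t)}$ and $a_i^{(t)}=a_i$, the summand for neighbor $j$ is $\cU_{i,j}(a_i,a_j)+\cU_{j,i}(a_j,a_i)-\min_{\hat a}\cU_{j,i}(a_j,\hat a)$. The last term does not depend on $a_i$. Hence $\EE[\tilde\bu_i^{(t)}\mid\pi^{(t)}]=\nabla_i SW(\pi^{e,(t)})$ up to a constant shift times $\one$ (using \Cref{lemma:sw-grad-u}), and the projection onto the simplex is invariant to such shifts. The true target is $\nabla_i SW(\pi^{(t)})$, so there are two discrepancies:
\begin{itemize}
\item a bias $\nabla SW(\pi^{e,(t)})-\nabla SW(\pi^{(t)})$, which \Cref{lemma:smoothness} bounds by $2NA\|\pi^{e,(t)}-\pi^{(t)}\|=O(NA\sqrt{N}\gamma)$;
\item a variance term, with $\EE\|\tilde\bu_i^{(t)}\|^2\le N^2\sum_{a_i}\frac{1}{\pi_i^{e,(t)}(a_i)}\cdot\pi_i^{e}(a_i)\cdot(\ldots)\lesssim N^2A^2/\gamma$, since $\pi^e\ge\gamma/A$ and the summands lie in $[0,2N]$.
\end{itemize}

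Next comes the descent inequality. Smoothness (\Cref{lemma:smoothness}) gives $SW(\pi^{(t+1)})\ge SW(\pi^{(t)})+\langle\nabla SW(\pi^{(t)}),\pi^{(t+1)}-\pi^{(t)}\rangle-\beta\|\pi^{(t+1)}-\pi^{(t)}\|^2$. I will write $\nabla SW(\pi^{(t)})=\tilde\bu^{(t)}+(\nabla SW(\pi^{(t)})-\tilde\bu^{(t)})$ and apply \Cref{lemma:three-point} with $\bx^{(2)}=\bx^{(0)}=\pi^{(t)}$ to the $\tilde\bu$ part, which yields $\frac1\eta\|\pi^{(t+1)}-\pi^{(t)}\|^2$. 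The noise part I will split as $\langle\nabla SW-\tilde\bu,\pi^{(t+1)}-\hat\pi^{(t+1)}\rangle+\langle\nabla SW-\tilde\bu,\hat\pi^{(t+1)}-\pi^{(t)}\rangle$.
\begin{itemize}
\item In the first piece, nonexpansiveness of projection gives $\|\pi^{(t+1)}-\hat\pi^{(t+1)}\|\le\eta\|\tilde\bu-\nabla SW\|$, so this piece is $\ge-\eta\|\tilde\bu-\nabla SW\|^2$, of order $\eta N^4A^3/\gamma$ after summing over players. I expect the powers to come out roughly as in the statement.
\item In the second piece, $\hat\pi^{(t+1)}$ is measurable given $\pi^{(t)}$, so in expectation only the bias survives. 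Cauchy--Schwarz, together with the fact that simplex diameters are at most $\sqrt{2N}$, bounds it by the $\gamma$ term.
\end{itemize}
Finally I will compare $\|\hat\pi^{(t+1)}-\pi^{(t)}\|^2\le2\|\pi^{(t+1)}-\pi^{(t)}\|^2+2\eta^2\|\tilde\bu-\nabla SW\|^2$.

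Telescoping with $\eta\le1/(2\beta)$ and $SW\in[0,N]$ gives
\[
\frac{1}{T}\sum_t\EE\|\hat\pi^{(t+1)}-\pi^{(t)}\|^2\lesssim\frac{\eta N}{T}+\eta^3\frac{N^4A^3}{\gamma}+\eta^2N^4A^2\gamma .
\]
Multiplying by $9/\eta^2$ and taking square roots yields the stated form $18\sqrt{\cdots}$. The main obstacle is the bias/variance bookkeeping, and in particular the fact that the stochastic step is the one actually taken while the stationarity measure uses the exact step. Relating the two without losing a $1/\eta$ factor requires the decomposition above, and tracking the $N,A$ exponents carefully there is where I expect the most friction.
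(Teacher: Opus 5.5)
\textbf{There is a genuine gap.} Your reductions are fine: compatibility via \Cref{lemma:decomposition-compatible}, the reduction to the welfare gap via \Cref{lemma:sw-to-SETE-gap} and \Cref{lemma:variation-to-exploitability} with $\lambda=\eta$, and unbiasedness up to a multiple of $\one$. The failure is in the descent argument on $SW$: it cannot give a bound that vanishes, and the $\eta^3$ in your final display does not follow from your own steps.

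Let $V$ denote $\EE\|\tilde\bu^{(t)}-\nabla SW(\pi^{(t)})\|^2$ (after the shift), which is bounded only by $O(N^3A^2/\gamma)$.
\begin{enumerate}
\item Your first noise piece costs $\eta V$ per step. Telescoping and dividing by $1/\eta-\beta\ge 1/(2\eta)$ gives $\frac1T\sum_t\EE\|\pi^{(t+1)}-\pi^{(t)}\|^2\lesssim \eta N/T+\eta^2 V$, not $\eta^3V$.
\item Your comparison $\|\hat\pi^{(t+1)}-\pi^{(t)}\|^2\le 2\|\pi^{(t+1)}-\pi^{(t)}\|^2+2\eta^2\|\tilde\bu^{(t)}-\nabla SW(\pi^{(t)})\|^2$ adds another $2\eta^2V$.
\item After multiplying by $9/\eta^2$, an additive term of order $V$ remains, and it shrinks neither with $\eta$ nor with $T$.
\end{enumerate}
This is intrinsic, not a bookkeeping slip. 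At an interior exact stationary point, $\hat\pi^{(t+1)}=\pi^{(t)}$, yet the realized step is $\eta$ times the centered noise, so $\eta^{-2}\EE\|\pi^{(t+1)}-\pi^{(t)}\|^2\approx V$. Any argument that controls the exact gradient mapping through the realized stochastic step inherits this noise floor. It is the same obstruction that forces growing mini-batches in gradient-mapping analyses of projected SGD.

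Your bias bound via the diameter $\sqrt{2N}$ has a similar but repairable defect. After rescaling it leaves a $\gamma/\eta$ term, which diverges for $\gamma=T^{-1/3}$, $\eta=cT^{-2/3}$. Using $\|\hat\pi^{(t+1)}-\pi^{(t)}\|\le\eta\|\nabla SW(\pi^{(t)})\|$ instead fixes that part.

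\textbf{The missing idea} is the Moreau-envelope potential the paper uses,
$M(\pi)=\max_{\pi'}\{SW(\pi')-2\beta\|\pi'-\pi\|^2\}$, whose proximal point $\mu^{(t+1)}$ depends only on $\pi^{(t)}$.
\begin{enumerate}
\item Evaluate the max defining $M(\pi^{(t+1)})$ at $\mu^{(t+1)}$ and use nonexpansiveness of the projection. This gives
$M(\pi^{(t+1)})\ge M(\pi^{(t)})+4\beta\eta\langle\mu^{(t+1)}-\pi^{(t)},\tilde\bu^{(t)}\rangle-2\beta\eta^2\|\tilde\bu^{(t)}\|^2$.
\item In conditional expectation, the linear term becomes $\langle\mu^{(t+1)}-\pi^{(t)},\nabla SW(\pi^{(t)})\rangle$ up to a bias of order $\beta\gamma N^2$. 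The $c_i^{(t)}\one$ shift vanishes because both points lie in the simplex.
\item The noise enters only through $\eta^2\EE\|\tilde\bu^{(t)}\|^2$. After dividing by $\eta$, this becomes the vanishing $\frac{\eta}{2\gamma}N^4A^3$ term.
\item Smoothness together with the optimality of $\mu^{(t+1)}$ gives $\langle\mu^{(t+1)}-\pi^{(t)},\nabla SW(\pi^{(t)})\rangle\ge\beta\|\mu^{(t+1)}-\pi^{(t)}\|^2$.
\item \Cref{lemma:bandit-distance-NE} then bounds the stationarity gap at $\pi^{(t)}$ itself by $18\beta\|\mu^{(t+1)}-\pi^{(t)}\|$.
\end{enumerate}
So the realized stochastic step never has to be compared with the exact one.
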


By taking $\gamma=T^{-1/3}$ and $\eta=cT^{-2/3}$ for any constant $c\leq 1/(2NA)$, we obtain
\begin{align*}
    \EE\sbr{\max_{\hat a_i\in\cA_i}\;
    \EE_{\ba\sim \pi^{(t^*)}}
    \sbr{\cU_i\rbr{\hat a_i,\ba_{-i}}}
    \;-\;
    \rbr{
        \EE_{\ba\sim \pi^{(t^*)}}\sbr{\cU_i\rbr{\ba}}
        + \sum_{\substack{j=1,\\ j\neq i}}^N p_{j\to i}^{(t^*)}
    }}
    \;\le\;
    \cO\rbr{T^{-1/6}}.
\end{align*}

\begin{proof}
Motivated by \citet[Proof of Theorem 5.2]{panageas2023semi-bandit-congestion-game}, we use the Moreau envelope defined as
\begin{align*}
    M(\pi)\coloneqq \max_{\pi'\in\bigtimes_{i=1}^N \Delta^{\cA_i}} \rbr{SW(\pi') - 2\beta\nbr{\pi' - \pi}^2},
\end{align*}
where $\beta$ is the smoothness constant of social welfare $SW$ defined in \Cref{lemma:smoothness}. By letting
\begin{align*}
    \mu^{(t+1)}\coloneqq \argmax_{\mu\in\bigtimes_{i=1}^N \Delta^{\cA_i}} \rbr{SW(\mu) - 2\beta\nbr{\mu - \pi^{(t)}}^2},
\end{align*}
we have
\begin{align*}
    M(\pi^{(t+1)})\overset{(i)}{\geq}& SW(\mu^{(t+1)}) - 2\beta \nbr{\mu^{(t+1)} - \pi^{(t+1)}}^2\\
    =& SW(\mu^{(t+1)}) - 2\beta \nbr{\mu^{(t+1)} - \Proj{\bigtimes_{i=1}^N \Delta^{\cA_i}}{\pi^{(t)} + \eta\tilde\bu^{(t)}}}^2\\
    \overset{(ii)}{\geq}& SW(\mu^{(t+1)}) - 2\beta \nbr{\mu^{(t+1)} - \pi^{(t)} - \eta\tilde\bu^{(t)}}^2\\
    =&SW(\mu^{(t+1)}) - 2\beta \nbr{\mu^{(t+1)} - \pi^{(t)}}^2 + 4\beta\eta\inner{\mu^{(t+1)} - \pi^{(t)}}{\tilde\bu^{(t)}} - 2\beta\eta^2\nbr{\tilde\bu^{(t)}}^2\\
    \overset{(iii)}{=}& M(\pi^{(t)}) + 4\beta\eta\inner{\mu^{(t+1)} - \pi^{(t)}}{\tilde\bu^{(t)}} - 2\beta\eta^2\nbr{\tilde\bu^{(t)}}^2.
\end{align*}
$(i)$ is by the definition of $M$. $(ii)$ is by the property of projection. $(iii)$ is by the definition of $\mu^{(t+1)}$. By taking the expectation on both sides, we have
\begin{align*}
    \EE\sbr{M(\pi^{(t+1)})}\geq \EE\sbr{M(\pi^{(t)})} + 4\beta\eta\EE\sbr{\inner{\mu^{(t+1)} - \pi^{(t)}}{\tilde\bu^{(t)}}} - 2\beta\eta^2\EE\sbr{\nbr{\tilde\bu^{(t)}}^2}.
\end{align*}
By \Cref{lemma:bounded-variance} in the following, we can bound $2\beta\eta^2\EE\sbr{\nbr{\tilde\bu^{(t)}}^2}$ above.
\begin{lemma}
\label{lemma:bounded-variance}
For any timestep $t\in [T]$, we have
\begin{align*}
    \EE\sbr{\nbr{\tilde\bu^{(t)}}^2} \leq \frac{\sigma_u^2}{\gamma},
\end{align*}
where $\sigma_u\coloneqq A N^{3/2}$ and $A=\max_{i\in [N]}|\cA_i|$.
\end{lemma}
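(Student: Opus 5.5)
We need to bound $\EE\sbr{\nbr{\tilde\bu^{(t)}}^2}=\sum_{i=1}^N \EE\sbr{\nbr{\tilde\bu_i^{(t)}}^2}$. The plan is to treat each player $i$ separately, bound the squared norm pointwise, and then take expectation over the sampled action $a_i^{(t)}$. By \Cref{eq:importance-sampling}, $\tilde\bu_i^{(t)}$ has a single nonzero coordinate, at $a_i^{(t)}$. Therefore
\begin{align*}
    \nbr{\tilde\bu_i^{(t)}}^2 = \frac{1}{\pi_i^{e,(t)}(a_i^{(t)})^2}\rbr{\sum_{j\in\cN(i)} \rbr{\cU_{i,j}\rbr{a_i^{(t)},a_j^{(t)}} + \tilde p_{j\to i}^{(t)}}}^2 .
\end{align*}

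The first step is a deterministic bound on the numerator. Each $\cU_{i,j}\in[0,1]$. By \Cref{eq:inst-payment}, each $\tilde p_{j\to i}^{(t)}\in[0,1]$, since it is a difference of two values of $\cU_{j,i}\in[0,1]$ and the minimum is at most the realized value. Hence the inner sum is at most $2|\cN(i)|\le 2N$, and the numerator is at most $4N^2$.

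A sharper option is available. Using $\sum_{j}\cU_{i,j}=\cU_i\le 1$ gives a bound $1+(N-1)\le N$, so the numerator is at most $N^2$. This is the version that matches $\sigma_u^2=A^2N^3$ with room to spare, so I will use it. If the paper's $\sigma_u$ is meant to absorb the factor $4$, the cruder bound also suffices, because $4N^3\cdot A\le A^2N^3$ whenever $A\ge 4$. Either way, I will keep the argument in the form ``numerator $\le c N^2$''.

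The second step is the expectation over $a_i^{(t)}\sim\pi_i^{e,(t)}$, conditioned on the past and on $\ba_{-i}^{(t)}$. One factor $1/\pi_i^{e,(t)}(a_i)$ cancels against the sampling probability:
\begin{align*}
    \EE\sbr{\nbr{\tilde\bu_i^{(t)}}^2}\le \sum_{a_i\in\cA_i}\frac{N^2}{\pi_i^{e,(t)}(a_i)}\le |\cA_i|\cdot\frac{N^2|\cA_i|}{\gamma}\le \frac{A^2N^2}{\gamma}.
\end{align*}
This uses the exploration floor $\pi_i^{e,(t)}(a_i)\ge \gamma/|\cA_i|$.

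The third step sums over $i\in[N]$, which gives $\EE\sbr{\nbr{\tilde\bu^{(t)}}^2}\le A^2N^3/\gamma=\sigma_u^2/\gamma$.

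The main subtlety is that the payments $\tilde p_{j\to i}^{(t)}$ depend on $a_i^{(t)}$ and on the other sampled actions. This makes the per-action value random, so the expectation cannot be written as an exact sum. The remedy is to rely only on the uniform pointwise bound on the numerator, which holds for every realization. The expectation is then taken only over $a_i^{(t)}$ via the tower property, which yields the $1/\gamma$ factor without needing independence.
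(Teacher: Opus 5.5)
Your argument is correct and is essentially the paper's proof. It uses the single nonzero coordinate and the numerator bound $(|\cN(i)|+1)^2\le N^2$ from $\cU_i\le 1$ and $\tilde p_{j\to i}^{(t)}\in[0,1]$. It then cancels one factor of $\pi_i^{e,(t)}(a_i)$ against the sampling probability, applies the floor $\pi_i^{e,(t)}(a_i)\ge\gamma/|\cA_i|$, and sums over $i$.

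One correction to your aside: the cruder numerator bound $4N^2$ would give $4A^2N^3/\gamma$, not something of order $4N^3\cdot A$ (you dropped a factor of $A$). That exceeds $\sigma_u^2/\gamma$ for every $A$, so the sharper bound via $\cU_i\le 1$ is necessary rather than optional.
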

The proof is postponed to \Cref{section:bandit-auxiliary}. Hence,
\begin{align}
    \EE\sbr{M(\pi^{(t+1)})}\geq& \EE\sbr{M(\pi^{(t)})} + 4\beta\eta\EE\sbr{\inner{\mu^{(t+1)} - \pi^{(t)}}{\tilde\bu^{(t)}}} - \frac{2\beta\eta^2}{\gamma}\sigma_u^2\notag\\
    =&\EE\sbr{M(\pi^{(t)})} + 4\beta\eta\EE\sbr{\inner{\mu^{(t+1)} - \pi^{(t)}}{\EE\sbr{\tilde\bu^{(t)}\given \pi^{(t)}}}} - \frac{2\beta\eta^2}{\gamma}\sigma_u^2.\label{eq:bandit-2}
\end{align}
Then, as shown in \Cref{lemma:unbiased-estimator}, $\tilde\bu_i^{(t)}$ is an unbiased estimator of $\nabla_i SW(\pi^{e, (t)})-c_i^{(t)} \one$, where $c_i^{(t)}\in \RR$ is a scalar and $\one\in \RR^{\cA_i}$ denotes the all-ones vector indexed by $\cA_i$.

\begin{lemma}
\label{lemma:unbiased-estimator}
For any timestep $t\in [T]$, any player $i\in [N]$, and any action $a_i\in\cA_i$, we have
\begin{align}
    \EE\sbr{\tilde u_i^{(t)}(a_i)\given \pi^{(t)}} =& \frac{\partial SW(\pi^{e,(t)})}{\partial \pi_i^{e,(t)}(a_i)} - c_i^{(t)},\label{eq:bandit-1}
\end{align}
with
\begin{align*}
    c_i^{(t)} = \sum_{j\in\cN(i)} \sum_{a_j\in\cA_j} \rbr{\min_{\hat a_i\in\cA_i} \cU_{j, i}\rbr{a_j, \hat a_i}} \pi_j^{e,(t)}(a_j).
\end{align*}
\end{lemma}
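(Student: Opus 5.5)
We need $\EE[\tilde u_i^{(t)}(a_i)\mid\pi^{(t)}]$. The plan is a direct computation, conditioning on $\pi^{(t)}$ so that the actions $a_k^{(t)}\sim\pi_k^{e,(t)}$ are drawn independently across $k$. First we undo the importance weighting. By \Cref{eq:importance-sampling}, $\tilde u_i^{(t)}(a_i)$ is nonzero only on the event $a_i^{(t)}=a_i$, which has probability $\pi_i^{e,(t)}(a_i)$. This probability cancels the factor $1/\pi_i^{e,(t)}(a_i)$, and it is strictly positive because $\gamma>0$. Hence
\begin{align*}
\EE\sbr{\tilde u_i^{(t)}(a_i)\given\pi^{(t)}}=\sum_{j\in\cN(i)}\EE_{a_j\sim\pi_j^{e,(t)}}\sbr{\cU_{i,j}(a_i,a_j)+\cU_{j,i}(a_j,a_i)-\min_{\hat a_i\in\cA_i}\cU_{j,i}(a_j,\hat a_i)}.
\end{align*}
To get this expression, we substitute the incoming transfer from \Cref{eq:inst-payment} with the roles of $i$ and $j$ swapped. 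That transfer is $\tilde p_{j\to i}^{(t)}=\cU_{j,i}(a_j^{(t)},a_i^{(t)})-\min_{\hat a_i}\cU_{j,i}(a_j^{(t)},\hat a_i)$. Conditional on $a_i^{(t)}=a_i$, the neighbors' actions are still distributed according to $\pi^{e,(t)}_{-i}$ by independence. The min-term summed over $j$ is exactly $c_i^{(t)}$, and it does not depend on $a_i$.

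Next we identify the remaining terms with the partial derivative of $SW$. Since $SW(\pi)=\sum_k\sum_{l\in\cN(k)}\EE[\cU_{k,l}(a_k,a_l)]$ is multilinear in the blocks $\pi_k$, differentiating with respect to $\pi_i(a_i)$ at the point $\pi^{e,(t)}$ only touches the terms that involve player $i$. There are two kinds: the terms $\cU_{i,j}(a_i,a_j)$ for $j\in\cN(i)$, and the terms $\cU_{j,i}(a_j,a_i)$ for those $j$ with $i\in\cN(j)$. This gives $\sum_{j\in\cN(i)}\EE_{a_j\sim\pi_j^{e,(t)}}[\cU_{i,j}(a_i,a_j)]$ plus the corresponding sum of $\cU_{j,i}$ terms. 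This is the same computation as in \Cref{lemma:sw-grad-u}, but evaluated at $\pi^{e,(t)}$.

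The main obstacle is matching the index sets. The transfer sum runs over $j\in\cN(i)$, where $j$ is paying $i$, so it requires $i\in\cN(j)$. For these to coincide, the neighbor relation must be symmetric, that is, $j\in\cN(i)\iff i\in\cN(j)$. I would state this as the standing convention for polymatrix games: absent interactions are identically zero, so we can pad $\cU_{i,j}\equiv0$ and symmetrize the graph without changing any quantity. With that convention the sums match term by term. This gives \Cref{eq:bandit-1} with the stated $c_i^{(t)}$.
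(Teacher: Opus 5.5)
Your proposal is correct and follows the paper's proof essentially step for step. You cancel the importance weight against $\Pr[a_i^{(t)}=a_i]=\pi_i^{e,(t)}(a_i)$, substitute the incoming transfer $\tilde p_{j\to i}^{(t)}$, differentiate the pairwise expansion of $SW$ at $\pi^{e,(t)}$, and read off $c_i^{(t)}$ as the leftover min-terms. Your explicit symmetric-neighborhood convention is exactly what the paper's step (i) assumes without saying so.

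One small caveat: your remark that this is ``the same computation as in \Cref{lemma:sw-grad-u}'' is not literally accurate. That lemma differentiates the full multilinear expansion, so it also picks up $a_i$-independent terms from edges not incident to $i$. The identity therefore holds for the pairwise expansion that you, and the paper, actually differentiate. That is enough downstream, because constant vectors vanish in the inner product with $\mu_i^{(t+1)}-\pi_i^{(t)}$.
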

The proof is postponed to \Cref{section:bandit-auxiliary}. Therefore,
\begin{align*}
    &\EE\sbr{\inner{\mu^{(t+1)} - \pi^{(t)}}{\EE\sbr{\tilde\bu^{(t)}\given \pi^{(t)}}}}\\
    =& \sum_{i=1}^N \EE\sbr{\inner{\mu_i^{(t+1)} - \pi_i^{(t)}}{\EE\sbr{\tilde\bu_i^{(t)}\given \pi^{(t)}}}}\\
    \overset{(i)}{=}& \sum_{i=1}^N \EE\sbr{\inner{\mu_i^{(t+1)} - \pi_i^{(t)}}{\nabla_i SW(\pi^{e,(t)})}}\\
    =& \EE\sbr{\inner{\mu^{(t+1)} - \pi^{(t)}}{\nabla SW(\pi^{(t)})}} + \EE\sbr{\inner{\mu^{(t+1)} - \pi^{(t)}}{\nabla SW(\pi^{e,(t)}) - \nabla SW(\pi^{(t)})}}.
\end{align*}
$(i)$ is because $\inner{\mu_i^{(t+1)} - \pi_i^{(t)}}{c_i^{(t)}\one} = c_i^{(t)} - c_i^{(t)} = 0$. Moreover,
\begin{align*}
    \abr{\inner{\mu^{(t+1)} - \pi^{(t)}}{\nabla SW(\pi^{e,(t)}) - \nabla SW(\pi^{(t)})}}\overset{(i)}{\leq}& \nbr{\mu^{(t+1)} - \pi^{(t)}}\cdot\nbr{\nabla SW(\pi^{e,(t)}) - \nabla SW(\pi^{(t)})}\\
    \overset{(ii)}{\leq}& 2\sqrt{2} \beta N\nbr{\pi^{e,(t)} - \pi^{(t)}}\\
    \overset{(iii)}{\leq}& 2\sqrt{2} \beta N \sum_{i=1}^N \gamma\nbr{\frac{\one}{|\cA_i|} - \pi_i^{(t)}}\\
    \leq& 4 \beta \gamma N^2.
\end{align*}
$(i)$ is by \holder. $(ii)$ is by \Cref{lemma:smoothness}. $(iii)$ is by the definition of $\pi_i^{e,(t)}$ in \Cref{alg:learning}.

Finally, by substituting back to \Cref{eq:bandit-2}, we have
\begin{align}
    \EE\sbr{M(\pi^{(t+1)})}\geq& \EE\sbr{M(\pi^{(t)})} + 4\beta\eta\EE\sbr{\inner{\mu^{(t+1)} - \pi^{(t)}}{\nabla SW(\pi^{(t)})}} - 16\beta^2\eta\gamma N^2 - \frac{2\beta\eta^2}{\gamma}\sigma_u^2.\label{eq:bandit-3}
\end{align}

By \Cref{lemma:smoothness}, we have
\begin{align*}
    &\inner{\mu^{(t+1)} - \pi^{(t)}}{\nabla SW(\pi^{(t)})}\\
    \geq& SW(\mu^{(t+1)}) - SW(\pi^{(t)}) - \beta \nbr{\mu^{(t+1)} - \pi^{(t)}}^2\\
    =& \rbr{SW(\mu^{(t+1)}) - 2\beta \nbr{\mu^{(t+1)} - \pi^{(t)}}^2} -\rbr{SW(\pi^{(t)}) - 2\beta \nbr{\pi^{(t)} - \pi^{(t)}}^2} + \beta \nbr{\mu^{(t+1)} - \pi^{(t)}}^2\\
    \overset{(i)}{\geq}& \beta \nbr{\mu^{(t+1)} - \pi^{(t)}}^2.
\end{align*}
$(i)$ is by the definition of $\mu^{(t+1)}$. Therefore, by substituting back to \Cref{eq:bandit-3} and rearranging terms, we have
\begin{align*}
    4\beta^2\eta\EE\sbr{\nbr{\mu^{(t+1)} - \pi^{(t)}}^2}\leq \EE\sbr{M(\pi^{(t+1)})} - \EE\sbr{M(\pi^{(t)})} + 16\beta^2\eta\gamma N^2 + \frac{2\beta\eta^2}{\gamma}\sigma_u^2.
\end{align*}
By telescoping,
\begin{align*}
    \sum_{t=1}^T \EE\sbr{\nbr{\mu^{(t+1)} - \pi^{(t)}}^2}\leq& \frac{1}{4\beta^2\eta}\rbr{\EE\sbr{M(\pi^{(T+1)})} - \EE\sbr{M(\pi^{(1)})}} + \frac{\eta}{2\beta\gamma}\sigma_u^2 T + 4\gamma N^2 T\\
    \overset{(i)}{\leq}& \frac{N}{4\beta^2\eta} + \frac{\eta}{2\beta\gamma}\sigma_u^2 T + 4\gamma N^2 T.
\end{align*}
$(i)$ is because for any $\pi\in\bigtimes_{i=1}^N \Delta^{\cA_i}$,
\begin{align*}
    0\leq SW(\pi) - 2\beta\nbr{\pi-\pi}^2\leq M(\pi) \leq \max_{\pi'\in\bigtimes_{i=1}^N \Delta^{\cA_i}} SW(\pi')\leq N.
\end{align*}

By the following lemma, we can bound the gap to NE of the potential game.
\begin{lemma}
\label{lemma:bandit-distance-NE}
For any strategy profile $\pi\in \bigtimes_{i=1}^N \Delta^{\cA_i}$, let 
\begin{align*}
    \mu=\argmax_{\mu'\in \bigtimes_{i=1}^N \Delta^{\cA_i}} \rbr{SW(\mu') - 2\beta\nbr{\mu' - \pi}^2}.
\end{align*}
Then,
\begin{align*}
    \max_{\substack{i\in [N]\\\hat\pi_i\in\Delta^{\cA_i}}}\inner{\nabla_i SW(\pi)}{\hat \pi_i - \pi_i} \leq 18\beta \nbr{\mu - \pi}.
\end{align*}
\end{lemma}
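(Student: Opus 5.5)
The plan is to use the first-order optimality condition of the proximal problem that defines $\mu$, and then transfer the resulting gradient inequality from $\mu$ back to $\pi$ using the smoothness in \Cref{lemma:smoothness}.

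\textbf{Step 1: optimality condition at $\mu$.} The objective $\mu'\mapsto SW(\mu')-2\beta\nbr{\mu'-\pi}^2$ is $\beta$-strongly concave on the product of simplices. This follows because $SW$ is $2\beta$-smooth, so $SW(\mu')-\beta\nbr{\mu'}^2$ is concave, and subtracting the remaining $\beta\nbr{\mu'}^2$ makes the objective strongly concave. Hence $\mu$ is unique. The variational inequality at $\mu$ reads, for every $\hat\pi\in\bigtimes_{i}\Delta^{\cA_i}$,
\begin{align*}
    \inner{\nabla SW(\mu) - 4\beta(\mu-\pi)}{\hat\pi - \mu}\leq 0 .
\end{align*}
Fix a player $i$ and $\hat\pi_i\in\Delta^{\cA_i}$. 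Choose $\hat\pi=(\hat\pi_i,\mu_{-i})$; because the feasible set is a product, this isolates block $i$ and gives
\begin{align*}
    \inner{\nabla_i SW(\mu)}{\hat\pi_i-\mu_i}\leq 4\beta\inner{\mu_i-\pi_i}{\hat\pi_i-\mu_i}\leq 4\beta\sqrt2\,\nbr{\mu-\pi},
\end{align*}
using that the diameter of the simplex is $\sqrt2$.

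\textbf{Step 2: transfer to $\pi$.} Decompose
\begin{align*}
    \inner{\nabla_i SW(\pi)}{\hat\pi_i-\pi_i}
    = \inner{\nabla_i SW(\mu)}{\hat\pi_i-\mu_i}
    + \inner{\nabla_i SW(\pi)-\nabla_i SW(\mu)}{\hat\pi_i-\mu_i}
    + \inner{\nabla_i SW(\pi)}{\mu_i-\pi_i}.
\end{align*}
The first term is bounded in Step 1. The second term is at most $\sqrt2\cdot 2\beta\nbr{\mu-\pi}$, by Cauchy--Schwarz together with the Lipschitz-gradient bound $\nbr{\nabla SW(\mu)-\nabla SW(\pi)}\leq 2\beta\nbr{\mu-\pi}$ from \Cref{lemma:smoothness}.

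\textbf{Step 3: the main obstacle.} The third term requires a bound on $\nbr{\nabla_i SW(\pi)}$ measured in units of $\beta$. Since $\mu_i-\pi_i$ sums to zero, the component of the gradient along $\one$ can be subtracted freely. Each entry of $\nabla_i SW$ equals $\sum_j\EE[\cU_j(a_i,\ba_{-i})]$ (\Cref{lemma:sw-grad-u}), which lies in $[0,N]$. After centering, the $\ell_2$ norm is therefore at most $N\sqrt{A}\leq NA=\beta$. This gives a contribution of at most $\beta\nbr{\mu-\pi}$.

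\textbf{Conclusion.} Summing the three bounds gives $(4\sqrt2+2\sqrt2+1)\beta\nbr{\mu-\pi}\leq 10\beta\nbr{\mu-\pi}$, which is comfortably below $18\beta\nbr{\mu-\pi}$. The argument holds uniformly over $i$ and $\hat\pi_i$, so it yields the claimed bound on the maximum.
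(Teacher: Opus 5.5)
Your proof is correct, and it reaches the bound by a different route than the paper. The paper turns the first-order condition for $\mu$ into the fixed-point identity $\mu=\Proj{\bigtimes_{i=1}^N\Delta^{\cA_i}}{\pi+\frac{1}{4\beta}\nabla SW(\mu)}$. It then applies \Cref{lemma:variation-to-exploitability} with $\lambda=\frac{1}{4\beta}$ (admissible because $\beta=NA$) to bound the left-hand side by $12\beta\nbr{\Proj{\bigtimes_{i=1}^N\Delta^{\cA_i}}{\pi+\frac{1}{4\beta}\nabla SW(\pi)}-\pi}$. It finishes with the triangle inequality, nonexpansiveness of the projection, and the Lipschitz-gradient bound of \Cref{lemma:smoothness}, giving $12\beta+3\cdot 2\beta=18\beta$.

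You instead use the variational inequality at $\mu$ directly, isolate block $i$ through the product structure, and move from $\mu$ back to $\pi$ with your three-term split. Your estimate $\nbr{\nabla_i SW(\pi)}\le N\sqrt{A}\le\beta$ for the third term is essentially the ingredient used inside the proof of \Cref{lemma:variation-to-exploitability}, where it appears as the condition $\lambda\nbr{\nabla_i SW(\pi)}\le 1$.

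Your version is self-contained: it needs neither the projection fixed-point characterization nor nonexpansiveness. It handles mixed deviations $\hat\pi_i$ directly, and it gives the sharper constant $6\sqrt2+1<10$. The paper's version is more modular, reusing the same stationarity-to-exploitability lemma that drives \Cref{theorem:convergence-potential-game}, at the cost of a looser constant.
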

The proof is postponed to \Cref{section:bandit-auxiliary}. Therefore,
\begin{align*}
    \frac{1}{T}\sum_{t=1}^T \EE\sbr{\max_{\substack{i\in [N]\\\hat\pi_i\in\Delta^{\cA_i}}}\inner{\nabla_i SW(\pi^{(t)})}{\hat \pi_i - \pi_i^{(t)}}} \leq& 18\beta \rbr{\frac{1}{T}\sum_{t=1}^T \EE\sbr{\nbr{\mu^{(t+1)} - \pi^{(t)}}}}\\
    \overset{(i)}{\leq}& 18\beta \sqrt{\frac{1}{T}\sum_{t=1}^T \rbr{\EE\sbr{\nbr{\mu^{(t+1)} - \pi^{(t)}}}}^2}\\
    \overset{(ii)}{\leq}& 18\beta \sqrt{\frac{1}{T}\sum_{t=1}^T \EE\sbr{\nbr{\mu^{(t+1)} - \pi^{(t)}}^2}}\\
    \leq& 18\sqrt{\frac{N}{4\eta T} + \frac{\eta\beta}{2\gamma}\sigma_u^2 + 4\beta^2\gamma N^2}\\
    \leq& 18\sqrt{\frac{N}{4\eta T} + \frac{\eta}{2\gamma}N^4 A^3 + 4N^4 A^2\gamma},
\end{align*}
where the last inequality uses the smoothness constant $\beta=NA$, the variance constant $\sigma_u^2=N^3A^2$, and $N\geq 1$. $(i)$ and $(ii)$ are both by \jensen. Hence, there must exist $t^*\in [T]$ such that
\begin{align*}
    \EE\sbr{\max_{\substack{i\in [N]\\\hat\pi_i\in\Delta^{\cA_i}}}\inner{\nabla_i SW(\pi^{(t^*)})}{\hat \pi_i - \pi_i^{(t^*)}}}\leq& \frac{1}{T}\sum_{t=1}^T \EE\sbr{\max_{\substack{i\in [N]\\\hat\pi_i\in\Delta^{\cA_i}}}\inner{\nabla_i SW(\pi^{(t)})}{\hat \pi_i - \pi_i^{(t)}}}\\
    \leq& 18\sqrt{\frac{N}{4\eta T} + \frac{\eta}{2\gamma}N^4 A^3 + 4N^4 A^2\gamma}.
    \end{align*}
    The SETE exploitability bound in the theorem follows from \Cref{lemma:sw-to-SETE-gap}, applied to the payment plan $\bp^{(t^*)}$. \qedhere
\end{proof}

\section{Mediated Self-Enforcing Transfer Equilibrium (M-SETE)}
\label{section:mediated-SETE}

\begin{figure}[t]
\centering
\scalebox{0.9}{
\begin{tikzpicture}[
  level distance=1.5cm,
  level 1/.style={sibling distance=8cm},
  level 2/.style={sibling distance=4cm},
  level 3/.style={sibling distance=2cm},
  edge from parent/.style={draw,thick}
]
\node (root) {$\btau = (\tau_1, \tau_2)$}
  child { node (A) {$\pi=\rbr{\pi_1,\pi_2}$}
    child { node (B) {$\EE_{\ba\sim \pi}\sbr{\cU_1(\ba)} - \tau_1$}
            edge from parent node[midway, yshift=4pt, left=4pt] {Follow $\pi_1$}
        }
    child { node (C) {\color{red} Illegal (Binding Offer)}
            edge from parent node[midway, yshift=4pt, right=4pt] {Deviate to $\hat a_1\in\cA_1$}
            node[midway, below=-14pt, red, scale=1.5]{
                \tikz{
                  \draw[red, line width=1pt] (0,0) circle (0.14cm);
                  \draw[red, line width=1pt] (-0.10cm,0.10cm) -- (0.10cm,-0.10cm);
                }
              }
        }
    edge from parent node[midway, yshift=4pt, left=4pt] {Accept the Mechanism}
  }
  child { node (D) {$\min_{\hat \pi_2\in\Delta^{\cA_2}} \max_{\hat \pi_1\in \Delta^{\cA_1}} \EE_{\hat \ba\sim \hat\pi} \sbr{\cU_1(\hat \ba)}$}
    edge from parent node[midway, yshift=4pt, right=4pt] {Decline the Mechanism}
  };

\draw[green,thick] (root) -- (A);
\draw[green,thick] (A) -- (B);
\draw[red,thick] (A) -- (C);
\draw[red,thick] (root) -- (D);
\end{tikzpicture}
}
\caption{Illustration of the augmented two-stage game, taking player 1 as the player of interest. The green path forms a Nash equilibrium of the sequential game induced by binding offers: accepting the mechanism binds both the payment and the recommended strategy.}
\label{fig:M-SETE-augmented-game}
\end{figure}

\Cref{def:SETE} only guarantees NE of the agent normal form of the augmented game, which is weaker than NE of the original augmented game. In particular, a player $i \in [N]$ may still profit by deviating across both stages: first withdrawing all promised payments and then deviating from $\pi_i$. As shown in \Cref{lemma:sw-optimal-not-NE}, there exists a two-player game in which the socially optimal strategy profile is not a NE of the augmented game. Therefore, to guarantee NE in the original augmented game, one would need a stronger mechanism.

We now introduce a mediator into the mechanism. Through legal enforcement, the mediator makes the payment schedule and the recommended strategy binding offers to each player. Hence, each player can either accept the entire offer or decline the mechanism altogether. In particular, a player $i \in [N]$ can no longer use payments to steer the other players toward $\pi_{-i}$ while simultaneously deviating from $\pi_i$. Because the offer is binding, the peer-to-peer transfers can be aggregated into a lump-sum payment $\tau_i \in \RR$, where $\tau_i<0$ denotes a subsidy specified by the mediator. The budget-balance condition below ensures that these subsidies are funded by other players' payments rather than by an external injection of money. Define
\begin{align*}
    \tau_i=\sum_{\substack{j=1,\\j\neq i}}^N \rbr{p_{i\to j} - p_{j\to i}}.
\end{align*}
The resulting mechanism, illustrated in \Cref{fig:M-SETE-augmented-game}, proceeds as follows:
\begin{enumerate}
    \item The mediator announces $(\pi, \btau)$ to all players.
    \item Each player decides whether to join the mechanism.
    \item If player $i \in [N]$ joins, then player $i$ must pay $\tau_i$ and follow the action recommendation $a_i \in \cA_i$ issued by the mediator, where $a_i \sim \pi_i$.
\end{enumerate}

Then, the constraints become
\begin{align}
    & \sum_{i=1}^N \tau_i = 0\label{eq:M-SETE-constraint-1}\\
    \forall i\in [N],\quad & \EE_{\ba\sim \pi} \sbr{\cU_i(\ba)} - \tau_i \geq \min_{\hat \pi_{-i}\in\bigtimes_{j\in [N]\setminus\cbr{i}} \Delta^{\cA_j}} \max_{\hat \pi_i\in \Delta^{\cA_i}} \EE_{\hat \ba\sim \hat\pi} \sbr{\cU_i(\hat \ba)}.\label{eq:M-SETE-constraint-2}
\end{align}
\Cref{eq:M-SETE-constraint-1} enforces budget balance. \Cref{eq:M-SETE-constraint-2} ensures that no player $i \in [N]$ has an incentive to quit the mechanism: if player $i$ joins, then their utility is $\EE_{\ba \sim \pi} \sbr{\cU_i(\ba)} - \tau_i$, whereas if player $i$ declines the mechanism, their utility is
\begin{align*}
    v_i\coloneqq \min_{\hat \pi_{-i}\in\bigtimes_{j\in [N]\setminus\cbr{i}} \Delta^{\cA_j}} \max_{\hat \pi_i\in \Delta^{\cA_i}} \EE_{\hat \ba\sim \hat\pi} \sbr{\cU_i(\hat \ba)}.
\end{align*}
The minimax value is a pessimistic estimate of player $i$'s utility after leaving the mechanism. Indeed, without knowledge of the other players' utility functions, player $i$ cannot predict how the others will behave once $i$ withdraws from the mechanism. Equivalently, the continuation strategies of the other players may be viewed as a non-credible threat against the deviating player. In particular, in the subgame induced by player $i$ leaving the mechanism, one may take
\begin{align*}
    \pi_{-i}\in \argmin_{\hat \pi_{-i}\in\bigtimes_{j\in [N]\setminus\cbr{i}} \Delta^{\cA_j}} \max_{\hat \pi_i\in \Delta^{\cA_i}} \EE_{\hat \ba\sim \hat\pi} \sbr{\cU_i(\hat \ba)}.
\end{align*}
We are now ready to define the resulting equilibrium concept.

\begin{definition}[Mediated Self-Enforcing Transfer Equilibrium (M-SETE)]
    In any game, a Mediated Self-Enforcing Transfer Equilibrium (M-SETE) is a strategy-payment pair $(\pi, \btau)$ that satisfies \Cref{eq:M-SETE-constraint-1,eq:M-SETE-constraint-2} simultaneously, where $\pi = (\pi_i)_{i=1}^N$ is a strategy profile and $\btau = (\tau_i)_{i=1}^N$ is a payment plan. Here, $\pi_i$ is the strategy that player $i$ must follow upon accepting the mechanism, and $\tau_i$ is the payment, or subsidy if negative, specified by the mediator that player $i$ must pay upon accepting the mechanism.
\end{definition}

With such a mediator, the result extends to general games and yields an NE of the augmented game induced by the binding-offer mechanism, as stated in the following theorem.

\begin{theorem}
\label{theorem:M-SETE}
    Any M-SETE $(\pi, \btau)$ is a Nash equilibrium of the augmented game induced by the binding-offer mechanism, not merely of its agent normal form. Moreover, for any strategy profile $\pi\in\bigtimes_{i=1}^N \Delta^{\cA_i}$ satisfying
    \begin{align}
        \sum_{i=1}^N \EE_{\ba\sim\pi}\sbr{\cU_i(\ba)} \geq \sum_{i=1}^N v_i,\label{eq:sw-higher-than-minimax}
    \end{align}
    there exists a payment plan $\btau \in \RR^N$ such that $(\pi, \btau)$ is an M-SETE.
\end{theorem}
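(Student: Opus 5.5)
The proof has two parts: an equilibrium-verification argument for the first claim and an explicit construction of $\btau$ for the second.

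\textbf{First claim.} I will fix an M-SETE $(\pi,\btau)$ and a player $i$, hold the other players' accept/follow behaviour fixed (including their prescribed continuation in the subgame where $i$ declines), and list every pure strategy of $i$ in the augmented game of \Cref{fig:M-SETE-augmented-game}.

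\begin{itemize}
\item If $i$ accepts, the offer is binding, so $i$ must pay $\tau_i$ and play the recommendation $a_i\sim\pi_i$. The Stage-2 deviation branch is illegal, so no further choice remains. The payoff is $\EE_{\ba\sim\pi}[\cU_i(\ba)]-\tau_i$.
\item If $i$ declines, the others play the prescribed punishing profile $\hat\pi_{-i}$ attaining the outer minimum in $v_i$. Against it, any (possibly mixed) $\hat\pi_i$ yields at most $\max_{\hat\pi_i}\EE_{\hat\ba\sim(\hat\pi_i,\hat\pi_{-i})}[\cU_i(\hat\ba)] = v_i$.
\end{itemize}

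By \Cref{eq:M-SETE-constraint-2}, accepting gives at least $v_i$, so no unilateral deviation of $i$, across both stages jointly, is profitable.

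This is precisely where the agent-normal-form caveat of \Cref{theorem:agent-normal-form} disappears. The combined deviation ``withdraw, then deviate'' now lands in the decline branch, and that branch is evaluated at its full worst case $v_i$ rather than at a payoff that holds $\pi_{-i}$ fixed. I will also note that the minimum in $v_i$ is attained by compactness and continuity, so the punishment profile is well defined.

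\textbf{Second claim.} Given $\pi$ satisfying \Cref{eq:sw-higher-than-minimax}, write $u_i\coloneqq\EE_{\ba\sim\pi}[\cU_i(\ba)]$ and define the slack
\begin{align*}
    \Delta\coloneqq\sum_{j=1}^N (u_j-v_j)\;\geq\;0.
\end{align*}
I set
\begin{align*}
    \tau_i\coloneqq u_i-v_i-\frac{\Delta}{N}.
\end{align*}
Summing over $i$ gives $\sum_i\tau_i=\Delta-\Delta=0$, which is \Cref{eq:M-SETE-constraint-1}. Also, $u_i-\tau_i=v_i+\Delta/N\geq v_i$, which is \Cref{eq:M-SETE-constraint-2}. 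Any other split of $\Delta$ into nonnegative shares also works; this equal split only fixes a concrete choice.

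\textbf{Consequence for optimal profiles.} A socially optimal $\pi$ satisfies \Cref{eq:sw-higher-than-minimax}, as does any profile whose welfare is at least that of some NE. The reason is that each NE payoff $u_i^{\rm NE}$ is a best-response value against $\pi^{\rm NE}_{-i}$, hence at least $\min_{\hat\pi_{-i}}\max_{\hat\pi_i}=v_i$. Therefore $\sum_i u_i^{\rm NE}\geq\sum_i v_i$.

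\textbf{Main obstacle.} The algebra is trivial. The only real work is being careful in the first part that the strategy space of the binding mechanism genuinely excludes the accept-then-deviate path, and that the off-path threat in the decline subgame is part of the specified profile. This is what makes the result a Nash equilibrium of the sequential game rather than of its agent normal form.
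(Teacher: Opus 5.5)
Your proposal is correct and follows the paper's proof essentially step for step. The binding offer leaves declining as the only unilateral deviation, and the participation constraint deters it against the minimax punishment value $v_i$. For the second claim, the paper uses exactly your equal-split choice $\tau_i = u_i - v_i - \Delta/N$ to get both budget balance and participation.
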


\begin{proof}
    If a player accepts the mechanism, the binding offer prevents that player from separately deviating from the prescribed payment or from the prescribed strategy within the mechanism. The only remaining unilateral deviation is to decline the offer. \Cref{eq:M-SETE-constraint-2} guarantees that no player has an incentive to decline the mechanism, under the continuation value imposed on a deviating player. Hence any M-SETE is a Nash equilibrium of the augmented game.

    For any strategy profile $\pi\in\bigtimes_{i=1}^N \Delta^{\cA_i}$ satisfying \Cref{eq:sw-higher-than-minimax}, let
    \begin{align*}
        \tau_i = \EE_{\ba\sim\pi}\sbr{\cU_i(\ba)} - v_i - \frac{1}{N}\rbr{\sum_{j=1}^N \EE_{\ba\sim\pi}\sbr{\cU_j(\ba)} - \sum_{j=1}^N v_j}.
    \end{align*}
    Then, $(\pi, \btau)$ satisfies \Cref{eq:M-SETE-constraint-2} since $\pi$ satisfies \Cref{eq:sw-higher-than-minimax}. Moreover, $\sum_{i=1}^N \tau_i = 0$ by definition, so the budget-balance condition \Cref{eq:M-SETE-constraint-1} also holds. \qedhere
\end{proof}

Furthermore, M-SETE always exists in any finite game, as stated in the following corollary.
\begin{corollary}
\label{corollary:NE-M-SETE}
    For any strategy profile $\pi \in \bigtimes_{i=1}^N \Delta^{\cA_i}$ whose social welfare is no smaller than that of the worst Nash equilibrium, there exists a payment plan $\btau \in \RR^N$ such that $(\pi, \btau)$ is an M-SETE. In particular, M-SETE always exists and any socially optimal strategy profile can be supported as an M-SETE in any finite game.
\end{corollary}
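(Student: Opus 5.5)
The plan is to reduce the corollary to the second part of \Cref{theorem:M-SETE}. That requires showing that any Nash equilibrium $\pi^{\rm NE}$ satisfies \Cref{eq:sw-higher-than-minimax}. Once this holds for the worst Nash equilibrium, it holds for any $\pi$ with $SW(\pi)\geq SW(\pi^{\rm NE})$, since the right-hand side $\sum_i v_i$ does not depend on $\pi$.

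The key step is a per-player inequality: for every Nash equilibrium $\pi^{\rm NE}$ and every player $i$,
\begin{align*}
    \EE_{\ba\sim\pi^{\rm NE}}\sbr{\cU_i(\ba)} = \max_{\hat\pi_i\in\Delta^{\cA_i}} \EE_{\hat a_i\sim\hat\pi_i,\,\ba_{-i}\sim\pi^{\rm NE}_{-i}}\sbr{\cU_i(\hat a_i,\ba_{-i})} \geq \min_{\hat\pi_{-i}}\max_{\hat\pi_i}\EE_{\hat\ba\sim\hat\pi}\sbr{\cU_i(\hat\ba)} = v_i .
\end{align*}
The equality is the Nash condition: $\pi^{\rm NE}_i$ is a best response to $\pi^{\rm NE}_{-i}$, and by linearity the best response may be taken over mixed strategies. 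The inequality holds because $\pi^{\rm NE}_{-i}$ is a feasible choice in the outer minimization over product profiles $\hat\pi_{-i}\in\bigtimes_{j\neq i}\Delta^{\cA_j}$.

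Summing this inequality over $i$ gives $SW(\pi^{\rm NE})\geq\sum_i v_i$. So for any $\pi$ whose welfare is at least that of the worst Nash equilibrium, \Cref{eq:sw-higher-than-minimax} holds, and \Cref{theorem:M-SETE} supplies a budget-balanced $\btau$ making $(\pi,\btau)$ an M-SETE.

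For the final sentence of the corollary, I will use two facts. First, Nash's theorem guarantees that at least one Nash equilibrium exists in any finite game. Second, the set of Nash equilibria is closed, being defined by finitely many continuous weak inequalities on a compact set. Hence the worst Nash equilibrium welfare is attained, and the hypothesis is meaningful. Taking $\pi$ to be that equilibrium itself gives existence of an M-SETE. A socially optimal $\pi$ exists by compactness and continuity of $SW$. Its welfare dominates that of every Nash equilibrium, so it can also be supported as an M-SETE.

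There is no real obstacle here. The only point needing care is that $v_i$ minimizes over product profiles of the opponents, so that $\pi^{\rm NE}_{-i}$ is indeed feasible in the minimization. Even if correlated opponents were allowed, this would only lower $v_i$ and strengthen the inequality.
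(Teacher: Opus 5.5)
Your proposal is correct and matches the paper's proof essentially step for step. The paper also bounds $\EE_{\ba\sim\pi^{\rm NE}}\sbr{\cU_i(\ba)}\geq v_i$ using the best-response equality and the feasibility of $\pi^{\rm NE}_{-i}$ in the outer minimization, sums over players, applies the second part of \Cref{theorem:M-SETE}, and invokes Nash's theorem for existence. Your remarks that the worst-equilibrium welfare is attained (the equilibrium set is closed) and that a welfare maximizer exists (by compactness) are small details the paper leaves implicit.
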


\begin{proof}
    For any Nash equilibrium $\pi^{\rm NE}$ and any player $i\in [N]$, we have
    \begin{align*}
        \EE_{\ba\sim \pi^{\rm NE}}\sbr{\cU_i(\ba)} =& \max_{\hat\pi_i\in\Delta^{\cA_i}} \EE_{\hat a_i\sim \hat\pi_i, \ba\sim \pi^{\rm NE}}\sbr{\cU_i(\hat a_i, \ba_{-i})}\\
        \geq& \min_{\hat \pi_{-i}\in\bigtimes_{j\in [N]\setminus\cbr{i}} \Delta^{\cA_j}} \max_{\hat \pi_i\in \Delta^{\cA_i}} \EE_{\hat \ba\sim \hat\pi} \sbr{\cU_i(\hat \ba)}\\
        =& v_i.
    \end{align*}
    Therefore, for any strategy profile $\pi$ with social welfare no smaller than that of the worst NE, we must have
    \begin{align*}
        \sum_{i=1}^N \EE_{\ba\sim \pi}\sbr{\cU_i(\ba)} \geq \sum_{i=1}^N v_i.
    \end{align*}
    By \Cref{theorem:M-SETE}, there must exist a payment plan $\btau\in \RR^N$ such that $(\pi, \btau)$ is an M-SETE.

    Finally, since every finite game admits at least one NE \citep{nash1950equilibrium-nash-def}, M-SETE always exists. Any socially optimal strategy profile can be supported as an M-SETE, since its social welfare is necessarily no smaller than that of the worst NE. \qedhere
\end{proof}

However, computing $v_i$ is {\sf NP}-hard \citep{borgs2010myth-folk-theorem}, so it is generally intractable as the continuation-value estimate faced by a player who contemplates deviating. In particular, a deviating player may be unable to compute this pessimistic benchmark exactly. We therefore introduce the following tractable proxy:
\begin{align}
    \tilde v_i\coloneqq \min_{\hat \pi_{-i}\in {\color{red}\Delta^{\cA_{-i}}}} \max_{\hat \pi_i\in \Delta^{\cA_i}} \EE_{\hat \ba\sim \hat\pi} \sbr{\cU_i(\hat \ba)}.
\end{align}
Here, $\hat\pi_{-i}\in \Delta^{\cA_{-i}}$ is a distribution over the joint action space of the other players, and therefore allows their actions to be correlated through the mediator. Since $\bigtimes_{j\in [N]\setminus\cbr{i}} \Delta^{\cA_j} \subseteq \Delta^{\cA_{-i}}$, it follows immediately that $v_i \geq \tilde v_i$ for every player $i\in [N]$. Accordingly, \Cref{eq:M-SETE-constraint-2} can be relaxed to
\begin{align}
    \forall i\in [N],\quad & \EE_{\ba\sim \pi} \sbr{\cU_i(\ba)} - \tau_i \geq \tilde v_i. \label{eq:CM-SETE-constraint-3}
\end{align}

This leads to the following equilibrium notion, the Correlated M-SETE (CM-SETE).
\begin{definition}[Correlated Mediated Self-Enforcing Transfer Equilibrium (CM-SETE)]
    In any game, a Correlated Mediated Self-Enforcing Transfer Equilibrium (CM-SETE) is a strategy-payment pair $(\pi, \btau)$ that satisfies \Cref{eq:M-SETE-constraint-1,eq:CM-SETE-constraint-3}, where $\pi=(\pi_i)_{i=1}^N$ is a strategy profile and $\btau=(\tau_i)_{i=1}^N$ is a payment plan. Here, $\pi_i$ is the strategy prescribed to player $i$ upon accepting the mechanism, and $\tau_i$ is the payment, or subsidy, if negative, specified by the mediator that player $i$ must pay upon accepting the mechanism.
\end{definition}

\begin{remark}
    Although CM-SETE allows correlation in the continuation following a player's deviation, no correlation is required along the equilibrium path. Thus, during implementation, players do not need the mediator to actively correlate their actions. The correlated continuation serves only as an off-path, non-credible threat.
\end{remark}

Additionally, \Cref{theorem:M-SETE} and \Cref{corollary:NE-M-SETE} admit direct analogues for CM-SETE. We state the result formally below.
\begin{theorem}
    \label{theorem:CM-SETE}
    Any CM-SETE $(\pi,\btau)$ is a Nash equilibrium of the augmented game when correlated strategies are allowed in the continuation following a deviation. Moreover, for any strategy profile $\pi\in\bigtimes_{i=1}^N \Delta^{\cA_i}$ satisfying
    \begin{align}
        \sum_{i=1}^N \EE_{\ba\sim\pi}\sbr{\cU_i(\ba)} \geq \sum_{i=1}^N \tilde v_i,\label{eq:sw-higher-than-correlated-minimax}
    \end{align}
    there exists a payment plan $\btau\in\RR^N$ such that $(\pi,\btau)$ is a CM-SETE. Consequently, the analogue of \Cref{corollary:NE-M-SETE} also holds for CM-SETE.
\end{theorem}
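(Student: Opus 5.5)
The plan is to mirror the proofs of \Cref{theorem:M-SETE} and \Cref{corollary:NE-M-SETE}, replacing $v_i$ by $\tilde v_i$ throughout. There are three parts.

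\textbf{Nash equilibrium property.} Under the binding-offer mechanism, a player who accepts must pay $\tau_i$ and follow the mediator's recommendation $a_i\sim\pi_i$. So, exactly as in \Cref{theorem:M-SETE}, the only unilateral deviation is to decline the mechanism. Once player $i$ declines, the continuation may prescribe a correlated profile for the others,
\begin{align*}
    \hat\pi_{-i}\in\argmin_{\hat\pi_{-i}'\in\Delta^{\cA_{-i}}}\max_{\hat\pi_i\in\Delta^{\cA_i}}\EE_{\hat\ba\sim(\hat\pi_i,\hat\pi_{-i}')}\sbr{\cU_i(\hat\ba)}.
\end{align*}
Such a profile exists because the objective is continuous (a max of linear functions) on a compact set. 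Against it, any response of player $i$ earns at most $\tilde v_i$. By \Cref{eq:CM-SETE-constraint-3}, staying in yields $\EE_{\ba\sim\pi}\sbr{\cU_i(\ba)}-\tau_i\ge\tilde v_i$, so declining is not profitable. Hence $(\pi,\btau)$ is a Nash equilibrium of the augmented game when correlated continuations are allowed.

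\textbf{Constructing the payment plan.} For $\pi$ satisfying \Cref{eq:sw-higher-than-correlated-minimax}, I would set
\begin{align*}
    \tau_i=\EE_{\ba\sim\pi}\sbr{\cU_i(\ba)}-\tilde v_i-\frac1N\rbr{\sum_{j=1}^N\EE_{\ba\sim\pi}\sbr{\cU_j(\ba)}-\sum_{j=1}^N\tilde v_j}.
\end{align*}
Summing over $i$ gives $\sum_i\tau_i=0$, which is \Cref{eq:M-SETE-constraint-1}. For each $i$, $\EE_{\ba\sim\pi}\sbr{\cU_i(\ba)}-\tau_i-\tilde v_i$ equals $\frac1N$ times the welfare surplus, which is nonnegative by assumption. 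This gives \Cref{eq:CM-SETE-constraint-3}.

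\textbf{Analogue of \Cref{corollary:NE-M-SETE}.} Since product distributions form a subset of $\Delta^{\cA_{-i}}$, we have $\tilde v_i\le v_i$. The proof of \Cref{corollary:NE-M-SETE} shows $\EE_{\ba\sim\pi^{\rm NE}}\sbr{\cU_i(\ba)}\ge v_i$, so any NE satisfies $\EE_{\ba\sim\pi^{\rm NE}}\sbr{\cU_i(\ba)}\ge\tilde v_i$ for each $i$. Therefore any $\pi$ with welfare at least that of the worst NE satisfies \Cref{eq:sw-higher-than-correlated-minimax}, and the construction above applies. Existence of an NE by \citet{nash1950equilibrium-nash-def} then gives existence of a CM-SETE, and socially optimal profiles qualify in particular. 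Alternatively, one can note directly that every M-SETE is a CM-SETE.

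The only step needing care is the first: making precise that a correlated off-path punishment is admissible in the augmented game, and that the minimizer exists. Both follow from the argument above and the remark after the definition of CM-SETE.
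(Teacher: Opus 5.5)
Your proposal is correct and takes essentially the same approach as the paper, which reruns the proofs of \Cref{theorem:M-SETE} and \Cref{corollary:NE-M-SETE} with $\tilde v_i$ in place of $v_i$, using the same budget-balanced choice of $\btau$ and the inequality $v_i\geq\tilde v_i$ for the corollary analogue. Your extra details are correct small refinements: the existence of the correlated minimizing punishment, and the observation that every M-SETE is a CM-SETE.
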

The proof follows directly from that of \Cref{theorem:M-SETE} and \Cref{corollary:NE-M-SETE}, with $\tilde v_i$ replacing $v_i$ as the continuation value after a deviation. The analogue of \Cref{corollary:NE-M-SETE} uses the fact that $v_i\geq \tilde v_i$. Moreover, by the minimax theorem, for every player $i\in [N]$,
\begin{align*}
    \min_{\hat \pi_{-i}\in \Delta^{\cA_{-i}}} \max_{\hat \pi_i\in \Delta^{\cA_i}} \EE_{\hat \ba\sim \hat\pi} \sbr{\cU_i(\hat \ba)} = \max_{\hat \pi_i\in \Delta^{\cA_i}} \min_{\hat \pi_{-i}\in \Delta^{\cA_{-i}}} \EE_{\hat \ba\sim \hat\pi} \sbr{\cU_i(\hat \ba)}.
\end{align*}
Therefore, if for every player $i\in [N]$ we choose
\begin{align*}
    \pi_i \in \argmax_{\hat \pi_i\in \Delta^{\cA_i}} \min_{\hat \pi_{-i}\in \Delta^{\cA_{-i}}} \EE_{\hat \ba\sim \hat\pi} \sbr{\cU_i(\hat \ba)},
\end{align*}
then \Cref{eq:sw-higher-than-correlated-minimax} is satisfied by the resulting strategy profile $\pi$. In fact, $(\pi,\zero)$ is a CM-SETE, since for every player $i\in [N]$,
\begin{align*}
    \EE_{\ba\sim\pi}\sbr{\cU_i(\ba)}\geq \max_{\hat \pi_i\in \Delta^{\cA_i}} \min_{\hat \pi_{-i}\in \Delta^{\cA_{-i}}} \EE_{\hat \ba\sim \hat\pi} \sbr{\cU_i(\hat \ba)} = \min_{\hat \pi_{-i}\in \Delta^{\cA_{-i}}} \max_{\hat \pi_i\in \Delta^{\cA_i}} \EE_{\hat \ba\sim \hat\pi} \sbr{\cU_i(\hat \ba)}=\tilde v_i.
\end{align*}

Therefore, CM-SETE can be computed by solving the following feasibility linear program for each player $i\in [N]$:
\begin{align*}
    \text{find}\quad & \pi_i, ~\mu_{-i}, ~\tilde v_i \\
    \text{s.t.}\quad& \forall \hat a_i\in\cA_i,\quad \tilde v_i\geq \EE_{\ba_{-i}\sim \mu_{-i}}\sbr{\cU_i(\hat a_i, \ba_{-i})} \\
    & \forall \hat \ba_{-i}\in \cA_{-i},\quad \tilde v_i\leq \EE_{a_i\sim \pi_i}\sbr{\cU_i(a_i, \hat \ba_{-i})} \\
    &\sum_{a_i\in \cA_i} \pi_i(a_i)=1\\
    &\sum_{\ba_{-i}\in \cA_{-i}} \mu_{-i}(\ba_{-i})=1\\
    &\forall a_i\in\cA_i,\quad \pi_i(a_i)\geq 0\\
    &\forall \ba_{-i}\in\cA_{-i},\quad \mu_{-i}(\ba_{-i})\geq 0.
\end{align*}
Letting $\pi=(\pi_i)_{i=1}^N$ denote the resulting strategy profile, we obtain that $(\pi,\zero)$ is a CM-SETE.

\begin{remark}
    Unlike the efficient-computation results for SETE in \Cref{section:polymatrix}, the concepts of M-SETE and CM-SETE discussed in this section are not restricted to polymatrix games. Instead, all results in this section apply to arbitrary finite games.
\end{remark}

\section{Experiments}
\label{section:experiment}

In this section, we compare the learning dynamics of two classical algorithms, projected gradient ascent (PGA) and EXP3 \citep{hazan2016introduction}, when run \emph{with} and \emph{without} payments. Note that PGA with payments coincides with \Cref{alg:learning}. Across random polymatrix games with different underlying graph topologies (see \Cref{table:last_iterate_exploitability,table:last_iterate_social_welfare}), we observe that allowing internal transfers increases social welfare and accelerates convergence toward equilibrium. 

The payment computation adds little overhead. For each edge $(i,j)$ and each action $a_i\in\cA_i$, the quantity $\min_{\hat a_j\in\cA_j}\cU_{i,j}\rbr{a_i,\hat a_j}$ can be precomputed before the learning procedure begins. After this preprocessing, computing each transfer $\tilde p_{i\to j}^{(t)}$ requires only a table lookup and one subtraction, and incorporating received transfers requires only the corresponding additions. Thus, relative to the payoff aggregation over neighbors already performed by the learning procedure, payments add only a constant number of basic arithmetic operations per edge at each timestep. In our experiments, for instance, when $N=256$ and $|\cA_i|=5$ for all players $i\in[N]$, the running time increases from $145.42$ seconds without payments to $153.33$ seconds with payments, an overhead of about $5.44\%$.

As shown in \Cref{fig:N-64-sparse}, introducing payments leads both PGA and EXP3 to achieve higher social welfare. We also observe a lower exploitability, defined for a strategy-payment pair $(\pi, \bp)$ as
\begin{align}
     \max_{\substack{i\in [N]\\\hat a_i\in\cA_i}} \rbr{\EE_{\ba\sim\pi} \sbr{\cU_i(\hat a_i, \ba_{-i}) - \cU_i(\ba)} - \sum_{\substack{j=1\\j\neq i}}^N p_{j\to i}}.\numberthis[Exploitability]{eq:exploitability}
\end{align}
When payments are not allowed, we set $\bp=\zero$.
Intuitively, because transfers are received only when players adhere to the prescribed strategies, a deviation forfeits incoming payments. This raises the opportunity cost of deviating and shrinks the maximum unilateral gain, thereby reducing exploitability.
More details about the experiments can be found in \Cref{section:experiment-details}.

\begin{figure}
    \centering
    \includegraphics[width=0.8\linewidth]{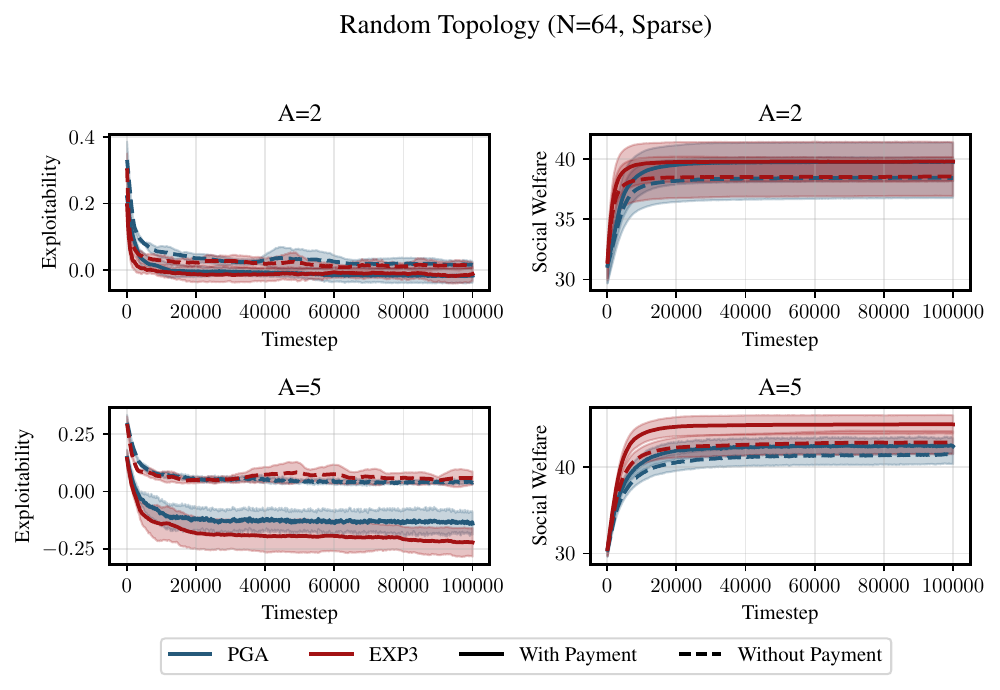}
    \caption{Training curves of different algorithms in a 64-player polymatrix game whose interaction graph follows an Erd\H{o}s--R\'enyi model: each pair of players is connected independently with probability $\frac{3}{64}$. All players share the same action set size $A$.}
    \label{fig:N-64-sparse}
\end{figure}

\section{Conclusion}
\label{section:conclusion}

In this paper, we introduce \emph{Self-Enforcing Transfer Equilibrium} (SETE), an equilibrium notion in which players may commit to peer-to-peer transfers that are paid only if the recipient does not deviate from a prescribed strategy. In polymatrix games, we show in \Cref{lemma:social-welfare} that every stationary point of the social-welfare function induces a \Cref{def:SETE}, given an appropriate payment plan. This gives a mediator-free, budget-balanced way to compute and learn product-form equilibria, and \Cref{theorem:agent-normal-form} shows that such SETE outcomes are Nash equilibria in the agent normal form of the augmented game. We also identify the limitation of this guarantee: without additional enforcement, a player may profit by jointly deviating across the payment and action stages.

\Cref{section:mediated-SETE} resolves this limitation by introducing \emph{Mediated Self-Enforcing Transfer Equilibrium} (M-SETE). When a mediator can make both the payment schedule and the prescribed strategies binding offers, any M-SETE is a Nash equilibrium of the augmented game itself, not merely of its agent normal form. Moreover, M-SETE applies to arbitrary finite games and can support any socially optimal strategy profile while preserving budget balance. Finally, for SETE in polymatrix games, we provide efficient computation and a decentralized learning dynamic (\Cref{alg:learning}) under which players converge to SETE using only realized utilities and received payments.

\section*{Acknowledgement}

We thank Zachary Wojtowicz and Runyu Zhang for valuable discussions. The authors are grateful to the anonymous reviewers and the area chair for their constructive feedback. M.L. was supported by the MathWorks Fellowship. A.O. and G.F. were supported in part by the ONR grant N000142512296. G.F. was additionally supported by CCF-2443068 and an AI2050 Early Career Fellowship.

\bibliographystyle{plainnat}
\bibliography{main}

\newpage
\appendix

\section{Omitted Proofs in \Cref{section:concept}}
\label{section:concept-appendix}

\restate{lemma:variation-to-exploitability}

\begin{proof}
    For any player $i\in [N]$ and any $\hat a_i\in\cA_i$, let $\hat \pi_i\in\Delta^{\cA_i}$ be the pure strategy with $\hat \pi_i(\hat a_i)=1$. Since $SW$ is multilinear in each player's strategy,
    \begin{align*}
        \sum_{j=1}^N \EE_{\ba\sim\pi} \sbr{\cU_j\rbr{\hat a_i, \ba_{-i}}} - SW(\pi)=\inner{\nabla_i SW(\pi)}{\hat \pi_i - \pi_i}.
    \end{align*}
    Because the feasible set is a product of simplexes, the Euclidean projection defining $\pi'$ decomposes across players. In particular, $\pi_i'=\Proj{\Delta^{\cA_i}}{\pi_i+\lambda\nabla_i SW(\pi)}$. In \Cref{lemma:three-point}, take
    \begin{align*}
        \cX=\Delta^{\cA_i},\quad \bx^{(2)} = \hat \pi_i, \quad \bx^{(1)}=\pi_i',\quad \bx^{(0)} = \pi_i,\quad \bu = \nabla_i SW(\pi).
    \end{align*}
    Then
    \begin{align*}
        \lambda\inner{\nabla_i SW(\pi)}{\hat \pi_i - \pi_i'}
        \leq& \frac{1}{2}\rbr{\nbr{\hat \pi_i - \pi_i}^2 - \nbr{\hat \pi_i - \pi_i'}^2 - \nbr{\pi_i' - \pi_i}^2}\\
        =&\inner{\hat \pi_i-\pi_i'}{\pi_i' - \pi_i}\\
        \overset{(i)}{\leq}& \nbr{\hat \pi_i-\pi_i'}\cdot\nbr{\pi_i' - \pi_i}\\
        \overset{(ii)}{\leq}& \sqrt{2}\nbr{\pi_i' - \pi_i}.
    \end{align*}
    $(i)$ is by \holder. $(ii)$ is because $\hat \pi_i, \pi_i'\in\Delta^{\cA_i}$. Therefore,
    \begin{align*}
        \inner{\nabla_i SW(\pi)}{\hat \pi_i - \pi_i}
        \leq& \frac{\sqrt 2}{\lambda}\nbr{\pi_i' - \pi_i}
        + \inner{\nabla_i SW(\pi)}{\pi_i' - \pi_i}\\
        \leq& \frac{\sqrt 2}{\lambda}\nbr{\pi_i' - \pi_i}
        + \nbr{\nabla_i SW(\pi)}\cdot \nbr{\pi_i' - \pi_i}.
    \end{align*}
    Since each coordinate of $\nabla_i SW(\pi)$ lies in $[0,N]$, we have $\nbr{\nabla_i SW(\pi)}\leq N\sqrt A$. The assumption $\lambda\leq 1/(NA)$ implies $\lambda\nbr{\nabla_i SW(\pi)}\leq 1$, and hence
    \begin{align*}
        \inner{\nabla_i SW(\pi)}{\hat \pi_i - \pi_i}
        \leq \frac{\sqrt 2+1}{\lambda}\nbr{\pi_i' - \pi_i}
        \leq \frac{3}{\lambda}\sqrt{\sum_{j=1}^N \nbr{\pi_j' - \pi_j}^2}.
    \end{align*}
    Maximizing over $i$ and $\hat a_i$ gives the claim. \qedhere
\end{proof}

\restate{lemma:sw-optimal-not-NE}

\begin{proof}
As shown in \Cref{table:PD-variant}, consider the unique socially optimal strategy profile $\pi$ that plays $\rbr{C,C}$ with probability one. Let $p_{r\to c}$ denote the transfer from the row player to the column player, and let $p_{c\to r}$ denote the transfer in the opposite direction. 

\begin{table}[h]
\centering
\begin{tabular}{|c|c|c|}
    \hline
     & \textbf{Cooperate (C)} & \textbf{Defect (D)} \\
    \hline
    \textbf{Cooperate (C)} & $(0.6,0.6)$ & $(0.4,0.55)$ \\
    \hline
    \textbf{Defect (D)} & $(1,0)$ & $(0.5,0.5)$ \\
    \hline
    \end{tabular}
    \captionof{table}{Utility matrix for a variation of the Prisoner's Dilemma. Each entry $(a,b)$ gives the payoffs to the row player ($a$) and the column player ($b$).}
    \label{table:PD-variant}
\end{table}

For $\rbr{\pi,\bp}$ to be a Nash equilibrium of the augmented game, the following constraints must hold:
\begin{align}
    p_{r\to c} \leq&\, 0.6-0.4 = 0.2, \label{eq:minimax-1}\\
    p_{c\to r} \geq&\, 1-0.6 = 0.4, \label{eq:minimax-2}\\
    p_{c\to r}-p_{r\to c} \leq&\, 0.6-0.5 = 0.1. \label{eq:minimax-3}
\end{align}
To justify \Cref{eq:minimax-1}, consider the (off-path) deviation in which the row player withdraws their outgoing transfer $p_{r\to c}$. After withdrawing, the row player's payoff in the underlying game is bounded below by $0.4$, regardless of the continuation play: indeed,
\begin{align*}
\min_{a_c\in\cbr{C,D}} u_r\!\rbr{C,a_c} &= 0.4
\quad\text{and}\quad
\min_{a_c\in\cbr{C,D}} u_r\!\rbr{D,a_c} = 0.5,
\end{align*}
so in particular, the row player can guarantee at least $0.4$ no matter what the column player does. Since under $\pi$ the row player obtains $0.6$ before transfers, their willingness to pay cannot exceed $0.6-0.4=0.2$, which yields \Cref{eq:minimax-1}.

Next, \Cref{eq:minimax-2} follows from \Cref{eq:stable}: if the row player deviates from $C$ to $D$ while the column player plays $C$, the row player's payoff increases from $0.6$ to $1$, so deterring this deviation requires incoming transfers of at least $1-0.6$.

Finally, \Cref{eq:minimax-3} is necessary because the column player can always deviate by withdrawing all outgoing payments and playing $D$. Regardless of the row player's behavior, this guarantees the column player an expected payoff of at least $0.5$ (since the payoff from $D$ is $0.55$ against $C$ and $0.5$ against $D$). Therefore, the equilibrium payoff under $\rbr{C,C}$,
$0.6 - p_{c\to r} + p_{r\to c}$, must be at least $0.5$, which is equivalent to \Cref{eq:minimax-3}.

However, no payment plan $\bp$ can satisfy these constraints simultaneously, because
\begin{equation*}
0.2 \overset{\Cref{eq:minimax-1}}{\ge}\, p_{r\to c}
\overset{\Cref{eq:minimax-3}}{\ge}\, p_{c\to r}-0.1
\overset{\Cref{eq:minimax-2}}{\ge}\, 0.3. \qedhere
\end{equation*}
\end{proof}

\section{Omitted Proofs in \Cref{section:polymatrix}}
\label{section:auxiliary-convergence-sw}

\begin{lemma}
\label{lemma:polymatrix-optimal-sw-hard}
    Computing the optimal social welfare in polymatrix games is {\sf NP}-hard.
\end{lemma}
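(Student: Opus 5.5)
We will reduce from \textsc{Max-Cut}, which is {\sf NP}-hard. Given a graph $G=(V,E)$ with $|V|=N$, we build a polymatrix game with one player per vertex. Each player has the action set $\cA_i=\cbr{0,1}$, and the neighbor sets are $\cN(i)=\cbr{j:\cbr{i,j}\in E}$. For every edge $\cbr{i,j}\in E$ we set the pairwise payoff
\begin{align*}
\cU_{i,j}(a_i,a_j)=\cU_{j,i}(a_j,a_i)=\frac{\one[a_i\neq a_j]}{2d},
\end{align*}
where $d$ is the maximum degree. This keeps $\cU_i(\ba)\in[0,1]$, as required by the paper's normalization.

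For a pure profile, the social welfare is then
\begin{align*}
SW(\ba)=\sum_i\sum_{j\in\cN(i)}\frac{\one[a_i\neq a_j]}{2d}=\frac{1}{d}\,\abr{\mathrm{cut}(\ba)},
\end{align*}
because each cut edge is counted once from each endpoint. So the maximum welfare over pure profiles equals $\mathrm{MaxCut}(G)/d$.

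Next we check that mixing cannot beat pure play. For any mixed profile $\pi$, $SW(\pi)$ is multilinear in $(\pi_1,\dots,\pi_N)$, so it is linear in each $\pi_i$ when the others are fixed. We can therefore round one coordinate at a time to a vertex of its simplex without decreasing $SW$. This gives
\begin{align*}
\max_{\pi\in\bigtimes_i\Delta^{\cA_i}}SW(\pi)=\max_{\ba\in\cA}SW(\ba).
\end{align*}
Hence computing the optimal social welfare, even in the mixed sense used in \Cref{corollary:sw-optimal-eq}, yields $\mathrm{MaxCut}(G)$ after multiplying by $d$.

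The reduction is clearly polynomial in size: there are $N$ players, two actions each, and $|E|$ pairwise $2\times2$ tables. The only point that needs care is the interpretation of ``computing the optimal social welfare.'' If it means outputting the exact value, the argument above suffices. If it means outputting a maximizing profile, the rounding step turns any optimal mixed profile into an optimal pure one in polynomial time, and that pure profile is a maximum cut. Either way the problem is {\sf NP}-hard.
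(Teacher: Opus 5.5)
Your reduction is correct. The gadget is the same as the paper's: one player per vertex, a payoff on each edge proportional to the indicator that $a_i\neq a_j$, and mixed profiles handled through multilinearity of welfare (the paper phrases this as a convex combination of pure-profile welfares).

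The difference is the source problem. The paper reduces from $k$-coloring with $k\geq 3$ actions and uses only a threshold test: the optimum equals the a priori known bound $2|E|/n$ if and only if every edge can be made bichromatic. With $k=2$ the paper's construction is literally yours up to scaling, but its threshold question then becomes bipartiteness, which is easy. That is why the paper needs three actions. You use the exact optimal value instead, and the hardness of Max-Cut lets two actions suffice.

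Each route buys something different. Yours shows hardness already for binary-action polymatrix games. Because your reduction preserves the objective exactly up to the known factor $d$, your argument also transfers Max-Cut's constant-factor inapproximability to the optimal welfare; the paper's threshold argument only separates $2|E|/n$ from values below it. The paper's route shows that even deciding whether welfare attains an easily computed upper bound is hard.

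One small fix: your scaling divides by the maximum degree $d$, so treat the edgeless graph separately.
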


\begin{proof}
    The proof is motivated by \citet[Theorem 4.3]{liu2026-MASE}. We reduce from the $k$-coloring problem, where $k \geq 3$ is fixed. Given any graph $G = (V,E)$, construct a polymatrix game with one player for each vertex in $V$. Let $n = \abr{V}$, and let every player have action set $\cbr{1,2,\dots,k}$, where each action represents a color. The neighborhood of player $i$ is the same as the neighborhood of the corresponding vertex in $G$, namely
    \begin{align*}
        \cN(i) = \cbr{j \in [n] \colon (i,j) \in E}.
    \end{align*}
    For every edge $(i,j) \in E$, define the pairwise utility by
    \begin{align*}
        \cU_{i,j}(a_i,a_j) = \frac{\ind\rbr{a_i \neq a_j}}{n}.
    \end{align*}
    For a pure action profile, the utility of player $i$ is at most $\abr{\cN(i)}/n$. Therefore, the social welfare is at most
    \begin{align*}
        \frac{1}{n} \sum_{i=1}^n \abr{\cN(i)}
        =
        \frac{2\abr{E}}{n}.
    \end{align*}
    It is enough to consider pure action profiles, since the social welfare of any mixed profile is a convex combination of the social welfare values of pure profiles. The upper bound above is attained exactly when every edge has endpoints with different colors. Hence, the optimal social welfare equals $2\abr{E}/n$ if and only if $G$ admits a valid $k$ coloring. Thus, an algorithm for computing the optimal social welfare in polymatrix games would decide the $k$-coloring problem. Since the $k$-coloring problem is {\sf NP} complete for every fixed $k \geq 3$, computing the optimal social welfare in polymatrix games is {\sf NP}-hard. \qedhere
\end{proof}

\restate{lemma:smoothness}

\begin{proof}
    For any differentiable function $f\colon\cX\to\RR$, to prove \Cref{eq:smoothness}, we only need to prove that its gradient is $2\beta$-Lipschitz continuous, \emph{i.e.}, for any $\bx,\bx'\in\cX$, we have
    \begin{align*}
        \nbr{\nabla f(\bx') - \nabla f(\bx)}\leq 2\beta\nbr{\bx'-\bx}.
    \end{align*}
    Because
    \begin{align*}
        \abr{f(\bx') - f(\bx) - \inner{\nabla f(\bx)}{\bx' - \bx}} =& \abr{\int_0^1 \inner{\nabla f(\bx + t(\bx'-\bx))}{\bx'-\bx}{\rm d}t -\inner{\nabla f(\bx)}{\bx' - \bx} }\\
        =&\abr{\int_0^1 \inner{\nabla f(\bx + t(\bx'-\bx)) - \nabla f(\bx)}{\bx'-\bx}{\rm d}t }\\
        \overset{(i)}{\leq}& \abr{\int_0^1 \nbr{\nabla f(\bx + t(\bx'-\bx)) - \nabla f(\bx)}\cdot\nbr{\bx'-\bx}{\rm d}t }\\
        \leq& \abr{\int_0^1 2t\beta \nbr{\bx'-\bx}^2{\rm d}t }\\
        =& \beta \nbr{\bx'-\bx}^2.
    \end{align*}
    $(i)$ is by \holder. Therefore, in the following, we will show that
    \begin{align*}
        \nbr{\nabla SW(\pi') - \nabla SW(\pi)}\leq 2NA \nbr{\pi'-\pi}=2\beta\nbr{\pi'-\pi}.
    \end{align*}
    By the polymatrix decomposition, for any player $i\in [N]$ and action $a_i\in\cA_i$,
    \begin{align*}
        \frac{\partial SW(\pi)}{\partial \pi_i(a_i)}
        =&\sum_{j\in\cN(i)}\sum_{a_j\in\cA_j}\cU_{i,j}(a_i,a_j)\pi_j(a_j)
        +\sum_{k:\,i\in\cN(k)}\sum_{a_k\in\cA_k}\cU_{k,i}(a_k,a_i)\pi_k(a_k).
    \end{align*}
    Hence, since each pairwise payoff is bounded in $[0,1]$,
    \begin{align*}
        \abr{\frac{\partial SW(\pi')}{\partial \pi_i'(a_i)} - \frac{\partial SW(\pi)}{\partial \pi_i(a_i)}}
        \leq& \sum_{j\in\cN(i)}\nbr{\pi_j'-\pi_j}_1 + \sum_{k:\,i\in\cN(k)}\nbr{\pi_k'-\pi_k}_1\\
        \leq& 2\sum_{j\in[N]\setminus\cbr{i}}\nbr{\pi_j'-\pi_j}_1.
    \end{align*}
    Therefore, by letting $A\coloneqq\max_{i\in [N]}|\cA_i|$ denote the size of the largest action set, we have
    \begin{align*}
        \nbr{\nabla SW(\pi') - \nabla SW(\pi)}^2
        \leq& 4\sum_{i=1}^N |\cA_i|\rbr{\sum_{j\in[N]\setminus\cbr{i}}\nbr{\pi_j'-\pi_j}_1}^2\\
        \overset{(i)}{\leq}& 4N A\sum_{i=1}^N\sum_{j\in[N]\setminus\cbr{i}}\nbr{\pi_j'-\pi_j}_1^2\\
        \overset{(ii)}{\leq}& 4N A\sum_{i=1}^N\sum_{j\in[N]\setminus\cbr{i}}|\cA_j|\nbr{\pi_j'-\pi_j}^2\\
        \leq& 4N^2A^2\nbr{\pi'-\pi}^2.
    \end{align*}
    $(i)$ and $(ii)$ are both by Cauchy-Schwarz inequality.\qedhere
\end{proof}

\restate{lemma:three-point}

\begin{proof}
    By the definition of projection, we have
    \begin{align*}
        \bx^{(1)}=\argmin_{\bx\in\cX} \frac{1}{2} \nbr{\bx - \rbr{\bx^{(0)} + \lambda \bu}}^2.
    \end{align*}
    Then, by first-order optimality,
    \begin{align*}
        \inner{\rbr{\bx^{(1)} - \bx^{(0)}} - \lambda \bu}{\bx^{(2)}-\bx^{(1)}}\geq 0.
    \end{align*}
    Hence,
    \begin{align*}
        \lambda\inner{\bu}{\bx^{(2)}-\bx^{(1)}}\leq& \inner{\bx^{(1)} - \bx^{(0)}}{\bx^{(2)}-\bx^{(1)}}\\
        =& \frac{1}{2}\rbr{\nbr{\bx^{(2)} - \bx^{(0)}}^2 - \nbr{\bx^{(2)} - \bx^{(1)}}^2 - \nbr{\bx^{(1)} - \bx^{(0)}}^2}. \qedhere
    \end{align*}
\end{proof}

\restate{lemma:sw-to-SETE-gap}

\begin{proof}
    By definition of the payment $\overbar \bp^{(t)}$ in \Cref{eq:update-rule-potential-game}, for any player $i\in [N]$ and any action $\hat a_i\in\cA_i$, we have
    \begin{align*}
        &\EE_{\ba\sim\overbar \pi^{(t)}} \sbr{\cU_i\rbr{\hat a_i, \ba_{-i}}} -\rbr{ \EE_{\ba\sim\overbar \pi^{(t)}} \sbr{\cU_i\rbr{\ba}} + \sum_{\substack{j=1,\\j\neq i}}^N \overbar p_{j\to i}^{(t)} }\\
        =&\EE_{\ba\sim\overbar \pi^{(t)}} \sbr{\cU_i\rbr{\hat a_i, \ba_{-i}}} -\rbr{ \EE_{\ba\sim\overbar \pi^{(t)}} \sbr{\cU_i\rbr{\ba}} + \sum_{\substack{j=1,\\j\neq i}}^N \EE_{\ba\sim\overbar \pi^{(t)}}\sbr{\cU_j\rbr{\ba}} - \min_{\hat a_i^{\,\prime}\in\cA_i} \EE_{\ba\sim\overbar \pi^{(t)}}\sbr{\cU_j\rbr{\hat a_i^{\,\prime}, \ba_{-i}}} }\\
        =&\EE_{\ba\sim\overbar \pi^{(t)}} \sbr{\cU_i\rbr{\hat a_i, \ba_{-i}}} + \sum_{\substack{j=1,\\j\neq i}}^N \min_{\hat a_i^{\,\prime}\in\cA_i} \EE_{\ba\sim\overbar \pi^{(t)}}\sbr{\cU_j\rbr{\hat a_i^{\,\prime}, \ba_{-i}}} - SW(\overbar \pi^{(t)}).
    \end{align*}
    Furthermore,
    \begin{align*}
        &\EE_{\ba\sim\overbar \pi^{(t)}} \sbr{\cU_i\rbr{\hat a_i, \ba_{-i}}} + \sum_{\substack{j=1,\\j\neq i}}^N \min_{\hat a_i^{\,\prime}\in\cA_i} \EE_{\ba\sim\overbar \pi^{(t)}}\sbr{\cU_j\rbr{\hat a_i^{\,\prime}, \ba_{-i}}} - SW(\overbar \pi^{(t)})\\
        \leq&\EE_{\ba\sim\overbar \pi^{(t)}} \sbr{\cU_i\rbr{\hat a_i, \ba_{-i}}} + \sum_{\substack{j=1,\\j\neq i}}^N \EE_{\ba\sim\overbar \pi^{(t)}}\sbr{\cU_j\rbr{\hat a_i, \ba_{-i}}} - SW(\overbar \pi^{(t)})\\
        =&\sum_{j=1}^N \EE_{\ba\sim\overbar \pi^{(t)}} \sbr{\cU_j\rbr{\hat a_i, \ba_{-i}}} - SW(\overbar \pi^{(t)})\\
        \leq& \max_{\hat a_i^{\,\prime}\in\cA_i}~~ \sum_{j=1}^N \EE_{\ba\sim\overbar \pi^{(t)}} \sbr{\cU_j\rbr{\hat a_i^{\,\prime}, \ba_{-i}}} - SW(\overbar \pi^{(t)})\\
        \leq& \max_{\substack{k\in [N]\\\hat a_k\in\cA_k}}~~ \sum_{j=1}^N \EE_{\ba\sim\overbar \pi^{(t)}} \sbr{\cU_j\rbr{\hat a_k, \ba_{-k}}} - SW(\overbar \pi^{(t)}).
    \end{align*}
    Because $i$ and $\hat a_i$ can be chosen arbitrarily, the proof is concluded. \qedhere
\end{proof}

\begin{lemma}
\label{lemma:sw-grad-u}
    For any strategy profile $\pi\in \bigtimes_{i=1}^N \Delta^{\cA_i}$, any player $i\in [N]$, and action $a_i\in\cA_i$, we have
    \begin{align*}
        \frac{\partial SW(\pi)}{\partial \pi_i(a_i)} = \sum_{j=1}^N \EE_{\ba'\sim\pi}\sbr{\cU_j(a_i, \ba_{-i}')}.
    \end{align*}
\end{lemma}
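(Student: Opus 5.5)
The plan is to use multilinearity of $SW$ in the players' strategies. Since $SW(\pi)=\sum_{j=1}^N \EE_{\ba\sim\pi}\sbr{\cU_j(\ba)}$ and each expectation has the form $\sum_{\ba\in\cA}\cU_j(\ba)\prod_{k=1}^N\pi_k(a_k)$, the function $SW$ is a polynomial in the coordinates $\{\pi_k(a_k)\}$. It is affine, indeed linear, in the block $\pi_i$ once the other blocks are held fixed. We extend $SW$ to this polynomial on the ambient space $\bigtimes_i\RR^{\cA_i}$, so the partial derivative with respect to $\pi_i(a_i)$ is well defined. This is the convention implicitly used by the paper's projected gradient steps.

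The first step is to fix $j$ and regroup the sum over $\ba$ according to player $i$'s action:
\begin{align*}
\EE_{\ba\sim\pi}\sbr{\cU_j(\ba)}=\sum_{a_i'\in\cA_i}\pi_i(a_i')\sum_{\ba_{-i}\in\cA_{-i}}\cU_j(a_i',\ba_{-i})\prod_{k\neq i}\pi_k(a_k).
\end{align*}
The inner sum does not depend on $\pi_i$. Differentiating with respect to $\pi_i(a_i)$ therefore keeps only the $a_i'=a_i$ term, which gives
\begin{align*}
\sum_{\ba_{-i}}\cU_j(a_i,\ba_{-i})\prod_{k\neq i}\pi_k(a_k)=\EE_{\ba'\sim\pi}\sbr{\cU_j(a_i,\ba'_{-i})}.
\end{align*}
This holds because the distribution of $\ba'_{-i}$ under $\pi$ is the product $\prod_{k\neq i}\pi_k$. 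Summing over $j$ by linearity of differentiation then yields the claim.

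There is no real obstacle. The only delicate point is interpretive: the derivative on the simplex is taken as that of the multilinear extension, rather than a derivative constrained to the tangent space. I would state this convention explicitly, noting that it agrees with the polymatrix formula for $\partial SW/\partial\pi_i(a_i)$ used in the proof of \Cref{lemma:smoothness}.
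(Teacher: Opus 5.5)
Your proposal is correct and follows essentially the same route as the paper: expand $SW$ as the multilinear polynomial $\sum_{\ba'}\sum_j \cU_j(\ba')\prod_k \pi_k(a_k')$, differentiate so that only terms with player $i$'s action equal to $a_i$ survive, and recognize the remainder as an expectation. The paper does that last identification by multiplying by $\sum_{a_i'\in\cA_i}\pi_i(a_i')=1$, which is exactly your remark that $\ba'_{-i}$ is distributed as $\prod_{k\neq i}\pi_k$ under $\pi$.
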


\begin{proof}
    For any player $i\in [N]$ and action $a_i\in\cA_i$, we have
\begin{align*}
    \frac{\partial SW(\pi)}{\partial \pi_i(a_i)} =& \frac{\partial \rbr{\sum_{\ba'\in\cA}\sum_{j=1}^N \cU_j(\ba')\prod_{k=1}^N \pi_k(a_k')}}{\partial \pi_i(a_i)}\\
    =&\sum_{\ba_{-i}'\in\cA_{-i}}\sum_{j=1}^N \cU_j(a_i, \ba_{-i}')\prod_{k\in[N]\setminus\cbr{i}} \pi_k(a_k')\\
    \overset{(i)}{=}&\rbr{\sum_{a_i'\in\cA_i} \pi_i(a_i')}\sum_{\ba_{-i}'\in\cA_{-i}}\sum_{j=1}^N \cU_j(a_i, \ba_{-i}')\prod_{k\in[N]\setminus\cbr{i}} \pi_k(a_k')\\
    =&\sum_{\ba'\in\cA}\sum_{j=1}^N \cU_j(a_i, \ba_{-i}')\prod_{k=1}^N \pi_k(a_k')\\
    =&\sum_{j=1}^N \EE_{\ba'\sim\pi}\sbr{\cU_j(a_i, \ba_{-i}')}.
\end{align*}
$(i)$ uses the fact that $\sum_{a_i'\in\cA_i} \pi_i(a_i')=1$ since $\pi_i\in\Delta^{\cA_i}$.
\end{proof}

\section{Proof of Auxiliary Lemmas for \Cref{theorem:convergence-sw}}
\label{section:bandit-auxiliary}

\restate{lemma:unbiased-estimator}

\begin{proof}
    For any player $i\in [N]$, any timestep $t\in [T]$, and any action $a_i\in \cA_i$, we have
\begin{align*}
    &\EE\sbr{\tilde u_i^{(t)}(a_i)\given \pi^{(t)}}\\
    =&\sum_{\ba_{-i}\in\cA_{-i}} \rbr{\frac{1}{\pi_i^{e, (t)}(a_i)}\sum_{j\in\cN(i)} \rbr{\cU_{i, j}\rbr{a_i, a_j} + \rbr{\cU_{j, i}\rbr{a_j, a_i}  - \min_{\hat a_i\in\cA_i} \cU_{j, i}\rbr{a_j, \hat a_i}}}} \prod_{k=1}^N \pi_k^{e, (t)}(a_k)\\
    =&\sum_{\ba_{-i}\in\cA_{-i}} \sum_{j\in\cN(i)}\rbr{\cU_{i, j}\rbr{a_i, a_j} + \cU_{j, i}\rbr{a_j, a_i} - \min_{\hat a_i\in\cA_i} \cU_{j, i}\rbr{a_j, \hat a_i}} \prod_{k\in[N]\setminus\cbr{i}} \pi_k^{e, (t)}(a_k)\\
    =&\sum_{j\in\cN(i)} \sum_{a_j\in\cA_j} \rbr{\cU_{i, j}\rbr{a_i, a_j} + \cU_{j, i}\rbr{a_j, a_i} - \min_{\hat a_i\in\cA_i} \cU_{j, i}\rbr{a_j, \hat a_i}} \pi_j^{e, (t)}(a_j).
\end{align*}

For any player $i\in [N]$ and any action $a_i\in\cA_i$, we have
\begin{align*}
    \frac{\partial SW(\pi^{e, (t)})}{\partial \pi_i^{e, (t)}(a_i)} =& \frac{\partial \rbr{\sum_{j=1}^N \sum_{k\in \cN(j)} \sum_{\substack{a_j\in\cA_j,\\a_k\in\cA_k}} \cU_{j, k}(a_j, a_k)\pi_j^{e, (t)}(a_j)\pi_k^{e, (t)}(a_k)}}{\partial \pi_i^{e, (t)}(a_i)}\\
    \overset{(i)}{=}&\frac{\partial \rbr{\sum_{j\in\cN(i)} \sum_{a_j\in\cA_j} \rbr{\cU_{i, j}(a_i, a_j) + \cU_{j, i}(a_j, a_i)}\pi_j^{e, (t)}(a_j)\pi_i^{e, (t)}(a_i)}}{\partial \pi_i^{e, (t)}(a_i)}\\
    =&\sum_{j\in\cN(i)} \sum_{a_j\in\cA_j} \rbr{\cU_{i, j}(a_i, a_j) + \cU_{j, i}(a_j, a_i)}\pi_j^{e, (t)}(a_j).
\end{align*}
$(i)$ follows because only the utilities of player $i$ and $i$'s neighbors $j \in \cN(i)$ depend on $\pi_i^{e, (t)}(a_i)$.
Therefore,
\begin{equation*}
    \EE\sbr{\tilde u_i^{(t)}(a_i) \given \pi^{(t)}} = \frac{\partial SW(\pi^{e, (t)})}{\partial \pi_i^{e, (t)}(a_i)} - \sum_{j\in\cN(i)} \sum_{a_j\in\cA_j} \rbr{\min_{\hat a_i\in\cA_i} \cU_{j, i}\rbr{a_j, \hat a_i}} \pi_j^{e, (t)}(a_j). \qedhere
\end{equation*}
\end{proof}

\restate{lemma:bounded-variance}

\begin{proof}
    By definition, for any player $i\in [N]$,
\begin{align*}
    \EE\sbr{\nbr{\tilde\bu_i^{(t)}}^2\given \pi^{(t)}}
    =& \sum_{a_i\in\cA_i}\pi_i^{e,(t)}(a_i)\,
    \EE\sbr{\rbr{\frac{\sum_{j\in\cN(i)} \rbr{\cU_{i, j}\rbr{a_i, a_j^{(t)}} + \tilde p_{j\to i}^{(t)}}}{\pi_i^{e,(t)}(a_i)}}^2\given a_i^{(t)}=a_i,\pi^{(t)}}.
\end{align*}
Since $\pi_i^{e, (t)}(a_i)\geq \frac{\gamma}{|\cA_i|}$ for any $a_i\in\cA_i$, $\sum_{j\in\cN(i)} \cU_{i, j} \rbr{a_i, a_j^{(t)}} = \cU_i(a_i,\ba_{-i}^{(t)})\in [0, 1]$, and
\begin{align*}
    \tilde p_{j\to i}^{(t)} =~ \cU_{j, i}\rbr{a_j^{(t)}, a_i}  - \min_{\hat a_i\in\cA_i} \cU_{j, i}\rbr{a_j^{(t)}, \hat a_i} \in [0, 1],
\end{align*}
we have
\begin{align*}
    \EE\sbr{\nbr{\tilde\bu^{(t)}}^2} = \sum_{i=1}^N \EE\sbr{\nbr{\tilde\bu_i^{(t)}}^2}\leq& \sum_{i=1}^N \sum_{a_i\in\cA_i} \frac{\rbr{|\cN(i)|+1}^2}{\pi_i^{e,(t)}(a_i)}\leq \sum_{i=1}^N \frac{|\cA_i|^2}{\gamma} \rbr{|\cN(i)| + 1}^2\overset{(i)}{\leq} \frac{A^2}{\gamma} N^3,
\end{align*}
where $A=\max_{i\in [N]}|\cA_i|$. $(i)$ is because $|\cN(i)|+1\leq N$ for all $i\in [N]$. \qedhere
\end{proof}

\restate{lemma:bandit-distance-NE}

\begin{proof}
    By \Cref{lemma:smoothness}, $SW(\mu)-2\beta\nbr{\mu-\pi}^2$ is strongly concave with respect to $\mu$, so the maximizer is unique. By first-order optimality, $\mu$ maximizes $SW(\mu)-2\beta\nbr{\mu-\pi}^2$ if and only if: for any $\mu'\in\bigtimes_{i=1}^N \Delta^{\cA_i}$,
    \begin{align}
        \inner{\nabla SW(\mu) - 4\beta (\mu-\pi)}{\mu' - \mu}\leq 0.\label{eq:first-order-condition-1}
    \end{align}
    Moreover, let
    \begin{align*}
        \mu'' = \argmin_{\mu'\in \bigtimes_{i=1}^N \Delta^{\cA_i}} \nbr{\mu' - \rbr{\pi + \frac{1}{4\beta}\nabla SW(\mu)}}^2.
    \end{align*}
    $\mu''$ is a minimizer if and only if: for any $\mu'\in\bigtimes_{i=1}^N \Delta^{\cA_i}$,
    \begin{align}
         2\inner{\mu'' - \rbr{\pi + \frac{1}{4\beta}\nabla SW(\mu)}}{\mu' - \mu''}\geq 0.\label{eq:first-order-condition-2}
    \end{align}
    Hence, since \Cref{eq:first-order-condition-1} and \Cref{eq:first-order-condition-2} are equivalent, we have
    \begin{align*}
        \mu = \mu'' = \Proj{\bigtimes_{i=1}^N \Delta^{\cA_i}}{\pi + \frac{1}{4\beta}\nabla SW(\mu)}.
    \end{align*}
    Since the objective is linear in a unilateral mixed deviation $\hat\pi_i$, maximizing over $\hat\pi_i\in\Delta^{\cA_i}$ is equivalent to maximizing over pure actions. Thus, by \Cref{lemma:variation-to-exploitability}, we have
    \begin{align*}
        &\max_{\substack{i\in [N]\\\hat\pi_i\in\Delta^{\cA_i}}}\inner{\nabla_i SW(\pi)}{\hat \pi_i - \pi_i}\\
        \leq& 12\beta\nbr{\Proj{\bigtimes_{i=1}^N \Delta^{\cA_i}}{\pi + \frac{1}{4\beta} \nabla SW(\pi)} - \pi}\\
        \leq& 12\beta\nbr{\Proj{\bigtimes_{i=1}^N \Delta^{\cA_i}}{\pi + \frac{1}{4\beta} \nabla SW(\pi)} - \mu} + 12\beta \nbr{\mu - \pi}\\
        =& 12\beta\nbr{\Proj{\bigtimes_{i=1}^N \Delta^{\cA_i}}{\pi + \frac{1}{4\beta} \nabla SW(\pi)} - \Proj{\bigtimes_{i=1}^N \Delta^{\cA_i}}{\pi + \frac{1}{4\beta}\nabla SW(\mu)}} + 12\beta \nbr{\mu - \pi}\\
        \overset{(i)}{\leq}& 3\nbr{\nabla SW(\pi) - \nabla SW(\mu)} + 12\beta \nbr{\mu - \pi}\\
        \overset{(ii)}{\leq}& 18\beta\nbr{\mu - \pi}.
    \end{align*}
    $(i)$ uses the property of projection. $(ii)$ uses \Cref{lemma:smoothness}. \qedhere
\end{proof}

\section{Experiment Details}
\label{section:experiment-details}

This section provides the additional experimental details omitted from \Cref{section:experiment}.

\subsection{Random Polymatrix Game}

We consider random polymatrix games under four underlying graph topologies:
\begin{enumerate}
    \item \textbf{Erd\H{o}s--R\'enyi \& Dense}: The interaction graph is an Erd\H{o}s--R\'enyi graph. For each unordered pair of distinct players $i\neq j$, an edge between $i$ and $j$ is present independently with probability $0.3$.
    \item \textbf{Erd\H{o}s--R\'enyi \& Sparse}: The interaction graph is an Erd\H{o}s--R\'enyi graph. For each unordered pair of distinct players $i\neq j$, an edge is present independently with probability $\frac{3}{N}$, where $N$ is the number of players.
    \item \textbf{Local Connectivity \& Dense}: Players are partitioned into $3$ clusters. Players within the same cluster form a clique. For players in different clusters, cross-cluster edges are added independently with probability $\frac{3}{N}$. This models settings with strong within-group interactions and relatively infrequent interactions across groups.
    \item \textbf{Local Connectivity \& Sparse}: Players are sampled uniformly at random on a two-dimensional plane, and undirected edges are added between each player and its $3$ nearest neighbors. This models a transportation-like network in which physical distance restricts interactions.
\end{enumerate}

In the experiments, all players have the same action set size, that is, $\abr{\cA_i}=A$ for every $i\in[N]$. For each player $i\in[N]$, we first sample a player dependent mean matrix $M_i\in[0,1]^{A\times A}$ uniformly. Then, for each neighbor $j\in\cN(i)$ and each action pair $(a_i,a_j)\in\cA_i\times\cA_j$, we sample the corresponding entry of the pairwise utility matrix independently as
\begin{align*}
    \cU_{i,j}(a_i,a_j)\sim
    \begin{cases}
        {\rm Uniform}\rbr{0, 2M_i(a_i,a_j)} & M_i(a_i,a_j) \leq 0.5,\\
        {\rm Uniform}\rbr{2M_i(a_i,a_j)-1, 1} & M_i(a_i,a_j) > 0.5.
    \end{cases}
\end{align*}
This construction guarantees that
\begin{align*}
    \EE\sbr{\cU_{i,j}(a_i,a_j)\mid M_i}=M_i(a_i,a_j),
\end{align*}
while keeping every entry in $[0,1]$. After all pairwise utility matrices are sampled, we rescale them so that the total utility of each player is bounded by $1$, namely
\begin{align*}
    \max_{\ba\in\cA}\sum_{j\in\cN(i)}\cU_{i,j}(a_i,a_j)\leq 1
\end{align*}
for every $i\in[N]$.

The purpose of sampling the mean matrix $M_i$ is to make the aggregate utility values associated with different action pairs distinguishable. If all entries of each $\cU_{i,j}$ were instead sampled independently with a common mean, then, when $\abr{\cN(i)}$ is large, the aggregate utilities $\sum_{j\in\cN(i)}\cU_{i,j}$ would concentrate around similar values. In that regime, different strategies yield nearly identical utilities, making the empirical performance differences among algorithms less informative.

\subsection{Algorithm Details}

The only difference between running an algorithm \emph{with} versus \emph{without} payments is the construction of the estimated utility signal. Specifically, upon playing action $a_i^{(t)}$, player $i$ forms
\begin{align}
    \tilde u_i^{(t)}(a_i^{(t)})\gets\begin{cases}
                    \frac{1}{\pi_i^{e,(t)}(a_i^{(t)})}\sum_{j\in\cN(i)} \rbr{\cU_{i, j}\rbr{a_i^{(t)}, a_j^{(t)}} + \tilde p_{j\to i}^{(t)}} & \text{With Payment},\\[4pt]\\
                    \frac{1}{\pi_i^{e,(t)}(a_i^{(t)})}\sum_{j\in\cN(i)} \cU_{i, j}\rbr{a_i^{(t)}, a_j^{(t)}} & \text{Without Payment}.
                \end{cases}
\end{align}

For PGA, the variant with payments coincides with \Cref{alg:learning}. The variant without payments is obtained by removing the payment terms from \Cref{alg:learning}, which recovers the classical PGA dynamics \citep{hazan2016introduction}. We set the learning rate and exploration parameter to $\eta=T^{-2/3}$ and $\gamma=T^{-1/3}$, respectively.

For EXP3, we use the same learning rate in both variants, namely $\eta=T^{-1/2}$.

\subsection{More Experimental Results}

In \Cref{table:last_iterate_exploitability,table:last_iterate_social_welfare}, we report, respectively, the last-iterate exploitability and social welfare of each algorithm on polymatrix games with varying numbers of players, graph topologies, and action set sizes.

\begin{longtable}{lllll}
\caption{Last-iterate exploitability.}\label{table:last_iterate_exploitability}\\
\toprule
Topology & PGA (pay) & PGA (no pay) & EXP3 (pay) & EXP3 (no pay)\\
\midrule
\endfirsthead
\toprule
Topology & PGA (pay) & PGA (no pay) & EXP3 (pay) & EXP3 (no pay)\\
\midrule
\endhead
\bottomrule
\endfoot
\multicolumn{5}{l}{\textbf{N=16, dense, $A$=2}}\\
random & -0.0173 $\pm$ 0.0234 & 0.0071 $\pm$ 0.0051 & -0.0136 $\pm$ 0.0235 & 0.0080 $\pm$ 0.0141\\
local & -0.0441 $\pm$ 0.0369 & 0.0043 $\pm$ 0.0017 & -0.0412 $\pm$ 0.0342 & 0.0019 $\pm$ 0.0014\\
\addlinespace
\multicolumn{5}{l}{\textbf{N=16, dense, $A$=5}}\\
random & -0.1023 $\pm$ 0.0501 & 0.0373 $\pm$ 0.0203 & -0.1048 $\pm$ 0.0605 & 0.0356 $\pm$ 0.0317\\
local & -0.1389 $\pm$ 0.0409 & 0.0217 $\pm$ 0.0047 & -0.1546 $\pm$ 0.0524 & 0.0124 $\pm$ 0.0105\\
\addlinespace
\multicolumn{5}{l}{\textbf{N=64, dense, $A$=2}}\\
random & -0.0556 $\pm$ 0.0160 & 0.0078 $\pm$ 0.0032 & -0.0532 $\pm$ 0.0149 & 0.0061 $\pm$ 0.0048\\
local & -0.0591 $\pm$ 0.0155 & 0.0069 $\pm$ 0.0016 & -0.0562 $\pm$ 0.0171 & 0.0045 $\pm$ 0.0031\\
\addlinespace
\multicolumn{5}{l}{\textbf{N=64, dense, $A$=5}}\\
random & -0.1083 $\pm$ 0.0287 & 0.0220 $\pm$ 0.0036 & -0.1201 $\pm$ 0.0370 & 0.0173 $\pm$ 0.0124\\
local & -0.1500 $\pm$ 0.0239 & 0.0228 $\pm$ 0.0033 & -0.1726 $\pm$ 0.0388 & 0.0115 $\pm$ 0.0039\\
\addlinespace
\multicolumn{5}{l}{\textbf{N=256, dense, $A$=2}}\\
random & -0.0814 $\pm$ 0.0088 & 0.0065 $\pm$ 0.0008 & -0.0731 $\pm$ 0.0129 & 0.0040 $\pm$ 0.0009\\
local & -0.0967 $\pm$ 0.0175 & 0.0090 $\pm$ 0.0033 & -0.0908 $\pm$ 0.0239 & 0.0045 $\pm$ 0.0013\\
\addlinespace
\multicolumn{5}{l}{\textbf{N=256, dense, $A$=5}}\\
random & -0.1683 $\pm$ 0.0078 & 0.0237 $\pm$ 0.0024 & -0.1873 $\pm$ 0.0103 & 0.0089 $\pm$ 0.0014\\
local & -0.1879 $\pm$ 0.0158 & 0.0280 $\pm$ 0.0030 & -0.2197 $\pm$ 0.0280 & 0.0084 $\pm$ 0.0021\\
\addlinespace
\multicolumn{5}{l}{\textbf{N=16, sparse, $A$=2}}\\
random & -0.0015 $\pm$ 0.0092 & 0.0099 $\pm$ 0.0117 & 0.0005 $\pm$ 0.0089 & 0.0063 $\pm$ 0.0074\\
local & -0.0186 $\pm$ 0.0262 & 0.0092 $\pm$ 0.0095 & -0.0224 $\pm$ 0.0241 & 0.0061 $\pm$ 0.0062\\
\addlinespace
\multicolumn{5}{l}{\textbf{N=16, sparse, $A$=5}}\\
random & -0.0340 $\pm$ 0.0467 & 0.0355 $\pm$ 0.0311 & -0.0281 $\pm$ 0.0478 & 0.0179 $\pm$ 0.0140\\
local & -0.1187 $\pm$ 0.0400 & 0.0283 $\pm$ 0.0104 & -0.1240 $\pm$ 0.0432 & 0.0319 $\pm$ 0.0229\\
\addlinespace
\multicolumn{5}{l}{\textbf{N=64, sparse, $A$=2}}\\
random & -0.0165 $\pm$ 0.0211 & 0.0163 $\pm$ 0.0097 & -0.0112 $\pm$ 0.0218 & 0.0131 $\pm$ 0.0094\\
local & 0.0028 $\pm$ 0.0030 & 0.0142 $\pm$ 0.0085 & 0.0041 $\pm$ 0.0033 & 0.0435 $\pm$ 0.0254\\
\addlinespace
\multicolumn{5}{l}{\textbf{N=64, sparse, $A$=5}}\\
random & -0.1401 $\pm$ 0.0486 & 0.0396 $\pm$ 0.0061 & -0.2221 $\pm$ 0.0625 & 0.0585 $\pm$ 0.0291\\
local & -0.0976 $\pm$ 0.0448 & 0.0489 $\pm$ 0.0162 & -0.0939 $\pm$ 0.0495 & 0.0678 $\pm$ 0.0264\\
\addlinespace
\multicolumn{5}{l}{\textbf{N=256, sparse, $A$=2}}\\
random & -0.0019 $\pm$ 0.0072 & 0.0323 $\pm$ 0.0236 & -0.0012 $\pm$ 0.0104 & 0.0396 $\pm$ 0.0405\\
local & 0.0056 $\pm$ 0.0066 & 0.0308 $\pm$ 0.0148 & 0.0093 $\pm$ 0.0045 & 0.0354 $\pm$ 0.0184\\
\addlinespace
\multicolumn{5}{l}{\textbf{N=256, sparse, $A$=5}}\\
random & -0.1274 $\pm$ 0.0365 & 0.0576 $\pm$ 0.0082 & -0.1790 $\pm$ 0.0590 & 0.0872 $\pm$ 0.0229\\
local & -0.0602 $\pm$ 0.0252 & 0.0534 $\pm$ 0.0155 & -0.0601 $\pm$ 0.0277 & 0.0933 $\pm$ 0.0351\\
\addlinespace
\end{longtable}

\begin{longtable}{lllll}
\caption{Last-iterate social welfare.}\label{table:last_iterate_social_welfare}\\
\toprule
Topology & PGA (pay) & PGA (no pay) & EXP3 (pay) & EXP3 (no pay)\\
\midrule
\endfirsthead
\toprule
Topology & PGA (pay) & PGA (no pay) & EXP3 (pay) & EXP3 (no pay)\\
\midrule
\endhead
\bottomrule
\endfoot
\multicolumn{5}{l}{\textbf{N=16, dense, $A$=2}}\\
random & 7.4327 $\pm$ 0.9532 & 7.2237 $\pm$ 0.9959 & 7.4292 $\pm$ 0.9594 & 7.2250 $\pm$ 1.0075\\
local & 8.1350 $\pm$ 0.8257 & 7.9900 $\pm$ 0.7647 & 8.1516 $\pm$ 0.8416 & 8.0152 $\pm$ 0.7813\\
\addlinespace
\multicolumn{5}{l}{\textbf{N=16, dense, $A$=5}}\\
random & 7.2500 $\pm$ 0.6230 & 6.8536 $\pm$ 0.4576 & 7.5083 $\pm$ 0.6361 & 7.0777 $\pm$ 0.5481\\
local & 8.0137 $\pm$ 0.5261 & 7.6063 $\pm$ 0.4808 & 8.3355 $\pm$ 0.5766 & 7.7962 $\pm$ 0.5070\\
\addlinespace
\multicolumn{5}{l}{\textbf{N=64, dense, $A$=2}}\\
random & 30.6068 $\pm$ 1.7884 & 30.2254 $\pm$ 1.8112 & 30.5768 $\pm$ 1.8269 & 30.3024 $\pm$ 1.8422\\
local & 34.4555 $\pm$ 2.3522 & 34.0041 $\pm$ 2.2973 & 34.4490 $\pm$ 2.3705 & 34.0307 $\pm$ 2.2954\\
\addlinespace
\multicolumn{5}{l}{\textbf{N=64, dense, $A$=5}}\\
random & 29.7438 $\pm$ 1.4425 & 29.5636 $\pm$ 1.5950 & 30.7895 $\pm$ 1.5686 & 30.0881 $\pm$ 1.7159\\
local & 34.3574 $\pm$ 1.3974 & 33.8925 $\pm$ 1.4584 & 35.9810 $\pm$ 1.6505 & 34.5690 $\pm$ 1.4825\\
\addlinespace
\multicolumn{5}{l}{\textbf{N=256, dense, $A$=2}}\\
random & 131.8977 $\pm$ 5.6838 & 131.5647 $\pm$ 5.6961 & 131.7268 $\pm$ 5.7418 & 131.6868 $\pm$ 5.7027\\
local & 146.7666 $\pm$ 8.2153 & 146.3067 $\pm$ 8.1567 & 146.8339 $\pm$ 8.2743 & 146.5392 $\pm$ 8.2072\\
\addlinespace
\multicolumn{5}{l}{\textbf{N=256, dense, $A$=5}}\\
random & 131.0364 $\pm$ 4.0811 & 132.5420 $\pm$ 4.4199 & 136.0641 $\pm$ 4.5217 & 134.7433 $\pm$ 4.6322\\
local & 147.7079 $\pm$ 8.4840 & 148.8172 $\pm$ 8.4700 & 154.9929 $\pm$ 9.5057 & 151.5419 $\pm$ 8.7438\\
\addlinespace
\multicolumn{5}{l}{\textbf{N=16, sparse, $A$=2}}\\
random & 6.2500 $\pm$ 0.5926 & 5.9695 $\pm$ 0.7438 & 6.2349 $\pm$ 0.5781 & 5.9979 $\pm$ 0.6978\\
local & 7.3652 $\pm$ 0.6958 & 7.0525 $\pm$ 0.6242 & 7.3459 $\pm$ 0.7019 & 7.0630 $\pm$ 0.5944\\
\addlinespace
\multicolumn{5}{l}{\textbf{N=16, sparse, $A$=5}}\\
random & 6.5777 $\pm$ 0.7425 & 6.1189 $\pm$ 0.7097 & 6.7667 $\pm$ 0.8417 & 6.2602 $\pm$ 0.7045\\
local & 7.6429 $\pm$ 0.6811 & 7.1963 $\pm$ 0.5757 & 7.9624 $\pm$ 0.7564 & 7.3153 $\pm$ 0.6223\\
\addlinespace
\multicolumn{5}{l}{\textbf{N=64, sparse, $A$=2}}\\
random & 39.7656 $\pm$ 1.5978 & 38.4418 $\pm$ 1.6634 & 39.8013 $\pm$ 1.6315 & 38.5579 $\pm$ 1.6117\\
local & 29.4037 $\pm$ 3.6337 & 28.0440 $\pm$ 3.2326 & 29.3561 $\pm$ 3.6566 & 28.0099 $\pm$ 3.3070\\
\addlinespace
\multicolumn{5}{l}{\textbf{N=64, sparse, $A$=5}}\\
random & 42.4158 $\pm$ 0.9151 & 41.4430 $\pm$ 1.0675 & 44.9540 $\pm$ 1.0624 & 42.8415 $\pm$ 1.3162\\
local & 29.1373 $\pm$ 2.8958 & 27.7146 $\pm$ 2.6391 & 29.9550 $\pm$ 3.0924 & 28.1664 $\pm$ 2.6459\\
\addlinespace
\multicolumn{5}{l}{\textbf{N=256, sparse, $A$=2}}\\
random & 163.5553 $\pm$ 2.9985 & 159.0313 $\pm$ 3.6277 & 163.7134 $\pm$ 3.0079 & 159.4412 $\pm$ 3.6833\\
local & 101.5544 $\pm$ 6.6979 & 97.5194 $\pm$ 6.4755 & 101.2302 $\pm$ 6.7571 & 97.5574 $\pm$ 6.4629\\
\addlinespace
\multicolumn{5}{l}{\textbf{N=256, sparse, $A$=5}}\\
random & 175.7398 $\pm$ 1.1899 & 172.1823 $\pm$ 2.3531 & 186.5906 $\pm$ 1.4152 & 177.2124 $\pm$ 2.3602\\
local & 107.2058 $\pm$ 6.3705 & 101.4155 $\pm$ 6.2523 & 109.8938 $\pm$ 7.0466 & 102.8164 $\pm$ 6.9903\\
\addlinespace
\end{longtable}

\end{document}